\tikzset{>={Latex[length=6pt,width=5pt]}}
\newlength\tickoffset
\pgfplotsset{compat=1.16}
\definecolor{myblue}{RGB}{34, 130, 226}
\definecolor{myred}{RGB}{220, 44, 61}
\definecolor{myyellow}{RGB}{230, 161, 0}
\definecolor{mygreen}{RGB}{0, 173, 107}
\definecolor{mygray}{RGB}{184, 184, 184}
\definecolor{mypurple}{RGB}{71, 0, 99}
\titleformat{\paragraph}[runin]
{\bfseries}{\theparagraph}{1em}{}
\titlespacing{\paragraph}{0pt}{0pt}{*1.5}
\providecommand{\doi}[1]{}
\providecommand{\eprint}[2][]{}
\renewcommand{\doi}[1]{\href{http://dx.doi.org/#1}{\nolinkurl{doi:#1}}}
\renewcommand{\eprint}[2][arXiv]{#1:#2}
\setlist{nosep}
\xpatchcmd{\proof}{\topsep6\p@\@plus6\p@\relax}{}{}{}
    \renewcommand\onecolumngrid{%
        \do@columngrid{one}{\@ne}%
        \def\set@footnotewidth{\onecolumngrid}%
        \def\footnoterule{\kern-6pt\hrule width 1.5in\kern6pt}%
    }
    \renewcommand\twocolumngrid{%
        \def\footnoterule{%
        \dimen@\skip\footins\divide\dimen@\thr@@
        \kern-\dimen@\hrule width.5in\kern\dimen@}
        \do@columngrid{mlt}{\tw@}
    }%
\newtcbox{\inlinebox}[1][mygray]{on line,
arc=3pt,colback=#1!3,colframe=#1,
before upper={\rule[-3pt]{0pt}{10pt}},boxrule=1pt,
boxsep=0pt,left=2pt,right=2pt,top=1pt,bottom=.5pt}
\newtcolorbox{mybox}{%
    enhanced,breakable,
    frame hidden,
    colframe=mypurple,
    colback=mypurple!3,
    coltitle=black,
    fonttitle=\bfseries,
    colbacktitle=mypurple!3,
    borderline={0.2mm}{0mm}{mypurple},
    top=1mm,
    bottom=1mm,
    left=1.5mm,
    right=1.5mm,
    before skip=1.5ex,
    after skip=1.5ex
}
\newif\ifproofread
\def\bbR{\mathbb R}
\def\bbC{\mathbb C}
\def\bbZ{\mathbb Z}
\newcommand\bb[1]{\boldsymbol{#1}}
\def\id{\mathrm{id}}
\def\sfid{\mathsf{id}}
\def\tr{\mathsf{tr}}
\def\i{\mathrm i}
\def\ex{\mathrm e}
\def\st{\,:\,}
\def\iso{\cong}
\def\der{\mathrm d}
\newcommand{\group}[1]{\mathrm{#1}}
\def\SU{\group{SU}}
\def\gpU{\group{U}}
\newcommand{\A}{\mathcal A}
\renewcommand{\H}{\mathcal H}
\newcommand{\J}{\mathcal J}
\renewcommand{\O}{\mathcal O}
\renewcommand{\P}{\mathcal P}
\newcommand{\Z}{\mathcal Z}
\newcommand{\ket}[1]{\left|#1\right\rangle}
\newcommand{\bra}[1]{\left\langle#1\right|}
\newcommand{\braket}[2]{\left\langle#1\middle|#2\right\rangle}
\newcommand{\ketbra}[2]{\left|#1\right\rangle\!\left\langle#2\right|}
\newcommand{\sket}[1]{\ket{\smash{#1}}}
\newcommand{\sbra}[1]{\bra{\smash{#1}}}
\newcommand{\sbraket}[2]{\braket{\smash{#1}}{\smash{#2}}}
\newcommand{\sketbra}[2]{\ketbra{\smash{#1}}{\smash{#2}}}
\newcommand{\strike}[1]{\ifmmode\text{\sout{\ensuremath{#1}}}\else\sout{#1}\fi}
\newcommand\reallywidehat[1]{%
	\savestack{\tmpbox}{\stretchto{%
			\scaleto{%
				\scalerel*[\widthof{\ensuremath{#1}}]{\kern-.6pt\bigwedge\kern-.6pt}%
				{\rule[-\textheight/2]{1ex}{\textheight}}
			}{\textheight}%
		}{0.5ex}}%
	\stackon[1pt]{#1}{\tmpbox}%
}
\def\ie{\textit{i.e.}}
\def\Ie{\textit{I.e.}}
\def\eg{\textit{e.g.}}
\def\Eg{\textit{E.g.}}
\def\apriori{\textit{a priori}}
\def\viceversa{\textit{vice versa}}
\newtheoremstyle{named}{}{}{}{}{\bfseries}{:}{.5em}{\thmnote{#3 }#1}
\theoremstyle{named}
\def\namedlabel#1#2{\begingroup
   \def\@currentlabel{#2}%
   \label{#1}\endgroup
}
\begin{document}

\title{The Perspectives of Non-Ideal Quantum Reference Frames}

\author{Sébastien C. Garmier}
\thanks{\url{kontakt@sebastiengarmier.ch}}
\affiliation{Institute for Theoretical Physics, ETH Zurich,
Wolfgang-Pauli-Strasse 27,
8093 Zurich, Switzerland}

\author{Ladina Hausmann}
\affiliation{Institute for Theoretical Physics, ETH Zurich,
Wolfgang-Pauli-Strasse 27,
8093 Zurich, Switzerland}

\author{Esteban Castro-Ruiz}
\affiliation{Institute for Quantum Optics and Quantum Information (IQOQI),
Austrian Academy of Sciences, Boltzmanngasse 3, 1090 Vienna, Austria}

\begin{abstract}
    \noindent
    We define the perspective of any quantum reference frame (QRF)
    and construct reversible transformations between different perspectives.
    Our construction is based on two principles motivated operationally
    by the change from relative to absolute coordinates and leads to
    an incoherent group averaging approach with general symmetry group.
    Thereby, it extends the framework of
    \href{https://doi.org/10.1038/s42005-025-02036-x}{[Commun Phys 8, 187 (2025)]}
    from ideal QRFs, which generally require infinite resources
    like energy or angular momentum, to non-ideal QRFs,
    with only finite resources.
    We find that the perspective of a non-ideal QRF deviates significantly
    from that of an ideal QRF: Firstly, systems described relative to
    a non-ideal QRF appear superselected.
    Secondly, the structure of the perspective of a non-ideal QRF attests
    that successive relational operations on a system lead to
    back-reaction on this QRF.
\end{abstract}

\maketitle


\section{Introduction}
\label{sec:introduction}

\noindent
In a relational approach to physics,
observables are typically defined relative to a \emph{reference frame}.
In practice, such frames are realized by physical systems,
like clocks and rulers~%
\cite{Einstein1905,Misner1973,Taylor1992,Brown2009}.
If one takes the standpoint that quantum theory accurately describes such systems,
one should also seek a formulation of physics relative to
\emph{quantum reference frames} (QRFs)~%
\mbox{\cite{Eddington1949,Aharonov1967,Aharonov1984,Rovelli1991,Angelo2011,Bartlett2007}}.
Recently, there has been much progress towards constructing
quantum versions of \emph{reference frame or coordinate transformations}~%
\mbox{\cite{Aharonov1984,Merriam2005,Giacomini2019,Hamette2020,
Vanrietvelde2020,Glowacki2023,Carette2025,
Hamette2021,Krumm2021,Kabel2025,CastroRuiz2025a,DeVuyst2025,
Ballesteros2025,CastroRuiz2025,Hamette2025}},
giving rise to different frameworks for QRFs,
with applications to quantum foundations~%
\cite{Giacomini2019,AhmadAli2022,Krumm2021,JorqueraRiera2025}
and the overlap between quantum and gravitational physics~\cite{Giacomini2019,CastroRuiz2017,CastroRuiz2020,Hamette2023,Kabel2025}.

The simplest type of QRF are ideal QRFs, \ie, systems where all possible
frame orientations are assigned perfectly distinguishable states.
However, they often do not exist in nature as they
typically need infinite resources for their construction~%
\cite{Fleming1973,Peres1980,Mandelstam1991,Bartlett2007,Loveridge2018}.
For instance, an infinitely precise clock would require infinite energy spread.
In turn, such an infinite spread has important consequences when considering gravity,
as it would strongly disturb spacetime.
Generally, the fact that gravity cannot be shielded
implies ultimate physical limits to the workings of our clocks and rulers~%
\cite{Peres1960,Garay1995,Gorelik2005,Bronstein2012,CastroRuiz2017}.
Understanding non-ideal QRFs is therefore crucial to study situations
at the interface of quantum theory and gravitational physics.
There has been much progress in the study of non-ideal QRFs~%
\cite{Bartlett2007,Hamette2021,Glowacki2023,Hoehn2023a,Carette2025,
AhmadAli2022,DeVuyst2025,DeVuyst2025a,Fewster2024},
but due to their physical importance and added technical difficulty compared to ideal QRFs,
many questions remain open.

In this work, we build a framework for defining the perspectives of
general QRFs for a wide range of symmetry groups
(including non-compact ones),
and obtain unitary QRF transformations between perspectives.
Our framework assumes consistency with the potential existence
of an external laboratory frame and
leads to an \emph{incoherent group averaging approach}.
It extends the so-called \emph{extra-particle framework}
of~\cite{CastroRuiz2025a} to non-ideal QRFs.
Treating non-ideal QRFs is not trivial,
as the straightforward application of well-known QRF ``jumping rules''
(\eg~\cite{Giacomini2019,Hamette2020,CastroRuiz2025a})
would lead to non-unitary transformations,
which are incompatible with an interpretation
of QRF transformations as quantum coordinate changes.
Here, we instead define the notion of a QRF perspective based
on two fundamental assumptions called the
\hyperref[ppl:perspective]{perspective} and \hyperref[ppl:invariance]{invariance} principles.
Intuitively, the perspective principle formalizes the idea that from the QRF perspective,
the dependence of relational operators on the QRF becomes implicit,
and the invariance principle ensures that all QRF perspectives are treated on equal footing.
A QRF perspective is then a factorization of the total Hilbert space
which implements both principles.
Like in the ideal case~\cite{CastroRuiz2025a},
unitarity of the QRF transformations relies crucially on invariant
operators that commute with the relational ones,
the so-called \emph{extra particle}.
Our mathematical framework is compatible with different
physical interpretations and assumptions regarding QRFs.
Along the way, we discuss these interpretations and comment
on their implications for the foundations of quantum theory.

We find that the perspective of a non-ideal QRF differs
from the perspective of an ideal QRF
in qualitatively different ways, and it is not simply a ``blurred'' version of the ideal case.
We find a range of \emph{superselection} when a system $S$ is described
from the perspective of a QRF $A$,
extending from maximal superselection if the QRF is trivial
to no superselection when the QRF is ideal.
Furthermore, $S$ as described from the laboratory's perspective
is not generally isomorphic to $S$ as described from $A$'s perspective.
Particularly, if $S$ is bipartite from the laboratory's perspective,
it might not be bipartite in $A$'s perspective
(see~\cite{AhmadAli2022} for a related result in the
perspective-neutral framework);
yet, a bipartite structure is approximated as the frame becomes ideal.
Relatedly, for two commuting operators on $S$ in the laboratory perspective
the corresponding operators in $A$'s perspective might not commute.
Finally, the algebra containing relational operators between observed
system $S$ and QRF~$A$ can even have a larger number of degrees of freedom
than $S$ described from the perspective of the laboratory.
This counter-intuitive fact can be interpreted as some kind of
\emph{back-reaction} of the system onto the QRF due to the non-ideality of the latter.
We apply the framework in detail for Abelian symmetry groups
to investigate the above phenomena and discuss an explicit example
of two QRFs with different sizes describing each other.

The paper is structured as follows:
In section \ref{sec:QRF_POV} we introduce the notion of a QRF
perspective in general; this is the heart of our framework.
Section \ref{sec:extreme_cases} then derives explicit
expressions for QRF perspectives and transformations in
important extreme cases. Section \ref{sec:non_ideal_case}
discusses the general non-ideal case.
The case of an Abelian symmetry group
is dealt with in section \ref{sec:POVs_Abelian}.
Finally, we explore physical consequences of our framework
as well as open questions in section \ref{sec:conclusion_outlook}.
Technical details can be found in the appendices.

\vspace{-1em}
\section{Quantum Reference Frame Perspectives}
\label{sec:QRF_POV}

\noindent
Consider two quantum systems $A$ and $S$, with respective Hilbert
spaces $\H_A$ and $\H_S$, described from the perspective of a
laboratory $L$. We assume the laboratory is equipped with devices such
that all possible states and measurements of the total system $AS$ can
be characterized operationally. Ultimately, however, we are interested
in $A$'s perspective rather than $L$'s. That is, our goal for this
section is to make sense of and to define the
\emph{perspective of the quantum system $A$ serving as a QRF}.

To begin, consider the following example: 
\begin{example}{}{classical_A}
    Suppose for the moment that $A$ is a classical particle
    with position $x_A^{|L}$.
    Let $S$ be a quantum particle described by the position
    $\smash{\hat x_S^{|L}}$ and momentum $\smash{\hat p_S^{|L}}$,
    all defined from the perspective of $L$ (indicated by ``$|L$'').
    Describing $S$ from the perspective of $A$ entails
    a classical reference frame transformation~\cite{Ballentine2014}:
    we shift $\smash{\hat x_S^{|L}}$ by $\smash{-x_A^{|L}}$, obtaining the relational operator
    $\hat X_S^{|A} := \hat x_S^{|L} - x_A^{|L} \hat\id_S$, which we
    call the position of $S$ relative to $A$
    from the perspective of $A$ (indicated by ``$|A$'').
\end{example}
\noindent
In the example, the position observable $\hat X_S^{|A}$
of the system~$S$ relative to $A$ from the perspective of $A$ is not the same
as the position observable $\hat x_S^{|L}$ of $S$ from the perspective of $L$.
In fact, when we refer to \emph{the system $S$ from the
perspective of $A$}, we, strictly speaking, mean a different subsystem 
of $AS$ than the system $S$ defined from
the perspective of $L$.\footnote{
    Unlike in a generic coordinate transformation,
    the parameter $x_A^{|L}$ of the transformation in the example
    is not an externally fixed value, but the position of the physical system $A$.
    We will later see how this fact is crucial when working with QRFs.
    There, the difference between $S$ from the perspectives of either $L$
    and $A$ will be so pronounced that it makes sense to even give them
    different names, ``$S$'' being reserved for the perspective of $L$.
}
Furthermore, the perspective of $A$ is such that the position~$\smash{\hat X^{|A}_S}$
of $S$ relative to $A$ depends only implicitly on the frame $A$.
This is achieved by the relabelling $\smash{\hat x^{|L}_S - x^{|L}_A \hat\id_S}
\rightsquigarrow \smash{\hat X^{|A}_S}$.
Generally, the dependence of relational quantities, like $\hat X^{|A}_S$,
on the reference frame becomes implicit from the perspective of said frame.
We make this property a defining characteristic also of \emph{quantum} reference frame perspectives:
\begin{mybox}
    \phantomsection\label{ppl:perspective}
    \textbf{Perspective Principle:}
    The dependence of relational quantities on the QRF
    becomes implicit from the perspective of the frame.
\end{mybox}
\noindent
Ideas similar to this, albeit formulated in different
ways and often assumed implicitly,
are present in a wide range of QRF frameworks~%
\cite{Aharonov1967,Aharonov1984,Angelo2011,Loveridge2018,Loveridge2018,
Giacomini2019,Hamette2020,Vanrietvelde2020,Hamette2021,Krumm2021,
Glowacki2023,Kabel2025,Carette2025,CastroRuiz2025a,
DeVuyst2025,Ballesteros2025,CastroRuiz2025,Hamette2025}.
We will take example~\ref{exa:classical_A} as a guide to formalize
the \hyperref[ppl:perspective]{perspective principle}.
Specifically, we will generalize translations to a generic symmetry group,
introduce the notion of a quantum system serving as QRF,
define relational quantities in the context of QRFs as so-called \emph{relational POVMs},
and formalize what it means for the QRF to ``become implicit''
in relational quantities.

\medskip
\paragraph*{Symmetry Group and QRFs.}
A crucial feature of example~\ref{exa:classical_A} is that positions transform
under the translation group $(\bbR,+)$ while the position $\smash{x_A^{|L}}$ of the system~$A$
can also be seen as giving rise to the translation by~$\smash{-x_A^{|L}}$.
Generalizing this feature, we will consider a symmetry group $G$
under which~$A$ and $S$ transform under unitary representations
$\smash{\hat U_A}$ and $\smash{\hat U_S}$~\cite{Wigner1959},
acting on $\H_A$ and $\H_S$ respectively.\footnote{
    $G$ is taken to be a finite-dimensional, unimodular Lie group,
    where discrete groups are included as Lie groups of dimension zero,
    and all representations are continuous.
    See appendix~\ref{app:notations_conventions} for the conventions
    and notations we use throughout.
    We require $\hat U_A$ and $\hat U_S$ to be \emph{non-projective},
    which can always be assumed by \emph{centrally extending}
    the symmetry group~\cite{Ragunathan1994,Bargmann1954}.
}
Generalizing positions, we now speak of \emph{orientations};
they are $G$-valued, \ie\ take values in the group $G$,
analogously to how positions can be seen as $(\bbR,+)$-valued.

In example~\ref{exa:classical_A}, $A$ is a \emph{reference frame}
for translations, since its position $x_A^{|L}$ can keep track
of translations acting on $A$~\cite{Aharonov1984,Bartlett2007,Palmer2014,
Loveridge2018,Giacomini2019,Hamette2020,Hamette2021,
CastroRuiz2025a,Glowacki2023,Fewster2024}.
By analogy with the classical case, we
promote the system $A$ together with an \emph{orientation POVM}\footnote{
    \emph{Positive operator-valued measures} (POVMs) generalize projective
    quantum measurements (PVMs)~\cite{Renes2022}.
} $\{\hat\gamma_A^{|L}(g)\}_{g\in G}$ on~$\H_A$
to a \emph{QRF for $G$}. 
The orientation POVM of the QRF $A$ defines the orientation of~$A$ relative to $L$.
The choice of POVM depends on the physical situation one considers.
Acting globally with the symmetry group $G$
changes the orientation of $A$, so the orientation POVM
must be covariant to be a sensible indication of orientation~%
\cite{Bartlett2007,Loveridge2018,Hamette2021,CastroRuiz2025a,AhmadAli2022,
Glowacki2023,Fewster2024}:
\begin{equation}
    \mathsf U_A(g')\bigl[\hat\gamma_A^{|L}(g)\bigr] = \hat\gamma_A^{|L}(g'g)
    \quad \forall\,g,g' \in G,
    \label{eq:orientation_covariance}
\end{equation}
where $\mathsf U_A(g')[\;\cdot\;] := \hat U_A(g') [\;\cdot\;]
\hat U_A^\dagger(g')$.
Indeed, the probability of obtaining the orientation $g \in G$ for $A$
relative to~$L$ should equal the probability of obtaining the orientation
$g'g$ after $G$ has acted with $g' \in G$.
This generalizes the classical position $x_A^{|L}$
from example~\ref{exa:classical_A}
which transforms under global translations.

A QRF $A$ is called \emph{ideal},
if it can perfectly keep track of orientation~%
\cite{Hamette2020,Hamette2021,Glowacki2023a,CastroRuiz2025a,CastroRuiz2025}.
That is, if there exist (possibly improper) states $\{ \smash{\hat\rho_A^{|L}(g)} \}_{g\in G}$
of the QRF such that $\tr\bigl[\hat\gamma_A^{|L}(g) \hat\rho_A^{|L}(g')\bigr] = 0$
if and only if $g \neq g'$. Note that this definition 
only depends on the state space of the QRF $A$ and 
is independent of the specific state of~$A$
in a concrete experiment.
By the covariance condition in \cref{eq:orientation_covariance}
it is enough to find one state $\hat\rho_A^{|L}(e)$
such that  $\tr\bigl[\hat\gamma_A^{|L}(g) \hat\rho_A^{|L}(e)\bigr] =
0$ if $g \neq e$,
since by acting with $\hat U_A$ one obtains states with the
desired property for every other orientation:  
$\hat\rho_A^{|L}(g) := \mathsf U_A(g)[\hat\rho_A^{|L}(e)]$.
An ideal QRF can emulate a classical reference frame:
if we restrict the states of $A$ from the perspective of $L$
to~$\smash{\{\hat\rho_A^{|L}(g)\}_{g \in G}}$,
the orientation of $A$ becomes definite.
A QRF which is not ideal is called \emph{non-ideal}.

\medskip
\paragraph*{Relational POVMs.}
The relational observable in example~\ref{exa:classical_A},
$\smash{\hat x_S^{|L} - x_A^{|L} \hat\id_S}$,
is obtained from $\smash{\hat x_S^{|L}}$
by subtracting~$\smash{x_A^{|L} \hat\id_S}$, compensating for the position of $A$.
Note that this implies the result is invariant under global translations.
We extend this idea to define relational observables in general:
following~\cite{Bartlett2007,Loveridge2018,CastroRuiz2025a,Hamette2021,
Glowacki2023,Fewster2024}, we introduce a \emph{relationalization map}
$\mathsf R_A$ that takes an observable on $S$ and compensates it with the orientation
of the QRF~$A$, producing a relational observable, which is
$G$-invariant by construction.\footnote{
    $\mathsf R_A$ is often called a \emph{relativization map}
    and denoted with monetary symbols ($\$$, $\yen$)~%
    \cite{Bartlett2007,Loveridge2018}.
    Given the origin of two of the authors,
    perhaps we should have denoted the relationalization map ``Fr.''
    We prefer the terms ``relationalization'' and ``relational''
    as opposed to ``relativization'' and ``relative'',
    because the latter often carry the connotation of ``frame-dependent'',
    but the fact that $\mathsf R_A$ is defined from the perspective of $L$ is not relevant here.
}
Generally, we take an observable on $S$ to be a POVM
$\{\hat m_S^{|L}(i)\}_{i \in I}$ with outcomes~$i \in I$,
and the resulting relational observable will be a
POVM $\{\hat M^{|L}_{AS}(i)\}_{i\in I}$ on $AS$ with the same outcomes.
We call these POVMs the \emph{underlying}
and \emph{relational POVMs}, respectively. Thus, we define:
\begin{gather}
    \hat M_{AS}^{|L}(i) = \mathsf R_A\bigl[\hat m_S^{|L}(i)\bigr],
    \label{eq:relational_POVM_element} \\
    \mathsf R_A[\;\cdot\;] := \int_G\der g\, \hat\gamma_A(g) \otimes
    \mathsf U_S(g)[\;\cdot\;].
    \label{eq:relationalization_map}
\end{gather}
Let us see $\mathsf R_A$ at work:
\begin{example}{}{regular_position}
    Let $A$ and $S$ be quantum particles with positions $\hat x_A^{|L}$
    and $\hat x_S^{|L}$, observables which correspond to the PVMs
    $\{\ketbra{x}{x}_A\}_{x\in\bbR}$ and $\{\ketbra{x}{x}_S\}_{x\in\bbR}$.
    We turn $A$ into an ideal QRF for translations, $G = (\bbR,+)$,
    by choosing the former PVM as the orientation POVM of $A$.
    Applying $\mathsf R_A$ to the latter gives
    $\bigl\{ \int_\bbR\der x'\, \sketbra{x'}{x'}_A
    \otimes \sketbra{x+x'}{x+x'}_S \bigr\}_{x\in\bbR}$,
    which is the PVM of to the observable $\hat X^{|L}_{AS} = \hat x_S^{|L} - \hat x_A^{|L}$.
    As expected, the relational quantity is the position difference
    between $S$ and $A$.
    Furthermore, the PVM of the momentum $\hat p_S^{|L}$
    of $S$ is already relational in the sense that
    $\mathsf R_A[\ketbra{p}{p}_S] = \hat\id_A \otimes \ketbra{p}{p}_S$
    for all $p \in \bbR$.
\end{example}

\medskip
Relational POVMs can be understood in two ways. On the one hand,
they can be operationally obtained from measurements of the
underlying POVM with respect to $L$, see
Figure~\ref{fig:relational_POVM}.
This \emph{laboratory-first view} most compatible with quantum-information approaches
such as~\cite{Bartlett2006,Bartlett2007},
considers non-invariant POVM measurements from the perspective of $L$
as primordial and relational POVMs are derived from them.
On the other hand, one can take a \emph{symmetry-first view},
assuming that in the presence of symmetries,
only invariant quantities are observable.
In this case, relational POVMs have fundamental status.
This is the view advocated for in~\cite{Loveridge2017},
where it is shown that the usual, non-invariant quantities emerge from relational POVMs
in the limit of sharply-localised reference frames.
In figure~\ref{fig:swmeasurement},
we illustrate a relational measurement of position in the symmetry-first view.
Our framework is compatible with both interpretations.

\begin{figure}[t]
    \centering
    \includegraphics{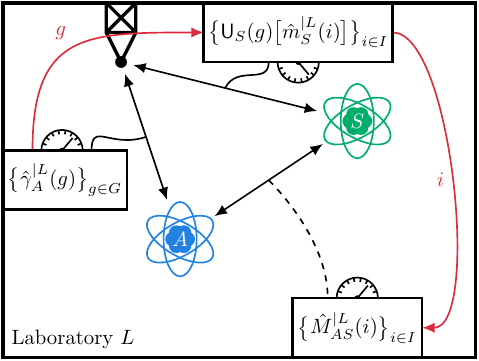}
    \caption{%
        The relational POVM is obtained from the underlying POVM
        by compensating the latter with the orientation of $A$ relative to $L$
        to remove all traces of orientation relative to $L$.
        More precisely, measuring $\{\hat M_{AS}^{|L}(i)\}_{i\in I}$
        can be understood as first measuring the orientation
        $\{\hat\gamma_A^{|L}(g)\}_{g\in G}$ of $A$ relative to $L$, obtaining $g \in G$,
        followed by measuring the orientation-compensated
        POVM $\{\mathsf U_S(g)[\hat m_S^{|L}(i)]\}_{i\in I}$,
        obtaining $i \in I$,
        and finally forgetting $g$ while keeping $i$.
    }
    \label{fig:relational_POVM}
\end{figure}

\begin{figure}[t]
    \centering
    \includegraphics{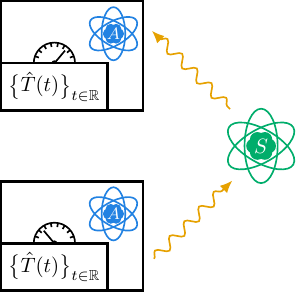}
    \caption{%
        \emph{Salecker-Wigner measurement} of the relative position of $S$ with
        respect to $A$~\cite{Salecker1958}.
        The position QRF $A$ consists of the centre of mass degrees of freedom
        (depicted by the blue atom) of a device capable of emitting
        and detecting a light ray.
        This device can be used to measure the position of $S$
        relative to $A$ by bouncing a light ray off $S$.
        As the light bounces off $S$ and arrives back at the box,
        the measurement of a time POVM $\{\hat T(t)\}_{t\in\bbR}$ on the clock
        determines the light travel time.
        This gives an estimate for the relative position of $S$
        with respect to $A$.
        If $A$ and the clock are ideal, this measurement corresponds
        to the PVM measurement of $\hat X_{AS}^{|L}$ in
        example~\ref{exa:regular_position}.
    }
    \label{fig:swmeasurement}
\end{figure}

Finally, note that we may consider more general operators
than POVM elements as arguments to $\mathsf R_A$.
These can \eg\ be used to model \emph{operations} performed on $S$ in relation to $A$.
We discuss this in sections~\ref{sec:non_ideal_case} and~\ref{sec:conclusion_outlook}.

\medskip
\paragraph*{Making the QRF Implicit.}
The relational observable in 
example~\ref{exa:classical_A},~$\hat x_S^{|L} - x_A^{|L} \hat\id_S$, 
depends only parametrically on the frame
via its classical position.
Introducing the relational position operator
$\hat X_S^{|A} := \hat x_S^{|L} - x_A^{|L} \hat \id_S$
makes this dependence implicit and thus fulfils the
\hyperref[ppl:perspective]{perspective principle}.
More concretely, we defined a new quantum system, on which the relational position
acts as the position operator, by a simple renaming of operators.
In the QRF case, \eg\ example~\ref{exa:regular_position},
we also want to do this, but since $A$ has its own Hilbert space
and a general relational POVM element $\hat M^{|L} = \mathsf R_A[\hat m_S^{|L}]$
acts non-trivially on both $\H_A$ and $\H_S$,\footnote{
    If clear, we will, from now on, occasionally leave out subscripts
    indicating on what Hilbert space factors operators act.
}
renaming operators is not enough to make the dependence on the frame implicit.
In other words, the Hilbert space structure $\H^{|L} := \H_A \otimes \H_S$,
while natural for the description of physics from the perspective of $L$,
is unsuitable for the perspective of $A$.

To implement the \hyperref[ppl:perspective]{perspective principle},
we want to change the Hilbert space structure
such that relational POVM elements act non-trivially
only on their \emph{own minimal subsystem}.\footnote{
    With ``minimal'' we mean that it does not contain a smaller subsystem
    with the same properties.
    Without minimality, one could claim that the problem is trivially solved
    by choosing the total system $AS$ as the subsystem
    where relational POVM elements act non-trivially;
    this is useless as typically many other operators besides
    relational POVM elements act on $AS$,
    including operators of the form $\hat X_A \otimes \hat\id_S$
    attributed to the QRF.
    We will see later that most naturally,
    we need \emph{generalized subsystems},
    \ie\ direct sums of Hilbert space tensor factors~%
    \cite{Barnum2004,Rio2015,Chiribella2018}.
}
Then, the tensor factorization of the Hilbert space 
is naturally adapted to relational POVM elements,
making the dependence on the reference frame implicit.
More precisely, we want to find an isomorphism $\hat V^{\to A} : \H^{|L} \to \H^{|A}$
to a new Hilbert space $\H^{|A}$ equipped with a subsystem structure
such that with respect to this structure,
every transformed relational POVM element
$\hat M^{|A} := \mathsf V^{\to A}[\hat M^{|L}]$
acts non-trivially only on the same minimal subsystem.
Accordingly, we call $\hat V^{\to A} : \H^{|L} \to \H^{|A}$
the \emph{QRF jump} from the perspective of $L$ into that of $A$.

For example, consider the change to relative coordinates
in the context of one-dimensional translations:
\let\fnum\thefootnote
\begin{example}{}{relative_coordinates}
    Take the setup of example~\ref{exa:regular_position}.
    A general pure state in $\H^{|L} = \H_A \otimes \H_S$
    is a wave function $\psi(x_A,x_S)$.
    Introducing $X := x_S - x_A$ we may define
    a new wave function $\Psi(x_A,X)
    := \psi\bigl(x_A,x_A + X\bigr)$.
    The relational operator $\smash{\hat x_S^{|L} - \hat x_A^{|L}}$
    acts on $\Psi$ multiplicatively as $X$.
    Similarly, the other interesting relational operator,
    the momentum of $S$,
    acts on $\Psi$ as~$-\i\partial_X \Psi$.
    We thus see that $\Psi$ is already a description satisfying
    the \hyperref[ppl:perspective]{perspective principle}.

    \medskip
    In our notation, this transformation is implemented by setting
    $\H^{|A} := \H_Y \otimes \H_X$,
    $\H_X,\H_Y \iso L^2(\bbR)$, and
    $\hat V^{\to A} := \int\der x_A\,\der x_S\,
    \sket{x_A}_Y \sbra{x_A}_A \otimes
    \sket{x_S-x_A}_X \sbra{x_S}_S$.
    The subsystem on which relational operators act is~$X$,
    which is minimal.\textsuperscript{\ref{ft:example_3}}
    Hence, the \hyperref[ppl:perspective]{perspective principle}
    is satisfied and $\hat V^{\to A}$ is the QRF jump to $A$.%
    \textsuperscript{\ref{ft:example_3_2}}
\end{example}
\footnotetext{
    \label{ft:example_3}%
    The usual representation of position and momentum on $L^2(\bbR)$
    is irreducible~\cite{Neumann1931}.
    Alternatively, note that all operators on $\H_X$
    are transformed relational operators
    as $(\mathsf V^{\to A})^\dagger[\hat\id_A \otimes (\;\cdot\;)]
    = \mathsf R_A[\;\cdot\;]$.
}
\footnotetext{
    \label{ft:example_3_2}%
    As we will see below,
    it is possible to refine $\hat V^{\to A}$ so that
    besides the \hyperref[ppl:perspective]{perspective principle},
    a second principle is satisfied.
}

\noindent
Notably, example~\ref{exa:classical_A} can be recast in the language
of example~\ref{exa:relative_coordinates}, by restricting in the latter
to states of the classical-quantum form
$\hat\rho_{AS}^{|L} = \sketbra{x_A^{|L}}{x_A^{|L}}_A \otimes \hat\sigma_S$
from the perspective of $L$.
With this it becomes clear that the relabelling
$\hat x^{|L}_S - x^{|L}_A \hat\id_S \rightsquigarrow \hat X^{|A}_S$
and $\hat p^{|L}_S \rightsquigarrow \hat p^{|A}_S$
in example~\ref{exa:classical_A}
is nothing more than a change to a more natural subsystem structure,
like in the QRF case of example~\ref{exa:relative_coordinates}.
Regardless of whether our frame is classical or quantum,
we can implement the \hyperref[ppl:perspective]{perspective principle}
by finding a suitable subsystem structure.

\medskip
\paragraph*{Semi-Simplicity.}
Because $\hat V^{\to A}$ is an isomorphism,
not just relational POVM elements act on their own minimal subsystem
from the perspective of $A$, but the whole algebra they generate.
Our result will thus naturally be stated in terms of operator algebras.
Now, if an operator algebra $\A$ acts irreducibly on a Hilbert space $\H$,
\ie\ as the algebra of all operators on $\H$,
then by definition, there are no other operators that have this property,
making the quantum system defined by the Hilbert space~$\H$
minimal in the required sense.
Most generally, if $\A$ acts \emph{semi-simply} on a given Hilbert space
structure, that is, if it acts \emph{block-diagonally with irreducible blocks}
in the given structure, then the \emph{generalized subsystem}~%
\cite{Barnum2004,Rio2015,Chiribella2018}
defined by the spaces acted-on by the blocks is minimal in the required sense too.
Thus, we are naturally led to consider Hilbert space structures
such that relevant algebras act semi-simply.

Concretely, we care about the operator algebra
\begin{equation}
    \A^{S:A|L} := \O\bigl[ \mathsf R_A[\O_S] \bigr].
    \label{eq:relational_algebra}
\end{equation}
Here and in the following,
$\O_X$ denotes the algebra of all operators on a Hilbert space $\H_X$,
and $\O[\dots]$ is the operator algebra generated by $[\dots]$.\footnote{
    See appendix~\ref{app:notations_conventions} for more details on notation.
}
As before, ``$|L$'' in the superscript indicates that
$\A^{S:A|L}$ is an object described from the perspective of $L$.
Furthermore, ``$S{:}A$'' reads ``$S$ relationalized with respect to the QRF $A$''
and indicates that~$\A^{S:A|L}$ is generated by relational operators.
To implement the \hyperref[ppl:perspective]{perspective principle}
we need to find a unitary jump $\hat V^{\to A} : \H^{|L} \to \H^{|A}$ so that
$\A^{S:A|A} := \mathsf V^{\to A}[\A^{S:A|L}]$
acts semi-simply, that is, block-diagonally with irreducible blocks,
in the Hilbert space structure of~$\H^{|A}$.

\medskip
\paragraph*{$\boldsymbol G$-Invariant Operators.}
Relational operators are invariant under the global action of $G$.
Thus, $\A^{S:A|L} < \mathsf G_{AS}[\O_{AS}]$,
where $\mathsf G_{AS}[\O_{AS}]$ is the subalgebra of all
\emph{$G$-invariant operators}, \ie\ those operators of $\O_{AS}$ which
commute with the global action $\hat U^{|L} := \hat U_A \otimes \hat U_S$ of $G$.\footnote{
    As explained in appendix~\ref{app:representation_theory_groups},
    the projector onto $\mathsf G_{AS}[\O_{AS}]$ is given by the
    \emph{$G$-twirl} super-operator $\mathsf G_{AS}$~\cite{Bartlett2007}
    (hence the notation);
    see particularly proposition~\ref{prop:G_twirl}.
}

From the laboratory's perspective, $G$-invariant operators are the most general operators
which could still be operationally accessible 
without using the laboratory frame~%
\cite{Bartlett2007,CastroRuiz2025a}.%
\footnote{
    Whether any given $\hat O \in \mathsf G_{AS}[\O_{AS}]$ \emph{actually has}
    an operational meaning of course depends on additional assumptions one makes.
}
Consider a setting with two QRFs $A$ and $B$ and a system of interest~$Q$,
Hilbert space $\H^{|L} = \H_A \otimes \H_B \otimes \H_Q$
and global $G$-action $\hat U^{|L} = \hat U_A \otimes \hat U_B \otimes \hat U_Q$.
We define the algebras $\A^{BQ:A|L} = \mathsf R_A[\O_{BQ}]$
and $\A^{AQ:B|L} = \mathsf R_B[\O_{AQ}]$
analogously to the case of a single QRF, treating first $BQ$ and then $AQ$ as $S$.
Generally, the algebra $\A^{BQ:A|L}$ contains operators not contained in
$\A^{AQ:B|L}$ and \viceversa,
but they have in common that both of them are $G$-invariant.
Therefore, $G$-invariant operators
beyond relational operators with respect to a single QRF
are operationally interesting.\footnote{
    For example, the idea that observers corresponding to different QRF perspectives
    can collaborate to effectively access more G-invariant operators
    was introduced in~\cite{Hamette2025,Doat2025}.
    Furthermore,~\cite{Doat2025} showed that the different algebras
    generated by relational operators spans the full algebra of invariant operators
    in certain~cases.
}
The subsystem structure of $L$'s perspective is not adapted to
$G$-invariant operators.
Meanwhile, from $A$'s perspective,
the \hyperref[ppl:perspective]{perspective principle}
only demands the subsystem structure to be adapted to relational operators with respect to $A$
(analogously for other QRF perspectives),
a subset of the invariant operators.
There is \apriori\ no reason why one could not also require the structure
to be adapted to all $G$-invariant operators;
and because of their importance,
we assume it as a second principle:
\begin{mybox}
    \phantomsection\label{ppl:invariance}
    \textbf{Invariance Principle:}
    From a QRF perspective,
    $G$-invariant operators are represented naturally.
\end{mybox}
\noindent
Assuming that the QRF perspective of $A$ describes quantum operations
available to an observer with access to $A$ but not the laboratory,
this principle is natural, because $G$-invariant operators are
the largest class of operators that are, in principle, accessible to
such an observer.

 Analogous to the \hyperref[ppl:perspective]{perspective principle},
we formalize the \hyperref[ppl:invariance]{invariance principle} into the mathematical requirement that
$\mathsf V^{\to A}\bigl[\mathsf G_{AS}[\O_{AS}]\bigr]$ acts semi-simply
in the subsystem structure of $A$'s perspective.
Note that, although both principles formalize to similar mathematical statements,
they are physically quite distinct:
the \hyperref[ppl:perspective]{perspective principle}
was motivated from intuition about classical reference frames,
while the \hyperref[ppl:invariance]{invariance principle}
required further considerations regarding operator algebras.

We will require both principles to hold in QRF perspectives, thereby
defining the notion of a perspective. We will see in theorem \ref{thm:QRF_POV}
that both can indeed be implemented \emph{simultaneously} and \emph{independently}
of each other.

\medskip
\paragraph*{General Result.}
To state theorem~\ref{thm:QRF_POV}, it is useful to introduce the
additional algebras
\begin{align}
    \A^{\overline{S:A}|L} &:= \bigl\{ \hat X \in \mathsf G_{AS}[\O_{AS}]
    \st [\hat X,\A^{S:A|L}] = 0 \bigr\},
    \label{eq:commutant_algebra} \\
    \Z^{S:A|L} &:= \A^{S:A|L} \cap \A^{\overline{S:A}|L}.
    \label{eq:centre_algebra}
\end{align}
Here, $\smash{\A^{\overline{S:A}|L}}$ is the $G$-invariant \emph{commutant} of $\smash{\A^{S:A|L}}$
and $\Z^{S:A|L}$ is its centre.
If both principles are satisfied, then $\A^{\overline{S:A}|A}
:= \mathsf V^{\to A}[\A^{\overline{S:A}|L}]$
and $\Z^{S:A|A} := \mathsf V^{\to A}[\Z^{S:A|L}]$
also act semi-simply.
\begin{theorem}{Perspective of a QRF}{QRF_POV}
    \allowdisplaybreaks
    There exists an isomorphism $\hat V^{\to A}$ such that
    the principles of \hyperref[ppl:perspective]{perspective}
    and \hyperref[ppl:invariance]{invariance} are satisfied.
    Concretely,
    \begin{align}
        \H^{|L} &\iso \H^{|A} := \hat V^{\to A} \H^{|L} \notag\\
        &= \bigoplus_{k,r} \H_r \otimes \H_{r,k} \otimes \H_k,
        \label{eq:QRF_view_Hilbert_space_decomposition} \\
        \A^{S:A|L} &\iso \A^{S:A|A}
        := \mathsf V^{\to A}\bigl[ \A^{S:A|L} \bigr] \notag\\
        &= \bigoplus_k\Bigl(\bigoplus_r \hat\id_r \otimes
        \hat\id_{r,k}\Bigr) \otimes \O_k,
        \label{eq:QRF_view_relational_algebra_decomposition} \\
        \A^{\overline{S:A}|L} &\iso \A^{\overline{S:A}|A}
        := \mathsf V^{\to A}\bigl[ \A^{\overline{S:A}|L} \bigr] \notag\\
        &= \bigoplus_{k,r} \hat\id_r \otimes
        \O_{r,k} \otimes \hat\id_k,
        \label{eq:QRF_view_commutant_decomposition} \\
        \hat U^{|L}(g) &\iso \hat U^{|A}(g)
        := \mathsf V^{\to A} \bigl[ \hat U^{|L}(g) \bigr] \notag\\
        &= \bigoplus_{k,r} \hat U_r(g) \otimes
        \hat\id_{r,k} \otimes \hat\id_k,
        \label{eq:QRF_view_G_decomposition}
    \end{align}
    where some $\H_{r,k}$ may be zero-dimensional and
    \begin{enumerate}[leftmargin=1.5em]
        \medskip
        \item $r$ is the \emph{charge} associated to the global $G$-action;
            it enumerates all irreducible representations of $G$,
            acting as $\hat U_r$ on $\H_r$.

        \medskip
        \item $k$ enumerates the joint eigenspaces of the elements
        in the centre $\Z^{S:A|L}$ of $\A^{S:A|L}$,
        and $\bigoplus_r \H_r \otimes \H_{r,k} \otimes \H_k$
        are those eigenspaces.
    \end{enumerate}
\end{theorem}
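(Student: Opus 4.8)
The plan is to recognise the statement as a simultaneous block-diagonalisation of three mutually related operator algebras together with the global $G$-action, and to obtain it by composing two standard structural decompositions: one from von Neumann algebra theory, the other from the representation theory of $G$. The starting observation is that $\A^{S:A|L}$ is a von Neumann algebra consisting entirely of $G$-invariant operators, so $\A^{S:A|L} < \N$ where $\N := \mathsf G_{AS}[\O_{AS}]$; moreover $\A^{\overline{S:A}|L}$ is precisely the commutant of $\A^{S:A|L}$ taken inside $\N$ (the relative commutant), and $\Z^{S:A|L}$ is genuinely the centre $(\A^{S:A|L})' \cap \A^{S:A|L}$, since $\A^{S:A|L} < \N$ forces this intersection to lie in $\N$. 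The jump $\hat V^{\to A}$ will be the unitary implementing the combined decomposition, and each of the four displayed equations reads off one algebra in the resulting basis.

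First I would decompose along the centre. Diagonalising the abelian algebra $\Z^{S:A|L}$ yields its joint eigenspaces, labelled by $k$, with minimal projections $P_k$ splitting $\H^{|L} = \bigoplus_k \H^{\mathrm{tot}}_k$, $\H^{\mathrm{tot}}_k = P_k \H^{|L}$. Because $P_k \in \Z^{S:A|L} < \N$, each $P_k$ is $G$-invariant, so $\hat U^{|L}(g)$ commutes with every $P_k$ and preserves the blocks; this is what makes the central decomposition compatible with the group action. Restricted to $\H^{\mathrm{tot}}_k$, the algebra $\A^{S:A|L}$ has trivial centre, i.e.\ is a factor, and---assuming it is of type~I (see below)---a type~I factor admits a tensor splitting $\H^{\mathrm{tot}}_k \iso \H_k \otimes \K_k$ in which $\A^{S:A|L}$ acts as $\O_k \otimes \hat\id_{\K_k}$. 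The factor $\H_k$ is then the minimal subsystem on which the relational algebra acts fully, which is the content of the \hyperref[ppl:perspective]{perspective principle} and of \cref{eq:QRF_view_relational_algebra_decomposition}.

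It remains to resolve the residual space $\K_k$, which carries both the group action and the commutant. Since $\hat U^{|L}(g)$ commutes with $\A^{S:A|L} = \O_k \otimes \hat\id$, Schur's lemma forces it to act trivially on $\H_k$, hence only on $\K_k$; decomposing $\K_k$ into $G$-irreducibles gives $\K_k \iso \bigoplus_r \H_r \otimes \H_{r,k}$ with $\hat U^{|L}(g)$ acting as $\hat U_r(g)$ on the charge space $\H_r$ and trivially on the multiplicity space $\H_{r,k}$, which is \cref{eq:QRF_view_G_decomposition}. The relative commutant on this block is, again by Schur, the $G$-invariant part of $\hat\id_{\H_k} \otimes \O(\K_k)$, namely $\hat\id_{\H_k} \otimes \bigoplus_r \hat\id_r \otimes \O_{r,k}$, giving \cref{eq:QRF_view_commutant_decomposition}. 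Reassembling the three factors over $k$ and $r$ produces \cref{eq:QRF_view_Hilbert_space_decomposition}. Finally, on each block $\A^{S:A|L}$ and $\A^{\overline{S:A}|L}$ jointly generate all of $\N|_k = \O_k \otimes \bigoplus_r \hat\id_r \otimes \O_{r,k}$, so $\N$ itself is block-diagonal with irreducible blocks; this is the \hyperref[ppl:invariance]{invariance principle}, and it simultaneously shows that both principles are realised independently of one another.

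The algebraic skeleton above is essentially the canonical form of a von Neumann algebra and its commutant refined by a symmetry, and the hard part is not the block structure but the two analytic inputs it rests on. First, one must establish that $\A^{S:A|L}$ is genuinely type~I, so that the passage from ``factor on a block'' to an honest tensor product $\H_k \otimes \K_k$ is justified; von Neumann subalgebras of type~I algebras need not themselves be type~I, so this must be argued from the specific structure of $\mathsf R_A$ rather than merely from $\A^{S:A|L} < \N$. Second, for non-compact $G$ the decomposition into irreducibles is a direct integral rather than a direct sum, the charge label $r$ ranges over a continuum, the factors $\H_r$ and multiplicities $\H_{r,k}$ become measurable fields of Hilbert spaces, and the averaging $\int_G \der g$ defining $\mathsf R_A$ converges only weakly. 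I expect this direct-integral bookkeeping, together with the type~I verification, to be where the real work lies, the block-diagonal form itself following formally once these are in place.
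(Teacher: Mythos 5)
Your proposal follows essentially the same route as the paper's own proof (appendix on representation theory of operator algebras): decompose along the joint eigenspaces of the centre $\Z^{S:A|L}$, split each central block as a tensor product on which the relational algebra acts as a full factor, then decompose the residual (commutant) factor into $G$-irreducibles via Peter--Weyl and identify the $G$-invariant commutant via Schur's lemma. The only real difference is that where you invoke the type-I factor structure theorem as a black box and flag its verification as the hard analytic input, the paper restricts the rigorous claim to finite-dimensional $\H^{|L}$ and compact $G$ --- where every operator algebra is automatically a finite direct sum of matrix algebras, so type~I is vacuous --- and proves the tensor splitting elementarily through a minimal-projector lemma, deferring exactly the issues you raise (type-I verification and direct-integral bookkeeping for non-compact $G$) to a sketched roadmap rather than resolving them.
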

\noindent
The mathematical form of the decomposition in theorem~\ref{thm:QRF_POV}
was first proven in~\cite{Bianchi2024} for general invariant algebras.
In this paper, we focus on the algebra generated by relation operators,
defined in \cref{eq:relational_algebra},
motivated by our principles of \hyperref[ppl:perspective]{perspective}
and \hyperref[ppl:invariance]{invariance}.
Similar decompositions are used \eg\ in quantum information~%
\mbox{\cite{Zanardi2001,Zanardi2004,Knill2000,Bianchi2024,
Vanrietvelde2025,Ormrod2025}}
and lattice gauge theory~%
\cite{Casini2014,VanAcoleyen2016,Huang2020}.
A simpler version of theorem~\ref{thm:QRF_POV}
is used in~\cite{CastroRuiz2025a} to define the points of view of
a special class of ideal QRFs
(see section~\ref{sec:extreme_cases}).

The algebras in
\cref{eq:relational_algebra,eq:commutant_algebra,eq:centre_algebra}
are well-defined and the theorem holds rigorously if
$\dim \H^{|L} < \infty$ and if $G$ is compact;
this is the case discussed in~\cite{Bianchi2024},
and we provide our own proof for the theorem in this setting in
appendix~\ref{app:representation_theory_algebras}.
There, we also discuss the mathematical subtleties arising more generally.
Essentially, there is no conceptual barrier to generalizing
the algebraic structure of the decompositions in theorem~\ref{thm:QRF_POV}
to infinite-dimensional Hilbert spaces and non-compact symmetry groups,
if direct sums are appropriately replaced by direct integrals.
One obtains an analogous structure like in the finite, compact case.
For the physical questions we will consider,
only this algebraic structure matters,
and hence the insights from the finite, compact case are sufficient.
In this work, we do not provide a rigorous mathematical treatment
of the generalizations but sketch a rough roadmap for it in
appendix~\ref{app:representation_theory_algebras}.

Theorem~\ref{thm:QRF_POV} fulfils the
\hyperref[ppl:perspective]{perspective principle}:
transformed relativized POVM elements $\hat M^{|A}$ act non-trivially
only on the minimal subsystem represented by the Hilbert spaces $\H_k$,
because $\A^{S:A|A}$ acts semi-simply,
\ie\ block-diagonally on these spaces,
with the algebras of all operators as blocks.\footnote{
    Note the necessity of the brackets around the direct sum over $r$
    in equation~\eqref{eq:QRF_view_relational_algebra_decomposition}.
}
We denote this subsystem by~$S{:}A|A$.
Generically, it is a \emph{generalized subsystem}~\mbox{\cite{Barnum2004,Rio2015,Chiribella2018}},
consisting of a direct sum or integral of tensor factors.
The theorem also implements the
\hyperref[ppl:invariance]{invariance principle}:
$G$ acts semi-simply on the generalized subsystem $\Gamma|A$
defined by the Hilbert spaces $\H_r$,
and hence any operator which commutes with the global $G$-action
must be of the form $\bigoplus_r \hat\id_r \otimes \hat X_{*,r}$,
where $\hat X_{*,r}$ is an arbitrary operator on $\bigoplus_k \H_{r,k} \otimes \H_k$.
These latter spaces are those on which $\mathsf V^{\to A}[\mathsf G_{AS}[\O_{AS}]]$
acts semi-simply.
Furthermore, $\A^{\overline{S:A}|A}$ acts semi-simply on the subsystem
$\overline{S{:}A}|A$ defined by the Hilbert spaces $\H_{r,k}$.
We call this subsystem the \emph{extra particle}
as it generalizes the corresponding concept from \cite{CastroRuiz2025a}.\footnote{
    The name ``extra particle'' has been introduced in~\cite{CastroRuiz2025a},
    because for \emph{regular} QRFs (a kind of ideal QRF, see section~\ref{sec:extreme_cases})
    and for the centrally extended Galilei group,
    $\A^{\overline{S:A}|A}$ resembles the algebra
    of a quantum particle with variable but superselected mass.
}
We now see why introducing the commutant was necessary:
from $A$'s perspective, the algebra of $G$-invariant operators
is generally larger than $\smash{\A^{S:A|A}}$, including also $\smash{\A^{\overline{S:A}|A}}$.
Note that $\smash{\A^{S:A|A}}$ and  $\smash{\A^{\overline{S:A}|A}}$ \apriori\
do not generate the algebra of all $G$-invariant operators;
we will see, however, that this is the case in some situations.
We will often compare the subsystem $\smash{S{:}A|A}$
to the subsystem $\smash{S}$ as described from the perspective of~$L$;
for simplicity, we label the latter with~$S|L$.
Analogously, $A|L$ denotes the QRF as described from $L$.

If one is only interested in the
\hyperref[ppl:perspective]{perspective principle},
then the decomposition over $r$ as well as the distinction between
subsystems $\Gamma|A$ and $\overline{S{:}A}|A$ associated with it can be ignored,
since the remaining subsystem structure still makes
relational operators act semi-simply;
this could \eg\ be the case if, for some physical reason, only relational
operators are operationally interesting.
Similarly, if one only cares about the
\hyperref[ppl:invariance]{invariance principle}
then the decomposition over $k$ and the associated subsystem decomposition can be ignored;
in the context of QRFs this setting is arguably physically less natural,
because it effectively ignores the QRF entirely.
In this sense, the two principles are independent.\footnote{
    Such a decomposition with two independent parameters occurs generally
    when semi-simply representing an operator algebra,
    which is invariant under some symmetry group~\cite{Bianchi2024};
    see also proposition~\ref{prop:algebra_decomposition_full}
    in appendix~\ref{app:representation_theory_algebras}.
}

Note that the perspective of $A$ is fixed by theorem~\ref{thm:QRF_POV}
only up to basis changes in $\H_r$, $\H_{r,k}$ and $\H_k$.
Depending on the physical context, one choice of basis may be preferred.
\Eg, in example~\ref{exa:relative_coordinates}
it makes sense to require the relative position $\hat x_S^{|L} - \hat x_A^{|L}$
to act like a position operator on the subsystem $X = S{:}A|A$;
that is, multiplicatively on position-space wave functions.\footnote{
    However, this does not completely fix the action of the momentum operator $p^{|A}_S$
    as there can be different operators canonically commuting with
    $\smash{\hat x^{|L}_S - \hat x^{|L}_A}$~%
    \cite{Peres2002,Hamette2025,Doat2025,CastroRuiz2025}.
}

\medskip
\paragraph*{QRF Transformations.}
Consider again two QRFs, $A$ and $B$, observing a system $Q$.
We can obtain jumps $\hat V^{\to A} : \H^{|L} \to \H^{|A}$ and
$\hat V^{\to B} : \H^{|L} \to \H^{|B}$
into the perspectives of $A$ and $B$
from theorem~\ref{thm:QRF_POV}
by treating $BQ$ and $AQ$ as $S$, respectively.

We can chain jumps together, obtaining isomorphisms
\begin{align}
    \hat V^{A\to B} &:= \hat V^{\to B} (\hat V^{\to A})^\dagger
    : \H^{|A} \to \H^{|B}, \label{eq:qrfatob} \\
    \hat V^{B\to A} &:= \hat V^{\to A} (\hat V^{\to B})^\dagger
    : \H^{|B} \to \H^{|A},\label{eq:qrfbtoa} 
\end{align}
and $\hat V^{B\to A} = (\hat V^{A\to B})^\dagger$.
Because $\hat V^{A\to B}$ and $\hat V^{B\to A}$
transform from the perspective of $A$ to that of $B$
and \viceversa\ respectively, they are called \emph{QRF transformations}~%
\cite{Aharonov1984,Merriam2005,Angelo2011,Hamette2020,
Vanrietvelde2020,Glowacki2023,Carette2025,
Hamette2021,Krumm2021,Kabel2025,CastroRuiz2025a,DeVuyst2025,
Ballesteros2025,CastroRuiz2025,Hamette2025}.
They are defined up to unitaries on the relevant subsystems,
since this is already the case for the jumps.

Note that no assumption was made regarding the nature of the reference frame;
the QRF transformation is defined in general.
We will see in section~\ref{sec:extreme_cases}
that we recover the QRF transformation of the so-called
\emph{extra-particle framework}~\cite{CastroRuiz2025a}
for the special class of ideal QRFs considered by them;
our framework thus constitutes a generalization of that framework.
As we show in appendix~\ref{app:comparisons},
mathematically, our framework becomes the perspective-neutral framework
of~\cite{Hamette2021} when restricted to the subspace
where global $G$-transformations act with the trivial representation.

Since different algebras act semi-simply in either frame
(\eg, $\smash{\A^{BQ:A|L} \neq \A^{AQ:B|L}}$ in general),
the subsystem structures of either perspective are generally different.
Furthermore, they are generally different from the subsystem structure
from the perspective of $L$.
This phenomenon is known as \emph{subsystem relativity}~%
\cite{CastroRuiz2025a,AhmadAli2022,Hoehn2023a},
which can lead to \emph{relativity of entanglement and coherence}~%
\cite{Giacomini2019,Hamette2020,Savi2021,Cepollaro2025}.
Related to this, note that~$BQ{:}A|A$ does not generally decompose
into subsystems $B{:}A|A$ and~$Q{:}A|A$,
and similarly from the perspective of $B$.
The subsystems $\Gamma|A$ and $\Gamma|B$ on the other hand
are always isomorphic and furthermore
are mapped onto each other by QRF transformations,
since they both encode the action of $G$ relative to the laboratory.
From the point of view of an observer who cannot use
the laboratory frame, these degrees of freedom are not physically
accessible and may be seen as \emph{gauge}~\cite{CastroRuiz2025a}.

Consequently, the QRF transformation maps $G$-invariant
degrees of freedom onto themselves.
That is, the subsystems on which $G$-invariant operators act,
\ie\ the subsystems obtained by combining $BQ{:}A|A$ with $\overline{BQ{:}A}|A$
and $AQ{:}B|B$ with $\overline{AQ{:}B}|B$,
are isomorphic and mapped into each other by the QRF transformations.
Thus, the QRF transformations $\smash{\hat V^{A\to B}}$ and $\smash{\hat V^{B\to A}}$
can be restricted to the subsystems acted on by $G$-invariant operators and remain invertible,
despite subsystem relativity.

Crucially, in general, the QRF transformation does \emph{not} map
$BQ{:}A|A$ to $AQ{:}B|B$.
This even happens if both QRFs are ideal, as we will see below.
To obtain invertible QRF transformations defined at least on
the subsystems $BQ{:}A|A$ and $AQ{:}B|B$,
\emph{one is forced to also consider the subsystems
$\overline{BQ{:}A}|A$ and $\overline{AQ{:}B}|B$.} 
This is consistent with the observation 
that the extra particle is essential 
for the invertibility of general QRF transformations~\cite{CastroRuiz2025a}.

\section{Extreme Cases}
\label{sec:extreme_cases}

\noindent
Having defined the points of view of QRFs in broad generality
using the \hyperref[ppl:perspective]{perspective}
and \hyperref[ppl:invariance]{invariance principle}, 
which we implemented in theorem~\ref{thm:QRF_POV},
we now explore the properties of such perspectives.
First, we consider the extreme cases of \emph{trivial} and \emph{regular QRFs}.

\medskip
\paragraph*{Trivial QRFs.}
The simplest possible QRF is a trivial one:
$\H_A = \bbC$, $\hat U_A = 1_A \in \bbC$ the trivial representation,
and $\smash{\hat\gamma_A^{|L}(g) = 1/|G|}$ for all $g \in G$.
Such a frame is non-ideal (unless $G = \{e\}$).
In fact, it cannot keep track of orientation at all,
and might as well be completely absent since $\H^{|L} \iso \H_S$.
One easily finds $\A^{S:A|L} = 1_A \cdot \mathsf G_S[\O_S]$
and $\smash{\A^{\overline{S:A}|L} = \Z^{S:A|L}}$.
The decomposition of theorem~\ref{thm:QRF_POV} can be obtained from
the structure of the $G$-twirl in proposition~\ref{prop:G_twirl}
of appendix~\ref{app:representation_theory_groups}:
\begin{equation}
    \H^{|L} \iso \H^{|A} = \bigoplus_k \H_{r=k} \otimes \H_k,
\end{equation}
where $k$ ranges over the inequivalent irreducible representations contained in
$\hat U^{|L} \iso \hat U_S$, $\hat U^{|A}$ acts semi-simply
on the spaces $\H_{r=k}$, and $\A^{S:A|A}$
acts semi-simply on the spaces $\H_k$.
Jumping into $A$'s perspective simply consists of finding a basis
adapted to this decomposition.

Essentially, $S{:}A|A$ appears from $A$'s perspective
as $S|L$ does from that of $L$,
but \emph{superselected} by the charge $r=k$ of the representation.
This is the known superselection observed in the
\emph{absence of a reference frame}~\cite{Bartlett2007},
confirming the intuition that using a trivial QRF is equivalent
to using no frame at all.\footnote{
    More generally, a trivial QRF can be defined as
    $\hat U_A = \hat\id_A$, $\hat\gamma_A^{|L} = \hat\id_A/|G|$,
    but with no restrictions on the dimension of $\H_A$.
    In that case, $\H^{|A} = \bigoplus_k \H_{r=k} \otimes \H'_k \otimes \H_k$,
    where the spaces $\H_k$ and $\H_{r=k}$ are as in the simple case,
    but now $\H'_k \iso \H_A$ and $\A^{\overline{S:A}|A}$ acts semi-simply
    on the spaces $\H'_k$.
    $S{:}A|A$ appears superselected in the same way as before,
    but $\overline{S{:}A}|A$ is bigger.
}

\medskip
\paragraph*{Regular QRFs.}
A specific type of ideal QRF is a \emph{regular QRF}:
$\H_A = L^2(G)$ are wave functions on the group manifold,
$\hat U_A^{|L} = \hat L$ is the left-regular representation
and $\hat\gamma_A^{|L}(g) = \ketbra{g}{g}_A$ are rank-one (possibly improper)
projectors obtained from Dirac distributions on~$G$.
The (possibly improper) states $\ket{g}$ are orthogonal:
$\sbraket{g'}{g} = \delta(g'^{-1}g)$.\footnote{
    Here, $g \mapsto \delta(g'^{-1}g)$ is the Dirac-distribution
    on the group manifold $G$, centred on $g' \in G$.
    See appendix~\ref{app:regular} for more details on regular QRFs.
}
Regular QRFs have been widely used in the literature~%
\cite{Aharonov1984,Angelo2011,Giacomini2019,Vanrietvelde2020,Hamette2020,
Hamette2021,Glowacki2023a,CastroRuiz2025a,CastroRuiz2025}.
They are practical due to their mathematical simplicity
and because they are the most straightforward quantum extension
of classical frames: a classical frame has definite
and perfectly distinguishable orientations $\smash{g_A^{|L} \in G}$,
and when one further allows for quantum superpositions,
a regular QRF is obtained.
The QRFs of examples~\ref{exa:regular_position}
and~\ref{exa:relative_coordinates} are regular.

Regular QRFs for infinite groups have infinite resources
in the following sense:
according to the (generalized) \emph{Peter-Weyl theorem},
proposition~\ref{prop:Peter_Weyl}
in appendix~\ref{app:representation_theory_groups},
the group action on a regular QRFs contains all irreducible representations.
For example, a regular QRF for $\SU(2)$ would
contain arbitrarily high angular momentum.
As another example, for a regular QRF for $\gpU(1)$ serving as a periodic clock
the representation charge is energy~\cite{Peres1980},
which would need to be infinitely spread.

Jumps into regular QRFs and hence also the perspective of such frames
have been extensively described in~\cite{CastroRuiz2025a},
so we just briefly outline how their results
derive from our more general theorem~\ref{thm:QRF_POV}.
Central to the result is the observation
(see proposition~\ref{prop:jump} in appendix~\ref{app:regular})
that the unitary map $\hat W : \H_A \otimes \H_S \to \H_A \otimes \H_S$
defined as
\begin{equation}
    \hat W := \int_G \der g\, \sketbra{g}{g}_A \otimes \hat U_S(g)
    \label{eq:regular_inverse_jump_core}
\end{equation}
induces the algebra isomorphism relations
\begin{align}
    \hat\id_A \otimes \O_S &\iso \A^{S:A|L},
    \label{eq:regular_OS_A_isomorphism} \\
    \mathsf G_A[\O_A] \otimes \hat\id_S &\iso \A^{\overline{S:A}|L}, \\
    \hat U_A \otimes \hat\id_S &\iso \hat U_A \otimes \hat U_S.
\end{align}
Using propositions~\ref{prop:Peter_Weyl} and~\ref{prop:G_twirl}
from appendix~\ref{app:representation_theory_groups},
we furthermore obtain an isomorphism
$\hat X_A : \H_A \to \bigoplus_r \H_r \otimes \H_{*,r}$,
inducing the isomorphism relation $\mathsf G_A[\O_A] \iso
\bigoplus_r \hat\id_r \otimes \O_{*,r}$,
and $\hat U_A \iso \bigoplus_r \hat U_r \otimes \hat\id_{*,r}$,
where $r$ enumerates the inequivalent irreducible representations
$\hat U_r$ of $G$ and $\H_{*,r} \iso \H_r$.\footnote{
    Here, ``$*$'' is simply a decoration to distinguish the two factors.
     However, although unimportant for us presently,
     $\H_{*,r}$ is also isomorphic to the complex conjugate of $\H_r$,
     see proposition~\ref{prop:Peter_Weyl}.
}
Thus, the isomorphism $\hat V^{\to A} := (\hat X_A \otimes \hat\id_S)\,
\hat W^\dagger$ induces the isomorphism relations
\begin{alignat}{3}
    \H^{|L} &\iso
    \Bigl(\bigoplus\nolimits_r \H_r \,&\otimes\; \H_{*,r}\Bigr)
    &\otimes \H_S, \label{eq:QRF_view_Hilbert_space_decomposition_ideal}\\
    \A^{S:A|L} &\iso
    \Bigl(\bigoplus\nolimits_r \hat\id_r \,&\otimes\; \hat\id_{*,r}\Bigr)
    &\otimes \O_S, \label{eq:QRF_view_relational_algebra_decomposition_ideal}\\
    \A^{\overline{S:A}|L} &\iso
    \Bigl(\bigoplus\nolimits_r \hat\id_r \,&\otimes\; \O_{*,r}\Bigr)
    &\otimes \hat\id_S, \label{eq:QRF_view_commutant_decomposition_ideal}\\
    \hat U^{|L} &\iso
    \Bigl(\bigoplus\nolimits_r \hat U_r \,&\otimes\; \hat\id_{*,r}\Bigr)
    &\otimes \hat\id_S. \label{eq:QRF_view_G_decomposition_ideal}
\end{alignat}
Clearly, this is of the form described by theorem~\ref{thm:QRF_POV}.
Hence, $\hat V^{\to A}$ is a jump into the perspective of $A$.
Note that $\A^{\overline{S:A}|L}$ is non-trivial and generates the
algebra of $G$-invariant operators together with $\A^{S:A|L}$.

Note also that there is no sum over $k$,
indicating a trivial centre $\Z^{S:A|L}$.
Consequently, $S{:}A|A$ is described by the same Hilbert space $\H_S$
as the subsystem $S|L$, despite representing different physical degrees of freedom.
Essentially, the \emph{structure} of the subsystem $S{:}A|A$ is identical
to that of $S|L$.
For example, if $S$ is a quantum particle described from $L$,
then its relational degrees of freedom with respect to the QRF $A$,
will ``look like'' those of a particle also from the perspective of $A$.
This is a consistency result:
since regular QRFs in definite-orientation states~$\ket{g}$ are quantum
realizations of classical frames,
and since any classical reference frame is equivalent to the laboratory frame,
one expects that regular QRFs describe $S$ in the same way as $L$ does.
Particularly, the subsystem structure of $S$ is recovered in $S{:}A|A$:
if $S|L$ consists of two subsystems $S_1$ and $S_2$,
\ie\ $\H_S = \H_{S_1} \otimes \H_{S_2}$,
then $S{:}A|A$ also splits into two subsystems,
isomorphic to $S_1$ and $S_2$ respectively
(see proposition~\ref{prop:regular_subsystem_structure}
in appendix~\ref{app:regular}).

As we stated above, $\smash{\hat V^{\to A}}$ is unique only up to basis changes
within each tensor factor of the decompositions in theorem~\ref{thm:QRF_POV}.
The choice we made here, particularly, the definition of $\smash{\hat W}$ in equation 
\eqref{eq:regular_inverse_jump_core} and the isomorphism $\hat X_A$ from proposition~\ref{prop:Peter_Weyl}, leads to
\begin{equation}
    \hat V^{\to A} \ket{g}_A \ket{\psi}_S
    = \hat X_A \ket{g}_A \hat U^\dagger_S(g) \ket{\psi}_S.
\end{equation}
Essentially, $\hat V^{\to A}$ acts like a classical reference frame
transformation when acting on states with definite orientation 
relative to $L$, as expected.

Let us consider two regular QRFs $A$ and $B$, and a system of interest $Q$.
If we define jumps $\hat V^{\to A}$ and $\hat V^{\to B}$
as described above, and such that $\hat V^{\to B}
= \mathsf T_{AB}[\hat V^{\to A}]$, where $\mathsf T_{AB}$
exchanges the labels $A$ and $B$,
then the QRF transformation from the perspective of $A$ to that of $B$ is
\begin{multline}
    \hat V^{A\to B} :
    \Bigl(\bigoplus_r \H_r \otimes \H_{*,r} \Bigr) \otimes \H_B \otimes \H_Q \\
    \longrightarrow
    \Bigl(\bigoplus_r \H_r \otimes \H_{*,r} \Bigr) \otimes \H_A \otimes \H_Q,
\end{multline}\vspace{-1em}
\begin{multline}
    \hat V^{A\to B} = \int_G\der g\,
    \Bigl( \bigoplus_r \hat\id_r \otimes \hat U^\dagger_{*,r}(g) \Bigr) \\
    \otimes \sket{g^{-1}}_A \sbra{g}_B \otimes \hat U^\dagger_Q(g)
    \label{eq:regular_QRF_transformation}
\end{multline}
(see proposition~\ref{prop:regular_QRF_transformation}
in appendix~\ref{app:regular}).
Up to relabelling of subsystems, this is the QRF transformation
of~\cite{CastroRuiz2025a},
and there it was shown that $\hat V^{A\to B}$ does not map
$BQ{:}A|A$ to $AQ{:}B|B$,
forcing us to also consider the degrees of freedom in the extra particles.
Note that when acting on states where $B{:}A|A$ has a definite
orientation, $\hat V^{A \to B}$ becomes a classical reference frame transformation.

\medskip
\section{Non-Ideal General Case}
\label{sec:non_ideal_case}

\noindent
Having studied the extreme cases, we now turn to the cases in-between:
a non-ideal but non-trivial QRF.
Although ideal QRFs are a very useful idealization,
non-ideal QRFs are often more realistic
since they tend to have finite resources~%
\cite{Fleming1973,Peres1980,Mandelstam1991,Bartlett2007,Loveridge2018}.
For example, an ideal QRF must have as many perfectly distinguishable
states as there are elements in $G$,
because measuring the orientation perfectly distinguishes the states
$\{\hat\rho_A^{|L}(g)\}_{g\in G}$.
Any finite-dimensional QRF for infinite $G$ is, thus, non-ideal.
And even when allowing infinite dimensions,
as we have seen for regular QRFs, ideal QRFs for uncountable $G$ 
require highly unphysical improper states.
This was the case in examples~\ref{exa:regular_position}
and~\ref{exa:relative_coordinates}, which required position eigenstates.
Let us consider a typical non-ideal QRF:
\begin{example}{}{Galilei}
    A quantum particle $\H_A = L^2(\bbR)$ of mass $m$ transforms under
    the one-dimensional Galilei group $\group{Gal}$
    (translations by $a$ and Galilei boosts by $v$)
    with the mass-$m$ representation
    $\hat U_A(a,v) = \ex^{-\i a \hat p_A + \i v m \hat x_A - \i v t\hat p_A}$~%
    \cite{Bargmann1954,Ballentine2014}.
    Due to Heisenberg uncertainty $\Delta x \Delta \dot x \geq 1/2m$
    (assuming $p = m\dot x$),
    $A$ makes at most a non-ideal QRF for the Galilei group.
    In the limit $m \to \infty$, we can obtain an ideal QRF.%
    \textsuperscript{\ref{ft:Galilei}}
\end{example}
\footnotetext{
    \label{ft:Galilei}%
    Since the mass-$m$ representation is projective,
    we must strictly speaking consider the
    \emph{centrally extended Galilei group} $\group{CGal}$
    to obtain unitary representations.
    See appendix~\ref{app:Galilei} for more details,
    including a derivation of this example.
}

\medskip
\paragraph*{Range of Superselection.}
We saw how a trivial QRF~$A$, which is equivalent to not using a frame at all,
is only able to resolve the $G$-invariant degrees of freedom in~$S$,
resulting in a maximal superselection of the subsystem $S{:}A|A$.
Meanwhile, for a regular QRF, $k$ 
takes on a single value in theorem~\ref{thm:QRF_POV},
and hence there is no superselection in the subsystem $S{:}A|A$;
in fact, it is even isomorphic to $S|L$.

In general, the sum over $k$ is non-trivial,
and we conclude that \emph{non-ideality of the QRF $A$ leads to
some amount of superselection of $S{:}A|A$}.
Between the two previously known extremes of trivial and regular QRFs~%
\mbox{\cite{Aharonov1967,Bartlett2007,CastroRuiz2025a}},
we now discover an entire \emph{range of superselection
due to non-ideality}.
Note that this superselection is a feature of the structure
of $S{:}A|A$ and independent of the particular total quantum state.

\medskip
\medskip
\paragraph*{Role of the Relationalization Map.}
The structure of $A$'s perspective depends on the algebra structure
of $\A^{S:A|L}$, which, according to \cref{eq:relational_algebra},
is, in turn, derived from the relationalization map $\mathsf R_A$.
Although most features of $A$'s perspective are not obvious
when considering $\mathsf R_A$ directly,
some conclusions are still possible.

For example, $S{:}A|A$ is isomorphic to $S|L$ if and only if~$\mathsf R_A$
is an algebra isomorphism; this is the case for regular QRFs.
In the non-ideal case however, $\mathsf R_A$ might not even be an algebra homomorphism,
that is,
\begin{equation}
    \mathsf R_A[\hat A_S] \mathsf R_A[\hat B_S]
    \neq \mathsf R_A[\hat A_S \hat B_S]
    \label{eq:R_no_homomorphism}
\end{equation}
for some $\hat A_S, \hat B_S \in \O_S$.
In fact, it is easy to show that equality in \cref{eq:R_no_homomorphism}
for all possible systems $S$ and all $\hat A_S, \hat B_S \in \O_S$
requires $\hat\gamma_A^{|L}(g) \hat\gamma_A^{|L}(g') =
\delta(g'^{-1}g) \hat\gamma_A^{|L}(g)$
for all $g,g' \in G$,\footnote{
    Here, $\delta$ is the Dirac distribution on the group manifold $G$;
    see appendix~\ref{app:representation_theory_groups}.
    Explicitly, one can see the requirement by taking $S$
    to carry the left-regular representation.
    Then, for $g,g' \in G$ with $g \neq g'$,
    $\mathsf R_A[\ketbra{g}{g}_S] \mathsf R_A[\sketbra{g'}{g'}_S]
    = \mathsf R_A[\sket{g}\!\sbraket{g}{g'}\!\sbra{g'}]
    = \mathsf R_A[0] = 0$ only if $\hat\gamma_A^{|L}(g)\hat\gamma_A^{|L}(g')
    = 0$.
}
which implies that $A$ must be ideal.

\Cref{eq:R_no_homomorphism} implies that $[\hat A_S, \hat B_S] = 0$
does not generally lead to $\bigl[ \mathsf R_A[\hat A_S], \mathsf R_A[\hat B_S] \bigr] = 0$
(see proposition~\ref{prop:algebra_decomposition_S_Abelian} in appendix~\ref{app:POVs_Abelian}
for a complete characterization of this phenomenon for Abelian symmetry groups).
This illustrates how ``the system as seen by $A$'', the subsystem $S{:}A|A$,
may be quite different from ``the system as seen by $L$'', the subsystem $S|L$.

\medskip
\paragraph*{Accessible Degrees of Freedom.}
If $A$ is non-ideal, it may be the case that
\begin{equation}
    \mathsf R_A[\O_S] \subsetneqq \A^{S:A|L}
    \label{eq:back_reaction_condition}
\end{equation}
because $\mathsf R_A[\O_S]$ is not guaranteed to be an algebra;
\cref{eq:R_no_homomorphism} is necessary for this.
Essentially, \cref{eq:back_reaction_condition} means that not every degree of freedom
in the subsystem $S{:}A|A$ is a relational operator, and hence describable
operationally in the sense of figures~\ref{fig:relational_POVM} or~\ref{fig:swmeasurement}.
We will see this situation explicitly with example~\ref{exa:qubit_qutrit}
in section~\ref{sec:POVs_Abelian}.

Notably, an operational interpretation of \cref{eq:back_reaction_condition},
extending that of figure~\ref{fig:relational_POVM}, is conceivable
if we interpret operators in $\mathcal{A}^{S:A|L}$,
including those that are not of the relational form,
as Kraus operators corresponding to quantum operations.
This is consistent with the compatibility of the notions of subsystem
from the point of view of operator algebras and of quantum operations
pointed out in~\cite{Chiribella2018}.
Because the composition of quantum operations implies the multiplication of Kraus operators,
operators in $\mathcal{A}^{S:A|L} \backslash \mathsf R_A[\O_S]$
acquire a natural interpretation in terms of successive operations
performed on the system $S$ relative to the QRF $A$.\footnote{
    Since the Kraus decomposition of a channel is not unique,
    since Kraus operators are not directly measurable quantities,
    and since the operational meaning of channel composition is subtle
    in itself as it relies on independence assumptions~\cite{Finetti1970},
    the operational meaning of Kraus operators is less immediate than
    that of relative POVM elements.
}
This line of thinking provides strong evidence
that in some contexts it may be appropriate to take
the whole algebra $\A^{S:A|L}$ as operationally relevant,
as opposed to just $\mathsf R_A[\O_S]$.
That composition of relational operations leads to a larger set of operations,
\ie\ \cref{eq:back_reaction_condition},
can then be interpreted as a consequence of the \emph{back-reaction}
of $S$ onto the QRF $A$ as described from $L$'s perspective:
intuitively, because $A$ is non-ideal, it will be disturbed by
operations performed on other systems relative to it,
and hence a subsequent relational operation can detect effects of this back-reaction;
this effect only occurs if at least two operations are performed successively.

\medskip
\paragraph*{Outcome Probabilities.}
The effect of non-ideal QRFs can also be   seen 
by considering outcome probabilities:
if $\hat\sigma^{|L} = \hat\rho_A \otimes \hat\varsigma_S$
is a product state described from the perspective of $L$,
and $\hat M^{|L}(i) = \mathsf R_A\bigl[\hat m_S^{|L}(i)\bigr]$
is a relational POVM element,
then one easily finds that the probability $p_i$ of obtaining
the outcome $i$ is~\cite{Garmier2023,Loveridge2018,Carette2025}
\begin{gather}
    p_i = \tr_{AS}\bigl( \hat M^{|L}(i) \hat\sigma^{|L} \bigr)
    = \tr_S\bigl( \hat m^{|L}_S(i) \hat\varsigma'_S \bigr), \notag\\
    \hat\varsigma'_S = \!\int_G\der g\,
    p(g)\, \mathsf U_S^\dagger(g)[\hat\varsigma_S],
    \quad p(g) = \tr_A\bigl( \hat\gamma_A^{|L}(g) \hat\rho_A \bigr).
    \label{eq:fuzziness}
\end{gather}
We see that $p_i$ can be computed directly from the underlying
POVM element and an \emph{effective state} $\hat\varsigma'_S$.
The latter is a mixture of the ensemble
$\smash{\bigl\{p(g), \, \mathsf U_S(g)[\hat\varsigma'_S]\bigr\}_{g \in G}}$,
where $p(g)$ is the probability distribution of $A$'s orientation.
\Ie, $\hat\varsigma_S'$ is a mixture of transformed versions of
$\hat\varsigma_S$,
and so it typically more mixed than $\hat\varsigma_S$, if $A$ is non-ideal, as in this case
the probability distribution $p(g)$ must be spread-out;
otherwise, the state $\hat\rho_A$ would have definite orientation,
contradicting non-ideality.

Figure~\ref{fig:Galilei_fuzziness} shows $p(g)$
for the non-ideal Galilei QRF from example~\ref{exa:Galilei},
in the case where the state from the laboratory perspective is of the form
$\hat\rho_A = \frac{1}{\sqrt{2}} \bigl(\ket{g} + \sket{g'}\bigr) \times (\text{h.c.})$.
Unlike in the ideal case, the peaks of $p(g)$ have finite width
and interference appears, both due the non-orthogonality of the states
$\ket{g}$~\cite{Garmier2023}.
The interference can in principle be accessed by relational operators.\footnote{
    Interference effects in $p(g)$ only translate to interference effects in $p_i$
    if $S$ sufficiently breaks the symmetry of $G$;
    for example, if $S$ carries the regular representation
    and $S$ is prepared in an orientation state,
    then it is always the case.
}
If instead $\frac{1}{2}(\ketbra{g}{g} + \ketbra{g^\prime}{g^\prime})_A
\otimes \hat \varsigma_S$,
then no interference occurs, implying that relational operators can distinguish
between coherent superpositions and probabilistic mixtures of QRF orientations.
In contrast, this distinction is not possible if $A$ were ideal~\cite{Carette2025},
suggesting that the perspective of non-ideal frames is more than a
``blurred'' versions of the perspective of ideal frames;
we will expand on this point in section~\ref{sec:POVs_Abelian}.

Our approach goes beyond mere outcome probabilities,
as it defines the perspective of a QRF through subsystem structures.
This allows us to understand the effect of using a non-ideal QRF
in more detail, through superselection and back-reaction.

\begin{figure}[!tb]
    \centering
    \includegraphics{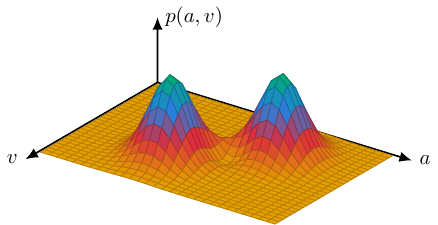}
    \caption{%
        Take example~\ref{exa:Galilei},
        additionally assuming rank-one orientation POVM elements
        $\hat\gamma^{|L}_A(g) = \sketbra{g}{g}$, $g = (a,v)$,
        given by minimal-uncertainty Gaussian states centred
        around $g$ in phase-space.
        If $\hat\rho_A = \frac{1}{\sqrt{2}} \bigl(\ket{g} + \sket{g'}\bigr)
        \times (\text{h.c.})$
        is a superposition of two orientation POVM states
        then we observe two peaks of width $\propto 1/\sqrt{m}$
        as well as wave-like interference between the peaks
        in the probability distribution $p(g)$~\cite{Garmier2023}.
        Figure adapted from~\cite{Garmier2023},
        see appendix~\ref{app:Galilei} for details.
    }
    \label{fig:Galilei_fuzziness}
\end{figure}

\medskip
\paragraph*{QRF Jumps.}
Obtaining the \emph{explicit} form of the QRF jump $\hat V^{\to A}$ is generally difficult,
except for simple situations such as the cases of trivial or regular QRFs studied above,
since finding the explicit decompositions over $k$ and $r$
heavily depends on the systems $A$ and $S$.
The naive idea of replacing $\sketbra{g}{g}_A$ with $\hat\gamma_A^{|L}(g)$
in the map $\hat W$ defined in \cref{eq:regular_inverse_jump_core}
fails because the resulting operator is generally not unitary;
for example, this can be easily seen if $A$ is a trivial QRF.

Nevertheless, the map $\hat W$, as defined in \cref{eq:regular_inverse_jump_core}, 
is still useful beyond regular QRFs. As shown in appendix~\ref{app:embedding},
if the QRF $A$ has rank-one POVM elements, $\hat\gamma_A^{|L}(g) = \frac{1}{c}\ketbra{g}{g}_A$,
$c>0$, then we can \emph{embed} the QRF $A$ into a regular QRF $\tilde A$ with the map
\begin{equation}
        \hat E : \H_A \to \H_{\tilde A},
        \quad\hat E := \frac{1}{\sqrt{c}} \int_G \der g \,
        \ket{g}_{\tilde A} \bra{g}_A.
        \label{eq:embedding}
\end{equation}
Applying $\mathsf W^\dagger[\;\cdot\;] := \hat W^\dagger [\;\cdot\;] \hat W$,
we get
\begin{align}
    \mathsf W^\dagger \circ (\mathsf E \otimes \id_S)
    \bigl[\mathsf R_A[\hat m_S]\bigr]
    &= \hat\pi\bigl( \hat\id_{\tilde A} \otimes \hat m_S \bigr)\hat\pi,
    \label{eq:almost_jump} \\
    \mathsf W^\dagger \circ (\mathsf E \otimes \id_S)
    \bigl[ \hat U^{|L}(g) \bigr]
    &= \hat\pi\bigl( \hat L_{\tilde A}(g) \otimes \hat\id_S \bigr)\hat\pi,
    \label{eq:almost_jump_reorientation}
\end{align}
where $\mathsf E[\;\cdot\;] = \hat E [\;\cdot\;] \hat E^\dagger$,
and $\hat\pi$ is an orthogonal projector on $\H_{\tilde A} \otimes \H_S$
which commutes with $\hat L_{\tilde A} \otimes \hat\id_S$
(see proposition~\ref{prop:general_jump} in appendix~\ref{app:embedding} for details).
A particular case of this result, restricted to the trivial representation sector,
was first found in~\cite{Hamette2021}.
Since $\hat\pi$ commutes with the global action of $G$,
it can be written in block diagonal form with respect to the $G$-representations
labelled by $r$, \ie\ $\hat\pi \cong \oplus_r \hat\pi_r$.
Then, the right-hand sides of \cref{eq:almost_jump,eq:almost_jump_reorientation}
are already compatible with the direct sum decomposition over $r$
in theorem~\ref{thm:QRF_POV}.
Thus, we \emph{almost} succeed at making relational operators act semi-simply,
were it not for the $r$-sectors where $\hat \pi_r$ is neither zero nor the identity.
Thus, to complete the QRF jump to $A$'s perspective,
we must merely find the remaining decomposition over $k$ for each
of these $r$-sectors separately.
If $G$ is Abelian, then $\hat\pi$ is simple enough so that the
complete perspective of $A$ can be obtained, essentially,
by embedding and applying $\mathsf W$;
we discuss this in section~\ref{sec:POVs_Abelian}.

More generally, \cref{eq:almost_jump,eq:almost_jump_reorientation}
can provide insights about the QRF perspective beyond the Abelian case.
Specifically, one can show (see proposition~\ref{prop:general_jump}) that for regular QRFs,
$\hat\pi = \hat\id_{\tilde AS}$. Therefore, to understand non-ideality,
it is enough to focus on those $r$-sectors where $\hat\pi_r$ is not the identity,
as in all other sectors~$\hat\pi$ acts as in the regular case.
For example, if $S = S_1S_2$ is bipartite from the laboratory perspective,
then $S{:}A|A$ is bipartite in those $r$-sectors where $\hat\pi_r$ is the identity,
but not necessarily in other sectors
(see also \cite{AhmadAli2022,Hamette2021} for a similar discussion in the perspective-neutral framework).
As $A$ tends to a regular frame, the $r$-sectors where $\hat\pi$ acts as the identity
become more significant, and $S{:}A|A$ tends to a bipartite system;
for a regular QRF, $S{:}A|A$ is bipartite in all $r$-sectors,
in accordance with our discussion of the regular case in section~\ref{sec:extreme_cases}.

\section{QRF Perspectives for Abelian Symmetry Groups}
\label{sec:POVs_Abelian}

\noindent
If $G$ is Abelian and $A$ a QRF with rank-one POVM elements
then the QRF jump $\hat V^{\to A}$ and the perspective of the QRF $A$
can be computed explicitly.
For simplicity, we only discuss the results for $G = \gpU(1)$ here,
while the theory for any Abelian symmetry group
is developed in appendix~\ref{app:POVs_Abelian}.

We view $\gpU(1) = (\bbR/2\pi,+)$ as addition modulo $2\pi$,
and we write the orientation POVM elements as
$\hat\gamma^{|L}_A(\theta) = \frac{1}{c} \ketbra{\theta}{\theta}_A$, $c > 0$.
The irreducible representations of $\gpU(1)$ are one-dimensional and given by
$\gpU(1) \ni \varphi \mapsto \ex^{\i r \varphi} \in \bbC$, for $r \in \bbZ$;
$r = 0$ is the trivial representation.
Physically,~$\gpU(1)$ can represent the rotation of an object
around a fixed axis, in which case the charge $r$ labels the
\emph{angular momentum} along that axis.
As mentioned earlier, $\gpU(1)$ can also model a periodic quantum clock,
with $r$ labelling its energy levels~\cite{Peres1980}.

The tensor product of two irreducible representations is simply
$\ex^{\i r_1 \varphi} \ex^{\i r_2 \varphi} = \ex^{\i(r_1+r_2)\varphi}$,
hence angular momentum addition is normal addition of the angular momentum
values.\footnote{
    Contrast this with spin-addition in the non-Abelian
    group $\SU(2)$.
}
With $\sigma_A$, $\sigma_S$ and $\sigma_{AS}$ we denote the
\emph{angular momentum spectra} of the systems $A$, $S$ and $AS$
as seen from $L$, \ie\ the sets of angular momentum values
labelling the irreducible representations occurring in those systems.

In this case, the ansatz in \cref{eq:almost_jump} works completely.
One can show (see proposition~\ref{prop:POV_Abelian}
in appendix~\ref{app:POVs_Abelian})
that we obtain the perspective of $A$ in accordance with
theorem~\ref{thm:QRF_POV} if we set
\begin{gather}
    \hat V^{\to A} := \hat W^\dagger (\hat E \otimes \hat\id_S), \label{eq:u1jump1}\\
    \H^{|A} := \hat V^{\to A} \H^{|L} \subset \H_{\tilde A} \otimes \H_S, \label{eq:u1jump2}
\end{gather}
where $\tilde A$ is a regular QRF into which $A$ is embedded
(see proposition~\ref{prop:embedding} in appendix~\ref{app:embedding})
and where $\hat W$ is defined in \cref{eq:regular_inverse_jump_core}.
To describe the perspective of $A$ we define
\begin{align}
    \bb\kappa &: \sigma_{AS} \to \mathcal{P}(\sigma_S), \notag\\
    \qquad \bb\kappa(r) &:= \sigma_S \cap (r - \sigma_A),
    \label{eq:kappa}
\end{align}
where $\P(\;\cdot\;)$ is the power set and  ``$-$'' denotes
Minkowski set subtraction.
With this, the perspective of $A$ is
\begin{alignat}{4}
    \H^{|A} &= \bigoplus\nolimits_{\bb k\in\bb\kappa(\sigma_{AS})}
    &\H_{\bb\kappa^{-1}(\bb k),\tilde A} &\otimes \H_{\bb k,S},
    \label{eq:main_Hilbert_space_A_Abelian}\\
    \A^{S:A|A} &= \bigoplus\nolimits_{\bb k\in\bb\kappa(\sigma_{AS})}
    &\hat\id_{\bb\kappa^{-1}(\bb k),\tilde A} &\otimes \O_{\bb k,S},
    \label{eq:main_relative_operators_A_Abelian}\\
    \A^{\overline{S:A}|A} &= \bigoplus\nolimits_{\substack{
        \bb k\in\bb\kappa(\sigma_{AS})\\r\in\bb\kappa^{-1}(\bb k)}}
    &\O_{r,\tilde A} &\otimes \hat\id_{\bb k,S},
    \label{eq:main_commutant_A_Abelian}\\
    \hat U^{|A}(\varphi) &= \bigoplus\nolimits_{\substack{
        \bb k\in\bb\kappa(\sigma_{AS})\\r\in\bb\kappa^{-1}(\bb k)}}
    &\ex^{\i r\varphi}\, \hat\id_{r,\tilde A} &\otimes \hat\id_{\bb k,S},
    \label{eq:main_representation_A_Abelian}
\end{alignat}
where ``$\H_{\bb x,Y}$'' denotes the subspace of a system $Y$
with angular momentum contained in the set $\bb x$,
and analogously ``$\H_{r,Y}$'' is the subspace with a single angular
momentum value $r$. We see that the function $\bb\kappa$ completely
determines the structure of the perspective of $A$.
Figure~\ref{fig:charge_diagram} illustrates $\bb\kappa$ for two examples.
Note in particular that the simultaneous eigenspaces of $\Z^{S:A|A}$
are labelled by the sets $\bb k \in \bb\kappa(\sigma_{AS})$
and that the $G$-invariant operators are jointly generated by
$\smash{\A^{S:A|A}}$ and $\smash{\A^{\overline{S:A}|A}}$.
Finally, a relational POVM element
$\hat M^{|L} = \mathsf R_A\bigl[\hat m_S^{|L}\bigr]$ becomes
\begin{equation}
    \hat M^{|A}
    = \bigoplus\nolimits_{\bb k \in \bb\kappa(\sigma_{AS})}
    \hat\id_{\bb\kappa^{-1}(\bb k),\tilde A}
    \otimes \hat\pi_{\bb k,S}\, \hat m_S^{|L}\, \hat\pi_{\bb k,S}
    \label{eq:main_relative_POVM_A_Abelian}
\end{equation}
from the perspective of $A$,
with $\hat\pi_{\bb k,S}$ the orthogonal projector onto $\H_{\bb k,S}$.

From \cref{eq:main_relative_operators_A_Abelian} we see that
the subsystem $S{:}A|A$ is a direct sum of sectors isomorphic to ``pieces''
$\H_{\bb k,S} \subset \H_S$ of the system $S|L$,
which may furthermore intersect non-trivially.
This is the superselection we discussed in
section~\ref{sec:non_ideal_case}.
If $\sigma_A$ is smaller than $\sigma_S$
(figure~\ref{fig:charge_diagram_2}),
then none of the sectors is isomorphic to $S|L$,
and, hence, $S{:}A|A$ is itself not isomorphic to $S|L$.
In the opposite case (figure~\ref{fig:charge_diagram_1}),
one of the sectors in $S{:}A|A$, say $\bb k_0$, is isomorphic to $S|L$
(the thick middle sector in the figure).
The larger $\sigma_A$ becomes, the more charges are contained in
$\bb\kappa^{-1}(\bb k_0)$, and the other sectors, not isomorphic to $S|L$,
contain only extreme charges.
Hence: \emph{the larger $\sigma_A$,
the closer $S{:}A|A$ is to being isomorphic to $S|L$}
in the sense that deviations from the ideal case can only be seen at high
absolute angular momentum values.

Let us apply this result to a specific situation.
Suppose that $A$ is a clock, \ie\ a periodic QRF for time translations.
Furthermore, assume that, from the perspective of $L$,
the total state on $AS$ has ``low energy'':
that is, its support is restricted to sectors where $S{:}A|A$
is isomorphic to $S|L$.
In this case, the time evolution of $S{:}A|A$ behaves as if $A$ were an ideal clock.
Conversely, for general states with support on 
``high energy'' sectors, deviations from ideal time evolution will be detected from
the perspective of $A$.
This is a signature of the finite resources of the QRF.
In the limit where $A$ becomes regular, $\sigma_A = \bbZ$,
and hence only the single sector $\bb k_0 = \bb\kappa(r) = \sigma_S$
for all $r \in \bbZ$ survives in $S{:}A|A$.
This is in line with the general theory for regular QRFs
we discussed above in section~\ref{sec:non_ideal_case}.

The most general operator in $\A^{S:A|A}$ has the form
\begin{equation}
    \hat A^{|A} = \bigoplus\nolimits_{\bb k \in \bb\kappa(\sigma_{AS})}
    \hat\id_{\bb\kappa^{-1}(\bb k),\tilde A}
    \otimes \hat A_{\bb k,S},
\end{equation}
where $\hat A_{\bb k,S}$, $\bb k \in \bb\kappa(\sigma_{AS})$,
are \emph{arbitrary} operators.
In contrast, a general relational operator $\hat B^{|A}
= \mathsf V^{\to A} \circ \mathsf R_A[\hat b_S]$ has the \emph{less general}
form given by \cref{eq:main_relative_POVM_A_Abelian},
\ie\ $\hat A_{\bb k,S} = \hat\pi_{\bb k,S}\, \hat b_S\, \hat\pi_{\bb k,S}$.
As we discussed in section~\ref{sec:non_ideal_case},
there is generally a discrepancy between relational operators
and the algebra they generate.
Whenever this discrepancy occurs, relational operators are spread over multiple sectors
of the centre, while $\A^{S:A|L}$ also contains operators
which are constrained to a single sector,
\eg\ the orthogonal projector $\hat\id_{\bb\kappa^{-1}(\bb k),\tilde A}
\otimes \hat\id_{\bb k,S}$ onto a single sector $\bb k$ of the centre.
Consequently, it is in general not possible to determine 
the sector of the centre from measuring a single relational POVM 
if $A$ is non-ideal.

\begin{figure}[!h]
    \vspace{-1em}
    \subfigure[$\sigma_A \supset \sigma_S$.]{\includegraphics{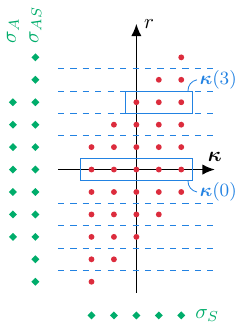}
    \label{fig:charge_diagram_1}}
    \hfill
    \subfigure[$\sigma_A \subset \sigma_S$.]{\includegraphics{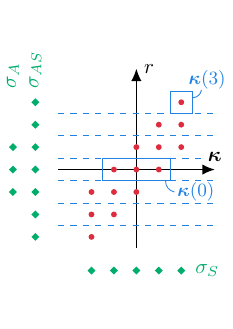}
    \label{fig:charge_diagram_2}}
    \vspace{-0.5em}
    \caption{
        The set $\bb\kappa(r)$ as a function of total angular momentum $r$
        for two example cases (a) and (b):
        a point at vertical position $r$ and horizontal position $z$
        indicates that $z \in \bb\kappa(r)$.
        Simultaneous eigenspaces of $\Z^{S:A|A}$ are labelled by such sets
        and indicated as blue dashed blocks;
        the representation charges contained in the block can be read off
        the vertical axis.
        \Eg, the central block in (a) spans three values of $r$,
        \ie\ the corresponding simultaneous eigenspace of the centre
        carries these three values of angular momentum.
    }
    \vspace{-1.5em}
    \label{fig:charge_diagram}
\end{figure}

Let us illustrate these insights with a concrete example
(see appendix~\ref{app:qutrit_qubit} for detailed calculations):
\begin{example}{}{qubit_qutrit}
    Let $\H_A = \bbC^3$ and $\H_B = \bbC^2$
    both be QRFs for $G = \gpU(1)$
    with rank-one orientation POVMs
    and with angular momentum spectra $\sigma_A = \{-1,0,1\}$
    and $\sigma_B = \{-1,1\}$, respectively.
    We consider no additional system $Q$.
    The perspectives of $A$ and $B$ are summarized
    by the following diagrams:
    
    \begin{center}
        \includegraphics{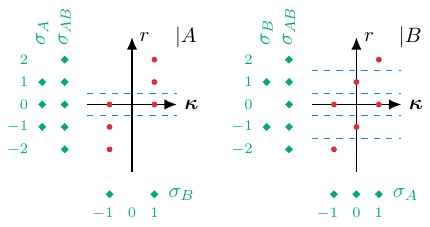}
    \end{center}

    From $A$'s perspective, there is a sector of~$B{:}A|A$,
    the sector $\bb\kappa(0)$, which is isomorphic to~$B|L$. We also
    notice that $\dim\A^{B:A|A} = 6 > 4 = \dim\O_B$.
    If $A$ were regular, then the dimensions would be equal;
    the back-reaction mentioned in section~\ref{sec:non_ideal_case}
    leads to more degrees of freedom in $B{:}A|A$ than in $B|L$.

    \medskip
    From $B$'s perspective,
    there are three sectors of $A{:}B|B$,
    none isomorphic to $A|L$.
    Note that $\bb\kappa(1) = \bb\kappa(-1)$ is one sector,
    despite being disconnected in the diagram.
    Here, $\dim\A^{A:B|B} = 7 < 9 = \dim\O_A$.

    \medskip
    To understand back-reaction better, we return to $A$'s perspective and
    show explicitly that there is information that is not accessible using
    measurements in $\mathsf R_A[\O_B]$, but is with
    measurements in~$\A^{B:A|L}$.
    For this, consider from $L$'s perspective, the state
    $\hat\rho^{|L} = \sketbra{\psi_\varphi}{\psi_\varphi}_A \otimes \hat\sigma_B$,
    where
    \begin{equation}
        \sket{\psi_\varphi}
        = \frac{1}{\sqrt 2}\bigl(\ket{\theta=0} + e^{i \varphi} \ket{\theta=2\pi/3}\bigr).
    \end{equation}
    Here, $\ket{\theta}$, $\theta\in\gpU(1)$, are the states from which the orientation
    POVM of $A$ is built, $\varphi \in \bbR$ is a relative phase
    and $\hat\sigma_B$ is a $G$-invariant state.
    From the perspective of~$L$, the information about the phase $\varphi$ is encoded in $A$ alone
    and cannot be accessed by operations that only act on $B$.
    A straightforward computation shows that for any relational
    operator $\hat F^{|L} = \mathsf R_A[\hat f_B]$, its expectation value
    $\tr\bigl[\hat F^{|L} \hat\rho^{|L}\bigr]$ does not depend on $\varphi$.
    However, one can find operators $\hat X^{|L} \in \A^{B:A|L} \backslash \mathsf R_A[\O_B]$
    such that their expectation values $\tr\bigl[ \hat X^{|L} \hat\rho^{|L} \bigr]$
    do depend on~$\varphi$. We show this in appendix~\ref{app:qutrit_qubit}.
    Hence, the relative phase is not accessible to relational
    operators, but it is accessible to certain polynomials of
    relational operators. As argued in section~\ref{sec:non_ideal_case},
    we can understand this phenomenon in
    terms of the back-reaction of $S|L$ onto the QRF $A|L$ due to the
    finite resources of the latter.
\end{example}

\vspace{-\baselineskip}
\section{Discussion and Outlook}
\label{sec:conclusion_outlook}

\noindent
We have developed a framework that allows us to define
the perspective of any QRF,
extending the framework of~\cite{CastroRuiz2025a} to non-ideal QRFs.
Our extension required refining the definition of
the QRF perspective: it is a factorization of the total Hilbert space,
as seen from a possibly fictitious laboratory frame,
such that invariant operators and operators
relationalized with respect to the QRF act semi-simply.
We physically motivated this definition by the
\hyperref[ppl:perspective]{perspective} and
\hyperref[ppl:invariance]{invariance} principles.
Theorem~\ref{thm:QRF_POV}, based on results by \cite{Bianchi2024}, showed that for any QRF there is a
unitary transformation from the laboratory's perspective into the
perspective of the QRF, the QRF jump.
We prove the theorem rigorously for finite dimensions and compact symmetry groups,
and argue that a generalization to infinite dimensions and non-compact groups should be possible;
actually carrying out this generalization in a rigorous way
is important future work.
As in~\cite{Vanrietvelde2020,Hamette2021,CastroRuiz2025a}, 
the reversible QRF transformation from the perspective of a QRF $A$
to that of a QRF $B$ is constructed by composing the inverse jump
from $A$'s perspective to the laboratory perspective with
the jump from there to $B$'s perspective.

The perspective associated to a non-ideal QRF exhibits
several qualitative differences from that of an ideal QRF.
Firstly, from the perspective of a non-ideal QRF, the observed system is superselected;
ranging from complete superselection, when the QRF is trivial,
to no superselection at all, when the QRF is regular.
This is in line with the well-known result that superselection of the system
is due to the absence of a reference frame and disappears when
an ideal reference frame is introduced~\cite{Aharonov1967,Bartlett2007}.
This means that the system $S|L$ might not be isomorphic to~$S{:}A|A$.
Relatedly, systems decomposing into a tensor product from the perspective
of the laboratory generally no longer do so from the perspective of a non-ideal QRF.
However, they approximate such a decomposition in the infinite-resource limit.
Secondly, the relationalization map is generally not an algebra homomorphism
for non-ideal QRFs, and hence its image $\mathsf R_A[\O_S]$ not always an algebra.
This in turn has two important consequences for the perspective of the reference frame.
The first is that commutation relations are not preserved
by the relationalization map,
and hence commuting operators on $S|L$ do not always correspond
to commuting ones on $S{:}A|A$.
The second consequence is that the algebra $\A^{S:A|L}$ generated by
relational operators is generally a larger space than the
vector space $\mathsf R_A[\O_S]$ of relational operators.
In some cases, like in example~\ref{exa:qubit_qutrit},
$\A^{S:A|L}$ even has a higher dimension than $\O_S$.

Operationally, observing the discrepancy between $\mathsf R_A[\O_S]$
and $\A^{S:A|L}$ requires repeated uses of the reference frame;
it appears when considering elements in the algebra $\A^{S:A|L}$
generated by relational operators $\mathsf R_A[\O_S]$
which are not relational operators themselves. 
Physically, from the laboratory's perspective,
one may understand them as the system $S$ back-reacting on
the QRF $A$, which impacts the second use of the QRF.
A similar phenomenon is known
in other contexts as reference frame degradation~%
\cite{Bartlett2006,Poulin2007}. However, in our case,
this back-reaction is embedded in the structure of the state space
of the system relative to the non-ideal frame.
It is interesting future work to study more concretely
when these phenomena occur and what their dynamical causes~are.

Although theorem~\ref{thm:QRF_POV} provides a way to obtain the QRF jump,
finding its explicit form is difficult.
Besides the simple cases of regular or trivial QRFs,
we determined the explicit form of the jump for all Abelian groups.
A natural next step is to obtain the explicit jumps for non-Abelian groups
with physical relevance such as $\SU(2)$ or the (centrally extended) Galilei group.
In the latter case, we can use our framework to jump into the perspective
of a quantum particle, which is not an ideal QRF
for the Galilei group due to the Heisenberg uncertainty relations.
In this work, we have already made a first step into this direction
by showing that many properties of the perspective
are present if one embeds the non-ideal quantum
reference frame into a regular one.

Our framework is compatible with different physical interpretations of
the perspective of a QRF $A$.
One option is to view the perspective of the QRF $A$
as a mathematical choice of variables,
made by an observer in the laboratory,
that is naturally adapted to relational operators with respect to $A$.
Perspective changes are then quantum generalizations of coordinate transformations.
This is the laboratory-first view expressed in
figure~\ref{fig:relational_POVM}.

A second option is to interpret $A$'s perspective
as the description of an internal observer who can use the QRF $A$
but not the laboratory frame.
This is the symmetry-first view of figure~\ref{fig:swmeasurement}.
This view assumes that the internal observer's inability to use 
the laboratory frame restricts them to invariant operators, $\mathsf G_{AS}[\mathcal{O}_{AS}]$.
Moreover, the view assumes that any operation this internal observer 
performs on the system is correctly described by polynomials of 
relational operators with respect to $A$.
Note that these assumptions imply
that any relational POVM can be implemented
without using the laboratory frame.
This motivates the split of invariant operators
into $S{:}A|A$, the ``system $S$ relationalized with respect to $A$'',
and $\overline{S{:}A}|A$, the ``extra particle''.
We can further assume that the perspective of $A$ is valid even 
without external laboratory frame, because this external structure
is precisely what the internal observer is unable to use,
and not more.
This makes our formalism consistent with the \emph{potential}
existence of an external frame.
Importantly, even in the absence of the external laboratory $L$,
it is possible to operationally test the reference frame
transformation between two internal observers using QRFs
$A$ and $B$ respectively:
for this test the observers just need to perform
measurements on the system from their perspective
and communicate their results~\cite{Hamette2025,Doat2025}.

We have studied situations in which the resources
of a quantum reference frame for group $G$ are limited.
However, we have assumed that all other elements important
in an experiment, such as reference frames for other degrees of freedom
besides $G$-orientation as well as measurement apparatuses,
can have potentially unbounded resources.
Evidence of this assumption can be seen,
for example, in our throughout use of complex Hilbert 
spaces~\cite{Penrose1971, Barbado2025}.
It is therefore not clear whether our framework is still valid
if it is not possible, even in principle, to construct a laboratory.
Understanding what one can do without using macroscopic laboratory equipment
would shine light on foundational questions such as how an observer
which is ``small'' compared to the observed system
should describe this system.
While our framework requires only that the QRF resources are bounded and 
assumes consistency with a (possibly fictitious, infinite-resource) laboratory
frame, it may be used as a stepping stone to push
quantum theory into unexplored realms.

\vspace{-\baselineskip}
\section*{Acknowledgements}
\vspace{-0.7\baselineskip}

\noindent
We thank V.~Kabel for her feedback on a draft of the paper,
and O.~Oreshkov for his opinion on early ideas.
S.C.G.\ thanks A.~Spalvieri and A.~Giovanakis for their feedback on various parts of the paper;
J.~Huber, N.~Ormrod, A.-C.~de la Hamette, F.~Giacomini,
B.~Sahdo, N.~Cohen, P.~Höhn and L.-Q.~Chen for insightful discussions.
\mbox{E.C.-R.}\ thanks E.~Bianchi for pointing out~\cite{Bianchi2024}
and L.C.~Barbado, \v{C}.~Brukner, N.~Cohen and P.~Höhn for insightful discussions.
This work is supported by the SNSF Grant No.~20QU-1\_225171
and by the SNSF Ambizione Grant No.~PZ00P2-208885.
We thank the SwissMAP NCCR and the ETH Zurich Quantum Center for their support.
We acknowledge support from the \href{https://withoutspacetime.org}{WithOut SpaceTime (WOST)} project,
funded by the Grant ID\#~63683 from the John Templeton Foundation (JTF),
as well as by the
\href{https://qiss.fr}{The Quantum Information Structure of Spacetime, Second Phase (QISS 2)} project, funded by the Grant ID\#~62312 from the JTF.
The opinions expressed in this work are those of the authors and do not necessarily
reflect the views of the JTF.
This work is part of the European Union COST Action CA23130
\href{https://cost.eu/actions/CA23130/}{Bridging high and low energies in search of quantum gravity}.

\def\bibsection{\section*{\refname}} 



\onecolumngrid

\appendix

\section{Notations and Conventions}
\label{app:notations_conventions}

\noindent
We choose units with $\hbar = 1$.
Operators are written with a hat.
Given an operator $\hat X$,
we define the superoperator $\mathsf X[\;\cdot\;]
:= \hat X[\;\cdot\;]\hat X^\dagger$.
With an \emph{isomorphism} of Hilbert spaces
we mean an isomorphism of the underlying vector spaces
which also preserves the inner product;
isomorphisms of Hilbert spaces induce isomorphisms
of \emph{operator algebras} (see below) on the involved spaces.

\medskip
A \emph{symmetry group} $G$ is a finite-dimensional, unimodular Lie group.
Integrals of the form $\int_G\der g \dots$ denote a left-, right- and
inversion-invariant Haar integral over $G$
and $|G| := \int_G\der g$ is the measure of $G$.
It is finite if and only if $G$ is compact.
We include the case of a zero-dimensional $G$ which corresponds to
a discrete group with at most countably many elements;
in that case, $\int_G\der g\dots$ is implicitly understood as a series
$\sum_{g\in G}\dots$ and $|G|$ is the cardinality of the group.
See \eg~\cite{Cohn2013} for proofs of the above statements.

\medskip
A \emph{unitary representation} $\hat U : G \to \gpU(\H)$
of a symmetry group $G$ on a Hilbert space $\H$
is always assumed to be \emph{strongly continuous}~\cite{Fell1988}.
The \emph{left-} and \emph{right-regular} representations
$\hat L$ and $\hat R$ act on $L^2(G)$ as
$\varphi \mapsto (\hat L(g)\varphi)(g') = \varphi(gg')$
and $\varphi \mapsto (\hat R(g)\varphi)(g') = \varphi(g'g^{-1})$,
respectively.
Appendix~\ref{app:representation_theory_groups} treats the representation
theory of symmetry groups in more detail.

\medskip
An \emph{operator algebra} $\A$ is an algebra of operators on a Hilbert space
$\H$ closed under operator multiplication, complex linear combinations
and Hermitian transposition, and which contains the identity.
$\O$ is the operator algebra of all operators on $\H$,
and $\O[\hat X_1,\dots]$ denotes the operator algebra generated by
the set of operators $\{\hat X_1, \dots\}$.

If $\dim\H < \infty$, then $\A$ is a \emph{von Neumann algebra}.
If $\dim\H = \infty$, then $\A$ is often not well-defined,
containing \eg\ unbounded operators and improper operators.
In the latter case, $\A$ is a von Neumann algebra only if it
consists solely of bounded operators and if it is closed under
the weak operator topology~\cite{Neumann1930,Davidson1996}.
Appendix~\ref{app:representation_theory_algebras} discusses the
representation theory of von Neumann algebras.
We mostly gloss over subtleties with infinite dimensions in the main text,
knowing that all results are rigorous in finite dimensions,
and can be made rigorous by appropriately modifying definitions and statements
also in the infinite-dimensional case.
What exactly must be modified and to what extent generalizations are possible
is discussed in appendices~\ref{app:representation_theory_groups}
and~\ref{app:representation_theory_algebras}
whenever the modifications impose themselves.

\medskip
With $\bigoplus_k\dots$ we denote a \emph{direct sum} of either Hilbert spaces
or algebras if $k$ takes on at most countably many values,
and it is implicitly understood as a \emph{direct integral}
if $k$ is continuous (this is related to the generalizations of our results
to infinite dimensions as just mentioned).
In the latter case, the integrand consists of subspaces spanned by improper
vectors or subalgebras on such spaces.
For details, see \eg~\cite{Neumann1949}.

\section{Representation Theory of Lie Groups}
\label{app:representation_theory_groups}

\noindent
We gather here some results about the representation theory of symmetry groups,
beginning with the notions of \emph{regular representations},
followed by the \emph{Peter-Weyl} theorem
and the properties of the $G$-twirl.

\medskip
\paragraph*{Regular Representations.}
For a symmetry group $G$,
consider the Hilbert space $L^2(G)$ consisting of
square-integrable functions $\psi : G \to \bbC$
(\ie, ``wave functions on $G$'').\footnote{
    Actually, $L^2(G)$ consists of equivalence classes of functions
    with two functions identified if they agree almost everywhere,
    \ie\ if they disagree on a set of measure zero~\cite{Cohn2013}.
}
The inner product is
\begin{equation*}
    \sbraket{\psi}{\varphi} = \int_G\der g\, \overline\psi(g)\, \varphi(g)
\end{equation*}
and $\braket{\psi}{\psi} < \infty$ for all $\ket{\psi} \in L^2(G)$.
The \emph{left-} and \emph{right-regular} representations are unitary
representations of $G$ defined as~\cite{Fell1988}
\begin{equation}
    \bigl(\hat L(g') \psi\bigr)(g) := \psi(g'^{-1}g),
    \qquad \bigl( \hat R(g') \psi\bigr)(g) := \psi(gg')
    \quad \forall\, g,g' \in G.
    \label{eq:app_regular_representations}
\end{equation}
Clearly, $[\hat L(g), \hat R(g')] = 0$ for all $g,g' \in G$.

A function $\ket{\psi} \in L^2(G)$ is evaluated at $g \in G$ by the
Dirac distribution $\bra{g}$ centred on $g$:\footnote{
    Strictly speaking, $\bra{g}$ is \apriori\ ill-defined on $L^2(G)$
    since equivalence classes of functions which are equal almost everywhere
    do not have well-defined function values.
    However, a class $\ket{\psi} \in L^2(G)$ containing continuous functions
    always contains a single continuous function $\psi_c \in \ket{\psi}$,
    so that we can define $\sbraket{g}{\psi} := \psi_c(g)$.
    Continuous classes are dense in $L^2(G)$~\cite{Cohn2013},
    making $\bra{g}$ densely defined.
}
\begin{equation*}
    \psi(g) = \sbraket{g}{\psi} =: \int_G\der g'\, \delta(g^{-1}g')\, \psi(g').
\end{equation*}
Taking $\ket{g}$ as the improper state corresponding to $\bra{g}$, one finds
\begin{equation}
    \braket{g'}{g} = \delta(g'^{-1}g)
    \quad \forall\,g,g'\in G,
    \label{eq:app_regular_overlaps}
\end{equation}
and thanks to unimodularity of $G$, even
$\delta(g'^{-1}g) = \delta(g^{-1}g') = \delta(gg'^{-1}) = \delta(g'g^{-1})$.
From this it follows that $\int_G\der g\, \ketbra{g}{g} = \hat\id$,
and $\{\ket{g}\}_{g\in G}$ is an improper basis:
$\ket{\psi} = \int_G\der g\, \psi(g) \ket{g}$.
From \cref{eq:app_regular_representations} it follows that
\begin{equation*}
    \hat L(g') \ket{g} = \ket{g'g},
    \qquad \hat R(g') \ket{g} = \sket{gg'^{-1}}.
\end{equation*}
All this is entirely analogous to how position eigenstates
are improper states of $L^2(\bbR)$ and acted upon
by the translation group $(\bbR,+)$.

\medskip
\paragraph*{Compact Symmetry Groups.}
We first consider the case of a compact symmetry group $G$,
and later generalize to non-compact groups.
The representation theory in the compact case is given by the following:
\begin{proposition}{Peter-Weyl Theorem}{Peter_Weyl}
    Let $G$ be a compact symmetry group, and $\H$ a Hilbert space carrying
    a unitary representation $\hat U$ of $G$.
    Then,
    \begin{equation*}
        \H \iso \bigoplus_r \H_r \otimes \H_{*,r},
        \qquad \hat U \iso \bigoplus_r \hat U_r \otimes \hat\id_{*,r},
    \end{equation*}
    where $r$ enumerates the inequivalent irreducible unitary
    representations of $G$, with corresponding representation spaces $\H_r$
    and action $\hat U_r$, and the multiplicity spaces $\H_{*,r}$
    can be zero-dimensional for some $r$.
    All irreducible representations of $G$ are finite-dimensional
    and there are at most countably many up to equivalence.
    
    \medskip
    If $\H = L^2(G)$, then the left- and right-regular representations
    (see appendix~\ref{app:regular} for definitions)
    act as
    \begin{equation*}
        \hat L \iso \bigoplus_r \hat U_r \otimes \hat\id_{*,r},
        \qquad \hat R \iso \bigoplus_r \hat \id_r \otimes \hat U_{*,r},
        \qquad \H_{*,r} = \H_r, \quad \hat U_{*,r} = \hat{\overline U}_r.
    \end{equation*}
\end{proposition}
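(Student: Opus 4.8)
The plan is to follow the classical route to the Peter-Weyl theorem, the only genuine analytic input being the spectral theorem for compact self-adjoint operators. The linchpin is the following lemma: every nonzero strongly continuous unitary representation $\hat U$ of a compact $G$ on a Hilbert space $\H$ contains a nonzero finite-dimensional invariant subspace. To prove it I would pick a unit vector $\ket{v} \in \H$ and average the rank-one projector over the group, setting $\hat T := \int_G \der g\, \hat U(g)\,\ketbra{v}{v}\,\hat U^\dagger(g)$. Using left-invariance of the Haar integral one checks that $\hat T$ is positive, self-adjoint, commutes with $\hat U(h)$ for every $h \in G$, and is nonzero because $\bra{v}\hat T\ket{v} = \int_G \der g\,\abs{\braket{v}{\hat U(g)\,v}}^2 > 0$. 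Since $g \mapsto \hat U(g)\,\ketbra{v}{v}\,\hat U^\dagger(g)$ is norm-continuous on the compact $G$, the operator $\hat T$ is a norm-convergent integral of finite-rank operators against a finite measure, hence compact (indeed trace class, with trace $\abs G$). The spectral theorem then yields a nonzero eigenvalue whose eigenspace is finite-dimensional and, by commutativity, $\hat U$-invariant.

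The lemma has two immediate consequences. First, applied to an irreducible $\hat U$ together with Schur's lemma, it forces the averaged operator $\hat T$ to be a nonzero scalar multiple of the identity; a compact operator can equal a nonzero multiple of $\hat\id$ only on a finite-dimensional space, so every irreducible representation of $G$ is finite-dimensional. Second, complete reducibility follows by a Zorn's lemma argument: take a maximal orthogonal family of irreducible invariant subspaces; were their closed span not all of $\H$, the orthogonal complement would be a nonzero invariant subspace and, by the lemma applied to it, would contain a further irreducible subspace, contradicting maximality. Grouping the summands by unitary equivalence class $r$ gives the isotypic decomposition $\H \iso \bigoplus_r \H_r \otimes \H_{*,r}$ and $\hat U \iso \bigoplus_r \hat U_r \otimes \hat\id_{*,r}$, where the multiplicity space $\H_{*,r}$ (the space of intertwiners from $\H_r$ into $\H$) may be zero-dimensional for those $r$ that do not occur.

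For the regular-representation statement I would specialize to $\H = L^2(G)$ and use the matrix coefficients $u^r_{ij}(g) := \bra{e^r_i}\hat U_r(g)\ket{e^r_j}$ in an orthonormal basis $\{\ket{e^r_i}\}$ of $\H_r$, writing $d_r := \dim\H_r$. The Schur orthogonality relations—which I would derive by forming, for fixed indices $i,l$, the intertwiner $\int_G \der g\,\hat U_r(g)\,\ketbra{e^r_i}{e^s_l}\,\hat U_s^\dagger(g)$ and invoking Schur's lemma (it vanishes for $r \ne s$ and is scalar for $r = s$)—show that $\{\sqrt{d_r/\abs G}\,u^r_{ij}\}$ is orthonormal, and a short computation identifies the span of the $u^r_{ij}$ at fixed $r$ with a copy of $\H_r \otimes \overline{\H_r}$ on which $\hat L$ acts as $\hat U_r \otimes \hat\id$ and $\hat R$ as $\hat\id \otimes \hat{\overline U}_r$; since $\overline{\H_r} \iso \H_r$ as a Hilbert space this gives $\H_{*,r} = \H_r$ and $\hat U_{*,r} = \hat{\overline U}_r$. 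The remaining point, and the main obstacle, is completeness: that the matrix coefficients are dense in $L^2(G)$, so that every irreducible actually occurs in the regular representation and the direct sum exhausts $L^2(G)$. I would establish this by fixing a real, symmetric continuous approximate identity $\phi$ and studying the convolution operator $\hat T_\phi : f \mapsto f * \phi$, which is Hilbert-Schmidt (hence compact), self-adjoint, and commutes with $\hat L$; each of its nonzero eigenspaces is a finite-dimensional $\hat L$-invariant subspace and therefore spanned by matrix coefficients. Hence a function orthogonal to all matrix coefficients lies in $\ker\hat T_\phi$ for every such $\phi$, and the approximate-identity property $\hat T_\phi f \to f$ then forces it to vanish. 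Finally, separability of $L^2(G)$ for a (second-countable) compact Lie group bounds the number of inequivalent irreducibles to be at most countable, which completes the proof.
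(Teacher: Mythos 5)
The paper itself gives no proof of this proposition: it is quoted as a classical result, with pointers to Peter and Weyl's original paper and to the textbook treatment of Fell and Doran, so there is no in-paper argument to compare against. What you have written is the standard classical proof, and it is correct in outline. Your key lemma --- compactness of the group-averaged operator $\hat T = \int_G\der g\, \hat U(g)\ketbra{v}{v}\hat U^\dagger(g)$, obtained from norm-continuity of the integrand over the compact group --- together with Schur's lemma (finite-dimensionality of irreducibles), a Zorn's-lemma maximality argument (complete reducibility), Schur orthogonality, and the convolution/approximate-identity argument for density of matrix coefficients, is exactly the route taken in the references the paper cites; what it buys is a self-contained derivation of a result the paper treats as imported background. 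Three points deserve tightening. First, in the maximality argument your lemma produces a finite-dimensional invariant subspace, not an irreducible one; you must pass to an invariant subspace of minimal positive dimension inside it. Second, the claim that a nonzero eigenspace of $\hat T_\phi$ is ``spanned by matrix coefficients'' uses that its elements are continuous (which holds because $f = \lambda^{-1} f \ast \phi$ with $\phi$ continuous), so that evaluation at the identity is legitimate; this should be said. Third, with the paper's conventions $(\hat L(g')\psi)(g) = \psi(g'^{-1}g)$ and $(\hat R(g')\psi)(g) = \psi(gg')$, the span of the matrix coefficients of $\hat U_r$ carries $\hat{\overline{U}}_r$ under $\hat L$ and $\hat U_r$ under $\hat R$; the displayed form $\hat L \iso \bigoplus_r \hat U_r \otimes \hat\id_{*,r}$, $\hat R \iso \bigoplus_r \hat\id_r \otimes \hat{\overline{U}}_r$ is recovered only after relabelling each irreducible by its conjugate. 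Since conjugation permutes the set of equivalence classes this is harmless, but the bookkeeping should be made explicit rather than absorbed into ``a short computation.''
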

\noindent
See~\cite{Peter1927} for the original statement and proof,
and~\cite{Fell1988a} for a textbook treatment.
Note that the Peter-Weyl theorem also applies for finite groups,
except that $L^2(G)$ in the regular representation is
replaced by the space of (necessarily finite) sequences $l^2(G)$.
In the compact case, the $G$-twirl is well-defined (because $|G| < \infty$):
\begin{proposition}{Properties of the $\boldsymbol G$-Twirl}{G_twirl}
    Let $G$ be compact and let $\hat U$ be a strongly continuous unitary
    representation of $G$ on a Hilbert space $\H$.
    The $G$-twirl
    \begin{equation}
        \mathsf G := \frac{1}{|G|} \int_G\der g\, \mathsf U(g)
        \label{app:G_twirl}
    \end{equation}
    satisfies the following properties:
    \begin{enumerate}
        \medskip
        \item $\mathsf G$ is a CPTP map.

        \medskip
        \item $\mathsf G$ is a projector:
        $\mathsf G^2 = \mathsf G$.

        \medskip
        \item $\mathsf G$ is self-adjoint in the Hilbert-Schmidt inner product.
        Furthermore,
        $\tr\bigl(\mathsf G[\hat a] \hat b\bigr)
        = \tr\bigl(\hat a \mathsf G[\hat b]\bigr)
        = \tr\bigl(\mathsf G[\hat a] \mathsf G[\hat b]\bigr)$.

        \medskip
        \item $\mathsf G[\hat a] = \hat a$ if and only if
        $\forall\,g\in G: \, [\hat a, \hat U(g)] = 0$.

        \medskip
        \item Assume that $\H \iso \bigoplus_r \H_r \otimes \H_{*,r}$,
        $\hat U \iso \bigoplus_r \hat U_r \otimes \hat\id_{*,r}$ decays into
        inequivalent irreducible representations $\hat U_r$ of $G$
        as in the Peter-Weyl theorem.
        Then,
        \begin{equation*}
            \mathsf G = \bigoplus_r (\mathsf D_r \otimes \id_{*,r}) \circ \Pi_r
        \end{equation*}
        where $\Pi_r$ is the orthogonal projection superoperator onto
        $\H_r \otimes \H_{*,r}$, and $\mathsf D_r : \hat a \mapsto
        \hat\id \cdot (\tr\,\hat a / \tr\,\hat\id)$
        is the trace-preserving de-phasing superoperator.
    \end{enumerate}
\end{proposition}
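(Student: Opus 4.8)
\noindent
The plan is to establish the five properties in order, disposing of the elementary ones first and reserving the representation-theoretic content for the last. For property~1, I would observe that each $\mathsf U(g)$ is conjugation by a unitary and hence CPTP, and that $\mathsf G = \frac{1}{|G|}\int_G\der g\,\mathsf U(g)$ is a convex combination of such maps since $\frac{1}{|G|}\int_G\der g = 1$; as complete positivity and trace preservation are both stable under convex combinations, $\mathsf G$ is CPTP. For property~2, I would compute $\mathsf G^2 = \frac{1}{|G|^2}\int_G\der g\int_G\der g'\,\mathsf U(g)\mathsf U(g')$, use the homomorphism property $\mathsf U(g)\mathsf U(g') = \mathsf U(gg')$, and substitute $h = gg'$ in the inner integral; left-invariance of the Haar measure renders it equal to $|G|\,\mathsf G$ independently of $g$, so that $\mathsf G^2 = \mathsf G$.

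For property~3, the key point is that the Hilbert--Schmidt adjoint of $\mathsf U(g)$ is $\mathsf U(g^{-1})$, which follows from cyclicity of the trace; inversion-invariance of the Haar measure then yields $\mathsf G^* = \mathsf G$. The trace identities proceed analogously: cyclicity gives $\tr\bigl(\mathsf U(g)[\hat a]\,\hat b\bigr) = \tr\bigl(\hat a\,\mathsf U(g^{-1})[\hat b]\bigr)$, and integrating with inversion-invariance produces $\tr\bigl(\mathsf G[\hat a]\,\hat b\bigr) = \tr\bigl(\hat a\,\mathsf G[\hat b]\bigr)$; applying this a second time together with the idempotency from property~2 collapses $\tr\bigl(\mathsf G[\hat a]\,\mathsf G[\hat b]\bigr)$ to the same value. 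For property~4, the ``if'' direction is immediate, since $[\hat a,\hat U(g)] = 0$ makes $\mathsf U(g)[\hat a] = \hat a$ for every $g$. For ``only if'', I would first record the invariance identity $\mathsf U(g')\circ\mathsf G = \mathsf G$ (again by left-invariance of the Haar measure), so that $\mathsf G[\hat a] = \hat a$ forces $\mathsf U(g')[\hat a] = \hat a$ for all $g'$, which is exactly $\hat U(g')\hat a = \hat a\,\hat U(g')$.

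The substance lies in property~5, where I would invoke the Peter--Weyl decomposition $\H\iso\bigoplus_r\H_r\otimes\H_{*,r}$, $\hat U\iso\bigoplus_r\hat U_r\otimes\hat\id_{*,r}$ and apply Schur's lemma twice. Writing an operator in block form relative to this decomposition, conjugation by $\hat U$ sends a block $\H_s\otimes\H_{*,s}\to\H_r\otimes\H_{*,r}$ of simple-tensor form $\hat x\otimes\hat y$ to $\bigl(\hat U_r(g)\hat x\,\hat U_s^\dagger(g)\bigr)\otimes\hat y$, so the twirl acts through the averaged operator $T := \frac{1}{|G|}\int_G\der g\,\hat U_r(g)\hat x\,\hat U_s^\dagger(g)$. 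A change of variables shows $\hat U_r(g')\,T = T\,\hat U_s(g')$, i.e.\ $T$ intertwines $\hat U_s$ with $\hat U_r$; Schur's lemma then forces $T = 0$ whenever $r\neq s$, which is precisely the off-diagonal annihilation implemented by $\Pi_r$. On a diagonal block the same $T$ commutes with all $\hat U_r(g')$ and is hence a multiple of $\hat\id_r$ by Schur; taking the trace fixes the multiple to $\tr(\hat x)/\dim\H_r$, recovering the de-phasing $\mathsf D_r$, while the multiplicity factor $\hat y$ is left untouched. Assembling the blocks gives $\mathsf G = \bigoplus_r(\mathsf D_r\otimes\id_{*,r})\circ\Pi_r$. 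The main obstacle I anticipate is the bookkeeping of this last step: organizing the block/intertwiner argument cleanly across possibly infinitely many irreps carrying nontrivial multiplicity, and confirming that the fixed-point set the twirl projects onto coincides with the commutant $\bigoplus_r\hat\id_r\otimes\O_{*,r}$ identified by Schur. Once Schur's lemma is in hand, though, each individual piece reduces to a routine computation.
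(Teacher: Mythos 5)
Your treatment of properties 1--4 follows essentially the same route as the paper's: complete positivity and trace preservation from the convex-combination structure, idempotency from the homomorphism property plus Haar invariance, self-adjointness and the trace identities from cyclicity plus unimodularity (inversion invariance) plus idempotency, and the fixed-point characterization from Haar invariance. (For property 4 the paper expands the commutator $\bigl[\mathsf G[\hat a],\hat U(g)\bigr]$ directly and shows it vanishes, while you use the absorption identity $\mathsf U(g')\circ\mathsf G=\mathsf G$; these are trivially equivalent uses of left-invariance.) The one genuine divergence is property 5: the paper does not prove it at all, deferring to the proof of theorem 1 in section II.C of~\cite{Bartlett2007}, whereas you supply a self-contained argument --- block-decompose $\hat a$ relative to the Peter--Weyl decomposition, show the averaged block operator $T=\frac{1}{|G|}\int_G\der g\,\hat U_r(g)\hat x\,\hat U_s^\dagger(g)$ intertwines $\hat U_s$ with $\hat U_r$, invoke Schur's lemma to annihilate the off-diagonal blocks ($r\neq s$) and to force diagonal blocks to multiples of $\hat\id_r$, and fix the multiple as $\tr(\hat x)/\dim\H_r$ by taking traces. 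This argument is correct, and the bookkeeping you flag as the main obstacle is lighter than you fear: Peter--Weyl guarantees each $\H_r$ is finite-dimensional, so every block is a \emph{finite} sum of simple tensors $\hat x\otimes\hat y$ and the intertwiner argument applies term by term; only the assembly over the (at most countably many) charges $r$ requires the continuity/convergence care that the paper itself glosses over. Your consistency check against the commutant $\bigoplus_r\hat\id_r\otimes\O_{*,r}$ is not needed for property 5 itself --- it is just property 4 restated in the decomposed picture. In short, your route buys self-containedness where the paper buys brevity by citation.
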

\noindent
See~\cite{Bartlett2007} for a standard treatment.
\begin{proof}
    1. Complete positivity follows because $\mathsf\id \otimes \mathsf U$
    is completely positive, and probabilistic mixtures of
    completely positive maps are completely positive.
    Preservation of trace follows similarly.

    2. We compute
    \begin{equation*}
        \mathsf G^2 = \frac{1}{|G|^2} \int_G\der g\,\der g'\,
        \mathsf U(g) \circ \mathsf U(g')
        = \frac{1}{|G|^2} \int_G\der g\,\der g'\, \mathsf U(g)
        = \frac{1}{|G|} \int_G\der g\, \mathsf U(g),
    \end{equation*}
    by invariance of the Haar measure.

    3. $\mathsf G$ is self-adjoint because
    \begin{equation*}
        \tr\bigl( \hat a^\dagger \mathsf G[\hat b] \bigr)
        = \frac{1}{|G|}\int_G\der g\,
        \tr\bigl( \hat a \hat U(g) \hat b \hat U^\dagger(g) \bigr)
        = \frac{1}{|G|}\int_G\der g\,
        \tr\bigl( \hat U^\dagger(g) \hat a \hat U(g) \hat b \bigr)
        = \tr\bigl( \mathsf G[\hat a]^\dagger \hat b \bigr),
    \end{equation*}
    where we have used the cyclicity of the trace.
    Using the inversion-invariance of the Haar measure ($G$ is unimodular),
    it follows that $\tr(\hat a \mathsf G[\hat b])
    = \tr(\mathsf G[\hat a] \hat b)$,
    and using the projector property, this is also equal to
    $\tr(\mathsf G[\hat a] \mathsf G[\hat b])$.
    
    4. If $\mathsf G[\hat a] = \hat a$, then
    \begin{equation*}
        [\hat a, \hat U(g)] = \bigl[\mathsf G[\hat a], \hat U(g)\bigr]
        = \frac{1}{|G|} \int_G\der g'\, \bigl(
            \hat U(g') \hat a \hat U^\dagger(g') \hat U(g)
            - \hat U(g) \hat U(g') \hat a \hat U^\dagger(g')
        \bigr)
        = 0,
    \end{equation*}
    where we have used the invariance of the Haar measure
    in one of the two terms.
    If $[\hat a,\hat U(g)] = 0$ for all $g \in G$, then
    \begin{equation*}
        \mathsf G[\hat a] = \frac{1}{|G|} \int_G\der g\,
        \hat U(g) \hat a \hat U^\dagger(a)
        = \frac{1}{|G|} \int_G\der g\, \hat a
        = \hat a.
    \end{equation*}

    5. See the proof of theorem 1 in~\cite[section II.C]{Bartlett2007}.
\end{proof}
\medskip
\paragraph*{Generalizations.}
If $G$ is not compact, then proposition~\ref{prop:Peter_Weyl} fails
already, because now irreducible representations of $G$
can be infinite-dimensional. However, generalizations exist:
the statement about finite dimensions must be removed,
and direct sums must be replaced by combinations of direct integrals and sums.
Specifically, if $G$ is second-countable and of type-I
(every unitary representation generates a type-I von Neumann algebra~%
\cite{Neumann1930}, see also appendix~\ref{app:representation_theory_algebras})
then the first part of the Peter-Weyl theorem
(complete reducibility of a given unitary representation)
generalizes in this sense~\cite{Fell1988,Fell1988a}.
The second part (decomposition of the regular representation)
also generalizes for such groups and in the same sense~%
\cite{Segal1950,Mautner1955},
and explicit descriptions of the irreducible representations
are known in some cases~%
\cite{HarishChandra1965,HarishChandra1966}.
We will mostly need the first part.
Furthermore, essentially all unimodular Lie groups
of interest in physics are type-I.
Especially, since we are interested mostly in algebraic
properties and not so much analytical ones
(\eg, improper states and subspaces are acceptable), for our purposes,
proposition~\ref{prop:Peter_Weyl} can be generalized to a very large class of
Lie groups encountered in physics and not just compact ones.

Proposition~\ref{prop:G_twirl}, strictly speaking, also fails,
because $|G| = \infty$ and hence the integral will generally not converge.
However, accepting improper states and focusing on algebraic properties,
proposition~\ref{prop:G_twirl} can be generalized for our purposes.

\section{Representation Theory of Operator Algebras
in the Presence of a Symmetry}
\label{app:representation_theory_algebras}

\noindent
In this appendix, we prove theorem~\ref{thm:QRF_POV}
using the representation theory of operator (strictly speaking, von Neumann)
algebras, also taking into account the action of the symmetry group.
We first consider the case of finite-dimensional Hilbert spaces
and algebras, as well as compact symmetry groups.
In this case, all operator algebras are von Neumann algebras
and the Peter-Weyl theorem, proposition~\ref{prop:Peter_Weyl},
applies without modification.
After proving the theorem with these assumptions,
we outline how it can be generalized to infinite dimensions
and non-compact symmetry groups.

\medskip
\paragraph*{Commutants.}
We will need two different commutants.
Let $\A < \O$ be a subalgebra of the operators $\O$ on a Hilbert space $\H$.
Then, we denote the commutant relative to $\O$ with a prime:
$\A' := \bigl\{ \hat X \in \O \st [\hat X,\A] = 0 \bigr\}$.
The commutant relative to $G$-invariant operators
(if $\A < \mathsf G[\O]$ applies)
is denoted with a bar:
$\overline\A := \bigl\{ \hat X \in \mathsf G[\O] \st [\hat X,\A] = 0 \bigr\}$.

\medskip
\paragraph*{Theorem~\ref{thm:QRF_POV} for Finite Dimensions and Compact $\boldsymbol G$.}
Still assuming finite-dimensional Hilbert spaces and a compact symmetry groups,
we consider a slight generalization of theorem~\ref{thm:QRF_POV}
by taking an arbitrary operator algebra
rather than an algebra generated by relational operators.
The result we are going to show here is a combination of quite well-known
results in the infinite-dimensional case~%
\cite{Davidson1996,Neumann1930,Neumann1949,Murray1936,Murray1943,Peter1927}.
This result has been derived and discussed first in \cite{Bianchi2024}
to define entanglement entropy for general invariant algebras.
However, because the finite-dimensional special case is rarely discussed
explicitly despite containing valuable algebraic intuition
for the result, we nevertheless provide a proof here.
Specifically, we will show the following:
\begin{proposition}{}{algebra_decomposition_full}
    Let $\H$ be a finite-dimensional Hilbert space carrying a continuous unitary representation
    $\hat U$ of a compact symmetry group $G$, and let $\A < \mathsf G[\O]$
    be a $G$-invariant operator subalgebra.
    Then, there exists an isomorphism inducing the relations
    \begin{gather*}
        \H \iso \bigoplus_{k,r} \H_r \otimes \H_{r,k} \otimes \H_k,
        \qquad \hat U(g) \iso \bigoplus_{k,r} \hat U_r(g)
        \otimes \hat\id_{r,k} \otimes \hat\id_k \notag\\
        \A \iso \bigoplus_k\Bigl(\bigoplus_r \hat\id_r \otimes \hat\id_{r,k} \Bigr)
        \otimes \O_k,
        \qquad\overline\A \iso \bigoplus_{k,r} \hat\id_r
        \otimes \O_{r,k} \otimes \hat\id_k,
    \end{gather*}
    where
    \begin{enumerate}
        \medskip
        \item\label{item:r_labels} $r$ labels the inequivalent irreducible unitary representations
            of $G$, such that the irreducible representation~$\hat U_r$
            acts on $\H_r$.

        \medskip
        \item\label{item:k_labels} $k$ labels the simultaneous eigenspaces
        $\J_k \iso \oplus_r \H_r \otimes \H_{r,k} \otimes \H_k$
        of all elements in the centre
        $\Z := \A \cap \A' = \A \cap \overline\A$.
    \end{enumerate}
\end{proposition}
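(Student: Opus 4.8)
The plan is to reduce the statement to two classical structure results and then interlock them: the finite-dimensional (Artin--Wedderburn) classification of von Neumann algebras together with the double-commutant theorem, and the Peter--Weyl theorem, proposition~\ref{prop:Peter_Weyl}. Since $\dim\H<\infty$, the unital $*$-closed algebra $\A$ is a finite-dimensional von Neumann algebra, so $\A''=\A$. First I would apply the structure theorem to $\A$ alone, temporarily ignoring the group, obtaining an isomorphism
\begin{equation*}
    \H \iso \bigoplus_k \M_k \otimes \H_k,
    \quad
    \A \iso \bigoplus_k \hat\id_{\M_k} \otimes \O_k,
    \quad
    \A' \iso \bigoplus_k \O(\M_k) \otimes \hat\id_k,
\end{equation*}
where $k$ labels the minimal central projections of $\A$, each $\H_k$ is the space on which the corresponding matrix block acts irreducibly, each $\M_k$ is the associated multiplicity space, and $\O(\M_k)$ is the full operator algebra on $\M_k$. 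The centre is then $\Z=\A\cap\A'\iso\bigoplus_k\bbC\,\hat\id_{\M_k\otimes\H_k}$, whose simultaneous eigenspaces are exactly the blocks $\J_k=\M_k\otimes\H_k$, establishing item~\ref{item:k_labels}. I would also record the elementary identity $\Z=\A\cap\A'=\A\cap\overline\A$: since $\A\subset\mathsf G[\O]$, we have $\A\cap\A'\subset\mathsf G[\O]$, so intersecting with $\mathsf G[\O]$ changes nothing.

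Next I would feed in the symmetry. Because $\A$ is $G$-invariant, every $\hat U(g)$ lies in $\A'$, so under the above isomorphism $\hat U(g)$ is block-diagonal over $k$ and, within each block, acts only on the multiplicity factor: $\hat U(g)\iso\bigoplus_k \hat U^{(k)}(g)\otimes\hat\id_k$ for some unitary representation $\hat U^{(k)}$ of $G$ on $\M_k$. This is the decisive point — the group action never touches the factor $\H_k$ on which $\A$ acts. I would then apply Peter--Weyl (proposition~\ref{prop:Peter_Weyl}) to each $\hat U^{(k)}$, decomposing $\M_k\iso\bigoplus_r\H_r\otimes\H_{r,k}$ with $\hat U^{(k)}(g)\iso\bigoplus_r\hat U_r(g)\otimes\hat\id_{r,k}$, where $r$ enumerates the inequivalent irreducibles $\hat U_r$ (item~\ref{item:r_labels}) and $\H_{r,k}$ is the multiplicity of $\hat U_r$ inside $\M_k$. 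Substituting into the block decomposition immediately yields the asserted forms of $\H$, $\hat U$ and $\A$.

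The remaining relation, for $\overline\A=\A'\cap\mathsf G[\O]$, follows from Schur's lemma. Under the decomposition $\A'\iso\bigoplus_k\O(\M_k)\otimes\hat\id_k$, an element is $G$-invariant exactly when its restriction to each $\M_k=\bigoplus_r\H_r\otimes\H_{r,k}$ commutes with $\bigoplus_r\hat U_r(g)\otimes\hat\id_{r,k}$; because the $\hat U_r$ are inequivalent irreducibles, Schur's lemma identifies this commutant as $\bigoplus_r\hat\id_r\otimes\O_{r,k}$. Hence $\overline\A\iso\bigoplus_{k,r}\hat\id_r\otimes\O_{r,k}\otimes\hat\id_k$, which finishes the proof.

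The main obstacle I anticipate is bookkeeping rather than deep mathematics: one must check that the two decompositions genuinely interlock, i.e.\ that the Peter--Weyl refinement of each $\M_k$ can be carried out block-by-block without disturbing the $\A$-factor $\H_k$, and that one \emph{common} isomorphism simultaneously brings $\A$, $\A'$, $\hat U$, and therefore $\overline\A$, into the stated forms. The clean triple-tensor separation rests entirely on the single observation that $\hat U(g)\in\A'$, which confines the group to the multiplicity factors $\M_k$; once this is in place, the rest is an application of Schur's lemma and careful index tracking.
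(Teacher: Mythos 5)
Your proof is correct and follows essentially the same route as the paper: decompose $\H$ over the simultaneous eigenspaces of the centre $\Z$, observe that $G$-invariance of $\A$ places $\hat U(g)$ in $\A'$ so the group acts only on the multiplicity factors, apply Peter--Weyl (proposition~\ref{prop:Peter_Weyl}) blockwise, and finish with Schur's lemma for $\overline\A$. The only difference is that you invoke the finite-dimensional structure theorem (Artin--Wedderburn/double commutant) as a known result, whereas the paper proves it from scratch via its minimal-projector lemma (lemma~\ref{lem:minimal_projectors}) and propositions~\ref{prop:algebra_decomposition_1} and~\ref{prop:algebra_decomposition_2} --- a choice the paper itself describes as purely expository.
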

\noindent
Clearly, this implies theorem~\ref{thm:QRF_POV} if $\dim\H < \infty$
and $G$ is compact.
We prove this by deriving a series of useful results.
\begin{proposition}{}{algebra_decomposition_1}
    Let $\H$ be a finite-dimensional Hilbert space, and let $\A < \O$ be an operator
    subalgebra with trivial centre, \ie\ $\Z = \A \cap \A' = \O[\hat\id]$.
    Then, there exists an isomorphism such that
    \begin{equation*}
        \H \iso \H_{X'} \otimes \H_X,
        \qquad \A \iso \hat\id_{X'} \otimes \O_X,
        \qquad \A' \iso \O_{X'} \otimes \hat\id_X.
    \end{equation*}
\end{proposition}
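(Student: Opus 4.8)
The plan is to run the standard representation theory of the finite-dimensional $*$-algebra $\A$, taking care throughout that all decompositions are \emph{orthogonal} so that they furnish genuine Hilbert-space isomorphisms. Since $\dim\H<\infty$, the unital $*$-closed subalgebra $\A<\O$ is automatically weakly closed and hence a von Neumann algebra. Because $\A$ is $*$-closed, the orthogonal complement of any $\A$-invariant subspace is again $\A$-invariant, so $\H$ decomposes into an orthogonal direct sum of $\A$-irreducible subspaces. First I would collect isomorphic irreducibles into isotypic components, obtaining an orthogonal decomposition $\H \iso \bigoplus_i \H_{X_i} \otimes \H_{X'_i}$ in which $\A$ acts block-diagonally, irreducibly on each representation space $\H_{X_i}$ and as the identity on each multiplicity space $\H_{X'_i}$.

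Next I would identify the centre with this decomposition. By Schur's lemma, any element of $\Z = \A \cap \A'$ acts as a scalar on each irreducible component; hence the minimal central projections are exactly the orthogonal projectors onto the isotypic components $\H_{X_i}\otimes\H_{X'_i}$, and $\dim\Z$ equals the number of distinct components. The hypothesis $\Z = \O[\hat\id]$ therefore forces a \emph{single} isotypic component, so that $\H \iso \H_{X'} \otimes \H_X$ with only one irreducible type present.

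On this single block $\A$ acts irreducibly on the finite-dimensional space $\H_X$. By Burnside's theorem (equivalently, the finite-dimensional double-commutant theorem, cf.~\cite{Davidson1996}), a unital subalgebra of $\O_X$ acting irreducibly on $\H_X$ is all of $\O_X$; after ordering the factors as in the statement this gives $\A \iso \hat\id_{X'} \otimes \O_X$. The commutant then follows from the elementary fact that any operator on $\H_{X'}\otimes\H_X$ commuting with $\hat\id_{X'}\otimes\O_X$ must have the form $\hat Y\otimes\hat\id_X$, so that $\A' \iso \O_{X'} \otimes \hat\id_X$.

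The only real obstacle is the bookkeeping in the first two steps: one must verify that the isotypic decomposition can be chosen orthogonal (guaranteeing a \emph{unitary}, not merely linear, isomorphism) and that the centre is correctly matched with the component projectors via Schur's lemma. Once a single isotypic block is secured, Burnside's theorem and the routine commutant computation close the argument immediately.
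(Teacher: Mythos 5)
Your overall route---orthogonal decomposition into irreducibles, grouping into isotypic components, and then Burnside on a single block---is a legitimate and genuinely different strategy from the paper's (which constructs a system of matrix units $\hat V_{ji}$ inside $\A$ from a maximal family of minimal orthogonal projections). However, there is a gap at the one place where the trivial-centre hypothesis has to do its work. Schur's lemma only gives you the inclusion $\Z \subseteq \mathrm{span}\{\text{isotypic projectors}\}$, \ie\ the upper bound $\dim\Z \leq (\text{number of components})$. Your ``hence the minimal central projections are exactly the orthogonal projectors onto the isotypic components'' is the \emph{reverse} inclusion, and it does not follow from Schur: the isotypic projectors manifestly lie in $\A'$ (since $\A$ preserves each component), but you must show they lie in $\A$ itself before you may call them central elements of $\A$. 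Without that, nothing as written rules out a scenario with several isotypic components and yet $\Z = \O[\hat\id]$, so the inference ``trivial centre $\Rightarrow$ single component''---the crux of the proof---is unsupported.

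The gap is real but patchable by standard semisimple-algebra theory, which you are implicitly invoking: after Burnside gives $\rho_i(\A) = \O_{X_i}$ on each component, $\A$ sits inside $\bigoplus_i \O_{X_i} \otimes \hat\id_{X'_i}$ and surjects onto each summand; since each $\O_{X_i}$ is simple and the $\rho_i$ are pairwise inequivalent, a short ideal/Goursat argument (or the uniqueness up to equivalence of the irreducible representation of a full matrix algebra) shows that $\A$ equals the whole direct sum, whence the component projectors do lie in $\A$ and the centre has dimension equal to the number of components. Note the instructive contrast with the paper's proof: there the candidate central projections $\hat\Pi_{i_a} = \sum_{j \in [i_a]} \hat\pi_j$ are built from projectors already inside $\A$, so membership in $\A$ is automatic and only membership in $\A'$ needs checking---exactly the opposite of your situation. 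If you add the joint-surjectivity step (or cite Artin--Wedderburn for it explicitly rather than attributing it to Schur), your argument closes and is, if anything, shorter than the paper's, at the price of outsourcing more to classical theorems.
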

\noindent
The proof of this proposition will rely on properties of certain projectors
contained in $\A$, and therefore we introduce some nomenclature.
For simplicity, whenever we say ``projector'',
we mean ``orthogonal projector''.
Furthermore, a projector $0 \neq \hat\pi \in \A$
is called \emph{minimal}, if every projector $\hat p \in \A$
with $\hat p\hat\pi = \hat p$
is either $\hat p = 0$ or $\hat p = \hat\pi$.
Note that if $\A = \O$, then the minimal projectors are precisely
the projectors onto one-dimensional subspaces,
and if $\A = \O[\hat\id]$, then the only minimal projector is $\hat\id$.
Such minimal projectors also play a central role in the infinite-dimensional generalization
of the result; see \eg\ \cite{Sorce2024}.
The proof of the proposition relies on the following technical result:
\begin{lemma}{}{minimal_projectors}
    Let $\H$ a be finite-dimensional Hilbert space, let $\A < \O$ be an operator
    subalgebra with trivial centre, \ie\ $\Z = \A \cap \A' = \O[\hat\id]$,
    and let $\hat\pi_1,\dots,\hat\pi_m \in \A$ be a maximal set
    of non-zero and mutually orthogonal projectors.
    Then:
    \begin{enumerate}
        \medskip
        \item\label{item:minimal} The projectors $\hat\pi_1,\dots,\hat\pi_m$ are all minimal
            and $\sum_i \hat\pi_i = \hat\id$.

        \medskip
        \item\label{item:subspaces} The subspaces $\hat\pi_j\A\hat\pi_i$ are one-dimensional.

        \medskip
        \item\label{item:matrix_elements} It is possible to choose non-zero operators
            $\hat V_{ji} \in \hat\pi_j\A\hat\pi_i$ such that
            $\hat V_{jk} \hat V_{li} = \delta_{kl} \hat V_{ji}$,
            $\hat V_{ji}^\dagger = \hat V_{ij}$,
            and $\hat\pi_j = \hat V_{ji} \hat\pi_i \hat V_{ji}^\dagger$.
            In particular, $\hat V_{ii} = \hat\pi_i$
            and $\hat V_{ji} \hat\pi_i \hat V_{ji}^\dagger = \hat\pi_j$.
    \end{enumerate}
\end{lemma}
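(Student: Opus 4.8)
The plan is to exploit finite-dimensionality together with the trivial-centre hypothesis to show that $\A$ behaves exactly like a full matrix algebra built on one of its minimal blocks, with the $\hat\pi_i$ serving as the diagonal matrix units and the $\hat V_{ji}$ as the off-diagonal ones. Everything reduces to understanding the ``compressions'' $\hat\pi_j\A\hat\pi_i$.

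First, for claim \ref{item:minimal} I would argue purely from maximality, with no appeal to the centre. Setting $\hat\pi := \sum_i\hat\pi_i\in\A$, if $\hat\pi\neq\hat\id$ then $\hat\id-\hat\pi$ is a nonzero projector in $\A$ orthogonal to every $\hat\pi_i$, contradicting maximality; hence $\sum_i\hat\pi_i=\hat\id$. Likewise, were some $\hat\pi_i$ not minimal there would be a projector $\hat p\in\A$ with $0\neq\hat p\leq\hat\pi_i$ and $\hat p\neq\hat\pi_i$, so replacing $\hat\pi_i$ by the orthogonal pair $\hat p$, $\hat\pi_i-\hat p$ enlarges the family and again contradicts maximality. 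The core input for the remaining claims is that each compression $\hat\pi_i\A\hat\pi_i$ equals $\bbC\hat\pi_i$, which I would derive from minimality and the finite-dimensional spectral theorem: a self-adjoint $\hat a=\hat\pi_i\hat a\hat\pi_i$ has spectral projectors that are polynomials in $\hat a$, hence lie in $\A$ and are $\leq\hat\pi_i$; minimality forces each to be $0$ or $\hat\pi_i$, so $\hat a\in\bbC\hat\pi_i$, and splitting a general element into self-adjoint parts gives $\hat\pi_i\A\hat\pi_i=\bbC\hat\pi_i$.

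Here is where the trivial centre enters, and I expect this to be the main obstacle: I must show that any two minimal projectors are connected by a partial isometry in $\A$, i.e.\ $\hat\pi_j\A\hat\pi_i\neq0$ for all $i,j$. I would define $\hat\pi_i\sim\hat\pi_j$ iff $\hat\pi_j\A\hat\pi_i\neq0$. Using the previous step one checks this is an equivalence relation: a nonzero $\hat X\in\hat\pi_j\A\hat\pi_i$ has $\hat X^\dagger\hat X=\lambda\hat\pi_i$ with $\lambda>0$, and after rescaling $\hat X\hat X^\dagger$ is a nonzero subprojector of $\hat\pi_j$ in $\A$, hence $=\hat\pi_j$ by minimality, so the relation is reflexive, symmetric and (by composing such partial isometries) transitive. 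For any class $C$, the projector $\hat z_C:=\sum_{i\in C}\hat\pi_i$ satisfies $\hat\pi_j\A\hat\pi_i=0$ whenever $i,j$ lie in different classes, which makes $\hat z_C$ commute with all of $\A$; thus $\hat z_C\in\Z=\bbC\hat\id$, forcing $\hat z_C=\hat\id$ and leaving a single equivalence class. Consequently $\hat\pi_j\A\hat\pi_i\neq0$ always. Claim \ref{item:subspaces} then follows quickly: fixing a partial isometry $\hat V_{ji}\in\hat\pi_j\A\hat\pi_i$ with $\hat V_{ji}\hat V_{ji}^\dagger=\hat\pi_j$, any $\hat X\in\hat\pi_j\A\hat\pi_i$ satisfies $\hat V_{ji}^\dagger\hat X\in\hat\pi_i\A\hat\pi_i=\bbC\hat\pi_i$, whence $\hat X=\hat\pi_j\hat X=\hat V_{ji}\hat V_{ji}^\dagger\hat X\in\bbC\hat V_{ji}$, so the space is one-dimensional.

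For claim \ref{item:matrix_elements} I would fix the base index $1$, choose normalised partial isometries $\hat V_{i1}\in\A$ with $\hat V_{i1}^\dagger\hat V_{i1}=\hat\pi_1$ and $\hat V_{i1}\hat V_{i1}^\dagger=\hat\pi_i$ (taking $\hat V_{11}=\hat\pi_1$), and then set $\hat V_{ji}:=\hat V_{j1}\hat V_{i1}^\dagger$. The relations $\hat V_{ji}^\dagger=\hat V_{ij}$, $\hat V_{jk}\hat V_{li}=\delta_{kl}\hat V_{ji}$, $\hat V_{ii}=\hat\pi_i$ and $\hat V_{ji}\hat\pi_i\hat V_{ji}^\dagger=\hat\pi_j$ then follow by direct manipulation, the only nontrivial input being the identity $\hat V_{k1}^\dagger\hat V_{l1}=\delta_{kl}\hat\pi_1$, itself a one-line consequence of $\hat\pi_k\hat\pi_l=\delta_{kl}\hat\pi_k$. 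All steps after the centre argument are routine linear algebra, so the genuine content of the lemma is the factoriality argument showing that the trivial centre collapses the minimal projectors into a single equivalence class.
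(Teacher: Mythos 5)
Your proposal is correct and follows essentially the same route as the paper's proof: the maximality argument for claim 1, the spectral-theoretic argument that the corner algebras $\hat\pi_i\A\hat\pi_i$ are spanned by $\hat\pi_i$, the partial-isometry equivalence relation whose class projectors must lie in the (trivial) centre, and the matrix-unit construction $\hat V_{ji}:=\hat V_{j1}\hat V_{i1}^\dagger$. The only deviations are cosmetic — e.g.\ your direct one-line proof of one-dimensionality via $\hat X=\hat V_{ji}(\hat V_{ji}^\dagger\hat X)$ where the paper argues by contradiction on a vector — so the two proofs are substantively identical.
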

\begin{proof}[Proof of lemma~\ref{lem:minimal_projectors}]
    We proceed in five steps.
    First, we prove property~\ref{item:minimal}.
    Second, we show that $\hat\pi_i\A\hat\pi_i$ is one-dimensional
    and spanned by $\hat\pi_i$.
    Third, we find that whenever $\hat\pi_j\A\hat\pi_i$ for $i\neq j$
    is not zero-dimensional, it must be one-dimensional.
    Fourth, we establish that the zero-dimensional case is actually forbidden
    by the triviality of $\Z$, showing property~\ref{item:subspaces}.
    Finally, we construct the operators required by property~\ref{item:matrix_elements}.

    \emph{First step.}
    If any of the $\hat\pi_i$ were not minimal,
    \ie\ if there is a projector $0 \neq \hat p\in\A$ with $\hat\pi_i \hat p
    = \hat p \neq \hat\pi_i$,
    then we could replace $\hat\pi_i$ in the list with the two projectors
    $\hat p, \hat\pi_i - \hat p \in \A$.
    The new list would again be mutually orthogonal
    but one element longer, violating maximality.
    Hence, all $\hat\pi_i$ must be minimal projectors.
    Furthermore, $\hat\id - \sum_i \hat\pi_i \in \A$,
    is a projector and orthogonal to all $\hat\pi_i$.
    If it were not zero, it could be added to the list,
    again breaking maximality.
    Thus, $\sum_i \hat\pi_i = \hat\id$.
    This proves property~\ref{item:minimal}.

    \emph{Second step.}
    Clearly, $\hat\pi_i \in \hat\pi_i\A\hat\pi_i$.
    Assume, by contradiction, that there exists
    $\hat\pi_i \not\reflectbox{\,$\propto$\!\!} \hat A \in \hat\pi_i\A\hat\pi_i$.
    Let $\hat A_\pm := \hat A \pm \hat A^\dagger$.
    One easily checks that $\hat A_\pm$ are both normal
    ($\hat A_\pm\hat A_\pm^\dagger = \hat A_\pm^\dagger \hat A_\pm$),
    hence diagonalizable.
    Also, $(\hat\pi_i\A\hat\pi_i)^\dagger = \hat\pi_i\A\hat\pi_i$,
    so $\hat A_\pm \in \hat\pi_i\A\hat\pi_i$.
    We diagonalize $\hat A_\pm$, obtaining a set of spectral projectors
    for each, which, without loss of generality, can be assumed to be minimal
    (just split large eigenspaces into smaller ones with same eigenvalue).
    These minimal spectral projectors cannot be equal to $\hat\pi_i$,
    since otherwise $\hat A_\pm \propto \hat\pi_i$
    ($\hat A_\pm$ being linear combinations of their spectral projectors,
    and if one spectral projector already equals $\hat\pi_i$,
    there cannot be any others),
    and hence $\hat A \propto \hat\pi_i$.
    Thus, the images of the spectral projectors must be strictly contained
    in the image of $\hat\pi_i$,
    implying that $\hat\pi_i$ is not minimal, a contradiction.
    Thus, such an $\hat A$ cannot exist,
    proving that $\hat\pi_i\A\hat\pi_i$ is one-dimensional
    and spanned by $\hat\pi_i$.
    
    \emph{Third step.}
    Assume that $\hat\pi_j\A\hat\pi_i \neq \{0\}$,
    and take $0 \neq \hat X \in \hat\pi_j\A\hat\pi_i$.
    We have $\hat X = \hat X\hat\pi_i = \hat\pi_j\hat X$,
    and hence $\hat X^\dagger\hat X = \hat\pi_i\hat X^\dagger\hat X\hat\pi_i
    \in \hat\pi_i\A\hat\pi_i$.
    Since $\hat X^\dagger\hat X > 0$,
    and taking into account the previous step,
    we find that $\hat X^\dagger\hat X = \lambda \hat\pi_i$.
    Without loss of generality we rescale $\hat X$ so that $\lambda = 1$,
    \ie\ $\hat X^\dagger\hat X = \hat\pi_i$.
    Furthermore, $\hat X\hat X^\dagger \in \hat\pi_j\A\hat\pi_j$,
    and hence analogously there exists $\mu > 0$ such that
    $\hat X\hat X^\dagger = \mu\hat\pi_j$.
    But
    \begin{equation*}
        \hat\pi_i = \hat X^\dagger\hat X = \hat X^\dagger\hat\pi_j\hat X
        = \frac{1}{\mu} \hat X^\dagger\hat X\hat X^\dagger\hat X
        = \frac{1}{\mu} \hat\pi_i\hat\pi_i = \frac{1}{\mu}\hat\pi_i,
    \end{equation*}
    so that $\mu = 1$.
    In conclusion,
    \begin{equation*}
        \hat\pi_j = \hat X\hat\pi_i\hat X^\dagger,
        \qquad \hat\pi_i = \hat X^\dagger\hat\pi_j\hat X.
    \end{equation*}
    We have found that $\hat X$ is an isometry onto the image of $\hat\pi_j$
    when restricted to the image of $\hat\pi_i$.
    The same is true for $\hat X^\dagger$ but with $i$ and $j$ exchanged.
    To show that $\hat\pi_j\A\hat\pi_i$ is one-dimensional,
    we assume towards contradiction that there exists an operator
    $\hat X \not\reflectbox{\,$\propto$\!\!} \hat A \in \hat\pi_j\A\hat\pi_i$.
    Without loss of generality, we can rescale it to have
    the same properties as $\hat X$.
    Because $\hat X = \hat X \hat\pi_i$ and $\hat A = \hat A \hat\pi_i$,
    it follows that $\hat X \hat\pi_i \not\propto \hat A \hat\pi_i$.
    Thus, there exists a state $\ket{\psi}$
    with $\hat\pi_i\ket{\psi} = \ket{\psi}$
    and such that $\hat A \ket{\psi} \not\propto \hat X\ket{\psi}$.
    Now since $\hat A$, $\hat X$ and their adjoints are isometries
    as described above, it follows that
    $\hat X^\dagger\hat A\sket{\psi} \not\propto
    \hat X^\dagger\hat X\sket{\psi} = \hat\pi_i\sket{\psi} = \sket{\psi}$.
    But $\hat X^\dagger\hat A \in \hat\pi_i\A\hat\pi_i$,
    implying $\hat X^\dagger\hat A \propto \hat\pi_i$
    and therefore $\hat X^\dagger\hat A \ket{\psi} \propto \ket{\psi}$,
    a contradiction.
    We conclude that $\hat A$ cannot exist, and hence that
    $\hat\pi_j\A\hat\pi_i$ is one-dimensional, spanned by $\hat X$.

    \emph{Fourth step.}
    Next, we note that if $\hat\pi_j\A\hat\pi_i \neq \{0\}$,
    then so is $\hat\pi_i\A\hat\pi_j = (\hat\pi_j\A\hat\pi_i)^\dagger
    \neq \{0\}$, and if furthermore $\hat\pi_k\A\hat\pi_j \neq \{0\}$,
    then also $\hat\pi_k\A\hat\pi_i \neq \{0\}$.
    The second statement follows, because, according to the previous step,
    we can find $\hat X_{ji} \in \hat\pi_j\A\hat\pi_i$
    and $\hat X_{kj} \in \hat\pi_k\A\hat\pi_j$
    such that $\hat X_{ji} \hat\pi_i \hat X_{ji}^\dagger
    = \hat\pi_j$ and $\hat X_{kj}\hat\pi_j\hat X_{kj}^\dagger = \hat\pi_k$;
    therefore $\hat X_{kj} \hat X_{ji} \hat\pi_i (\hat X_{kj} \hat X_{ji})
    = \hat\pi_k$, meaning that $\hat X_{kj} \hat X_{ji}
    \in \hat\pi_k\A\hat\pi_i$ is non-zero. 
    We now consider the relation $\sim$ on indices $i,j \in \{1,\dots,m\}$
    defined by
    \begin{equation*}
        j \sim i \quad\iff\quad \hat\pi_j \A \hat \pi_i \neq \{0\}.
    \end{equation*}
    From the properties of the subspaces $\hat\pi_j \A \hat\pi_i$
    we have shown so far, it follows that $\sim$ is an equivalence relation.
    Let $[i]$ be the equivalence class of $i$,
    and let $\{i_a\}_{a=1}^q$ be a set of representatives for each class.
    We then define the projectors
    $\hat\Pi_i := \sum_{j\in [i]} \hat\pi_j \in \A$.
    With them, we can decompose any operator $\hat A\in\A$ as
    \begin{equation*}
        \hat A = \sum_{i,j = 1}^n \hat\pi_j \hat A \hat\pi_i
        = \sum_{a=1}^q \, \sum_{j,i \in [i_a]}
        \hat\pi_j\hat A\hat\pi_i
        = \sum_{a=1}^q \hat\Pi_{i_a} \hat A \hat\Pi_{i_a}.
    \end{equation*}
    This decomposition implies
    \begin{equation*}
        \hat\Pi_{i_1} \hat A = \hat\Pi_{i_1} \hat A \hat\Pi_{i_1}
        = \hat A \Pi_{i_1}
    \end{equation*}
    for every $\hat A \in \A$, \ie\ $\hat\Pi_{i_1} \in \Z$
    (the same is true for every other $i_a$ besides $i_1$,
    but one contradiction will suffice).
    Towards contradiction, we now assume that there exists a pair $(k,l)$
    of indices such that $\hat\pi_l\A\hat\pi_k = \{0\}$.
    Then, there must be more than one equivalence class of the relation $\sim$
    since $k \not\sim l$.
    Thus, $\hat\Pi_{i_1} \neq \hat\id$,
    making $\hat\Pi_{i_1}$ an element of the centre $\Z$ of $\A$
    which is not a multiple of the identity.
    This contradicts the triviality of $\Z$,
    and hence the pair $(l, k)$ cannot exist.
    This proves property~\ref{item:subspaces}: all subspaces $\hat\pi_j\A\hat\pi_i$
    are one-dimensional and spanned by an element like $\hat X$
    in the previous step.

    \emph{Fifth step.}
    For all $i \neq 1$, we now choose an operator $\hat V_{i1}
    \in \hat\pi_i\A\hat\pi_1$ with the properties of $\hat X$
    in step three (\ie\ it has the isometry properties discussed previously).
    We also set $\hat V_{11} := \hat\pi_1$.
    Furthermore, we define
    \begin{equation*}
        \hat V_{ji} := \hat V_{j1} \hat V_{i1}^\dagger
    \end{equation*}
    for all pairs $j$, $i$.
    Note that this is well-defined as it matches our previous choices for $i=1$.
    It is now easy to verify that the operators $\hat V_{ji}$ satisfy
    the requirements in property~\ref{item:matrix_elements}
    using the isometry properties we showed in the third step.
\end{proof}

\medskip
\noindent
Because $\A = \sum_{i,j = 1}^m \hat\pi_j \A \hat\pi_i$,
it follows from lemma~\ref{lem:minimal_projectors}
that the operators $\hat V_{ji}$ form a basis of $\A$,
and furthermore, that they behave under multiplication like
the standard basis matrices of $\bbC^{m\times m}$:
the standard basis matrix $E_{ji}$ consists of zeroes except for a one
at the intersection of the $j$th row and $i$th column, and
$E_{ji} \cdot E_{lk} = \delta_{il}\, E_{jk}$
as well as $E_{ji}^\dagger = E_{ij}$.
Hence, we have an algebra isomorphism $\A \to \bbC^{m\times m}$,
$\hat V_{ji} \mapsto E_{ji}$.
Since $\bbC^{m\times m} \iso \O_X$
for some $m$-dimensional Hilbert space $\H_X$,
we even have $\A \iso \O_X$.
Since all projectors $\hat\pi_i$ are isomorphic
(property~\ref{item:matrix_elements} in lemma~\ref{lem:minimal_projectors}) with common dimension $n$,
it is natural to interpret $\A \iso \hat\id_{X'} \otimes \O_X$,
where $\H_{X'}$ is an $n$-dimensional Hilbert space
and $\H \iso \H_{X'} \otimes \H_X$.
Then, $n$ is the multiplicity with which $\O_X$
occurs in the full algebra $\O$.
This is the intuitive reason behind the isomorphism relations of proposition
\ref{prop:algebra_decomposition_1}.

\medskip
\begin{proof}[Proof of proposition~\ref{prop:algebra_decomposition_1}]
    Let $\hat\pi_1,\dots,\hat\pi_m \in \A$
    be a maximal set of mutually orthogonal projectors.
    Applying lemma~\ref{lem:minimal_projectors}, we obtain operators
    $\hat V_{ji} \in \hat\pi_j\A\hat\pi_i$,
    and as already remarked, property~\ref{item:matrix_elements} 
    in lemma~\ref{lem:minimal_projectors} tells us
    that the dimensions of the images of those projectors are all equal:
    \begin{equation*}
        \dim\hat\pi_i = \tr(\hat\pi_i)
        = \tr(\hat V_{i1}\hat\pi_1\hat V_{i1}^\dagger)
        = \tr(\hat\pi_1\hat V_{i1}^\dagger\hat V_{i1})
        = \tr(\hat\pi_1\hat\pi_1)
        = \tr(\hat\pi_1)
        =: n > 0.
    \end{equation*}
    Since the projectors form a complete set 
    (property~\ref{item:minimal} in lemma~\ref{lem:minimal_projectors}),
    $\dim\H = m \cdot n$.
    We thus have $\H \iso \H_{X'} \otimes \H_X$, where $\H_X$ and $\H_{X'}$
    are $m$- and $n$-dimensional Hilbert spaces respectively.
    Guided by the intuition above, we will now construct an explicit
    isomorphism $\hat\phi : \H_{X'} \otimes \H_X \to \H$
    such that the other two (algebra) isomorphism relations of the proposition
    are also satisfied.

    For this, we choose orthonormal bases
    $\{\ket{i}_X\}_{i=1}^m \subset \H_X$,
    $\{\ket{i}_{X'}\}_{i=1}^n \subset \H_{X'}$
    and $\{\ket{i}_1\}_{i=1}^n \subset \text{image}(\hat\pi_1) \subset \H$
    of $\H_X$, $\H_{X'}$ and the image of $\hat\pi_1$ respectively.
    The map $\hat\phi$ is then defined by 
    \begin{equation*}
        \hat\phi \ket{j}_{X'} \ket{i}_X := \hat V_{i1} \ket{j}_1
    \end{equation*}
    and the requirement that $\hat\phi$ is linear.
    It preserves the inner product,
    $\bra{l}_{X'}\bra{k}_X \hat\phi^\dagger\hat\phi \ket{j}_{X'} \ket{i}_X
    = \bra{l}_1 \hat V_{k1}^\dagger \hat V_{i1} \ket{j}_1
    = \delta_{ki} \bra{l}_1 \hat\pi_1 \ket{j}_1
    = \delta_{ki}\delta_{lj}$,
    and, thus, for dimensional reasons, $\hat\phi$ must be an isomorphism.
    Furthermore,
    \begin{equation*}
        \hat\phi \bigl( \hat\id_{X'} \otimes \sketbra{j}{i}_X \bigr)
        \hat\phi^\dagger = \sum_k \hat\phi\bigl(
        \sketbra{k}{k}_{X'} \otimes \sketbra{j}{i}_X
        \bigr) \hat\phi^\dagger
        = \sum_k \hat V_{j1} \ketbra{k}{k}_1 \hat V_{i1}^\dagger
        = \hat V_{j1} \hat\pi_1 \hat V_{i1}^\dagger
        = \hat V_{j1} \hat V_{i1}^\dagger
        = \hat V_{ji}.
    \end{equation*}
    Because the operators $\hat V_{ji}$ form a basis of $\A$
    (property~\ref{item:subspaces} in lemma~\ref{lem:minimal_projectors}),
    it follows that $\hat\phi(\hat\id_{X'} \otimes \O_X)\hat\phi^\dagger = \A$.
    Thus, $\hat\phi$ also induces the second
    isomorphism relation of the proposition.
    To check the third relation, we note that $\hat\phi$ must map the commutant
    of $\hat\id_{X'} \otimes \O_X$ to the commutant $\A'$ of $\A$
    since commutation relations are conserved by isomorphisms.
    But the commutant of $\hat\id_{X'} \otimes \O_X$ is easily seen to be
    $\O_{X'} \otimes \hat\id_X$.
    Hence, $\hat\phi(\O_{X'} \otimes \hat\id_X) \hat\phi^\dagger =
    \A'$, which is the third isomorphism relation.
\end{proof}

\begin{proposition}{}{algebra_decomposition_2}
    Let $\H$ be a finite-dimensional Hilbert space, and let $\A < \O$ be an operator algebra.
    There exists an isomorphism such that
    \begin{equation*}
        \H \iso \bigoplus_k \H_{*,k} \otimes \H_k,
        \qquad \A \iso \bigoplus_k \hat\id_{*,k} \otimes \O_k,
        \qquad \A' \iso \bigoplus_k \O_{*,k} \otimes \hat\id_k,
    \end{equation*}
    where $k$ labels the simultaneous eigenspaces
    $\J_k \iso \H_{*,k} \otimes \H_k$
    of all elements in the centre $\Z := \A \cap \A'$.
\end{proposition}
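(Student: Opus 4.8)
The plan is to reduce the general statement to the trivial-centre case already established in proposition~\ref{prop:algebra_decomposition_1} by decomposing $\H$ over the centre $\Z = \A \cap \A'$. Since $\A$ is closed under Hermitian transposition, so is $\Z$, and $\Z$ is thus a commutative algebra of normal operators. By the spectral theorem for commuting normal operators, all elements of $\Z$ are simultaneously diagonalizable, and their joint eigenspaces $\J_k$ partition $\H$ orthogonally. I let $\hat P_k$ denote the orthogonal projector onto $\J_k$; these are the minimal projectors of $\Z$, they satisfy $\hat P_k \hat P_{k'} = \delta_{kk'}\hat P_k$ and $\sum_k \hat P_k = \hat\id$, and—being spectral projectors of elements of the finite-dimensional commutative algebra $\Z$—they themselves lie in $\Z$, so that in fact $\Z = \mathrm{span}\{\hat P_k\}_k$.

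First I would use that each $\hat P_k \in \Z \subset \A \cap \A'$ commutes with every element of $\A$ and of $\A'$. Consequently both algebras are block-diagonal with respect to $\H \iso \bigoplus_k \J_k$: for $\hat a \in \A$ we have $\hat a = \bigoplus_k \hat a_k$ with $\hat a_k := \hat P_k \hat a \hat P_k$ regarded as an operator on $\J_k$, and these restrictions form an operator algebra $\A_k := \hat P_k \A \hat P_k$ on $\J_k$. The crucial step is to show that each $\A_k$ has \emph{trivial} centre as an algebra on $\J_k$. Suppose $\hat z_k$ lies in the centre of $\A_k$, and extend it by zero off $\J_k$ to an operator $\hat z \in \O$. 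Since $\hat P_k \in \A$, we have $\hat z = \hat P_k \hat a \hat P_k \in \A$ for the corresponding $\hat a \in \A$, and because every element of $\A$ is block-diagonal one checks directly that $[\hat z, \hat a] = 0$ for all $\hat a \in \A$, so $\hat z \in \A \cap \A' = \Z$. But then $\hat z = \sum_{k'} \lambda_{k'} \hat P_{k'}$ is supported only on $\J_k$, forcing $\hat z = \lambda_k \hat P_k$ and hence $\hat z_k = \lambda_k \hat\id_{\J_k}$. Thus the centre of $\A_k$ is trivial, and proposition~\ref{prop:algebra_decomposition_1} applies to each pair $(\A_k, \J_k)$.

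With this I would invoke proposition~\ref{prop:algebra_decomposition_1} block by block to obtain $\J_k \iso \H_{*,k} \otimes \H_k$ with $\A_k \iso \hat\id_{*,k} \otimes \O_k$ and $\A_k' \iso \O_{*,k} \otimes \hat\id_k$, where the last commutant is taken within the algebra of all operators on $\J_k$. Summing over $k$ yields the claimed decompositions of $\H$ and $\A$ immediately. For the commutant relative to $\O$, I note that any $\hat X \in \A'$ commutes with all $\hat P_k$ and is therefore block-diagonal, $\hat X = \bigoplus_k \hat X_k$, with each $\hat X_k$ commuting with $\A_k$, i.e.\ $\hat X_k \in \A_k'$; conversely any such block-diagonal operator commutes with $\A = \bigoplus_k \A_k$, so $\A' \iso \bigoplus_k \O_{*,k}\otimes\hat\id_k$. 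Since the $\J_k$ were constructed as the simultaneous eigenspaces of $\Z$, the indexing matches the statement. The main obstacle is the trivial-centre argument for the blocks $\A_k$—everything else is bookkeeping around the central projectors $\hat P_k$.
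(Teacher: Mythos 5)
Your proposal is correct and follows essentially the same route as the paper's proof: simultaneously diagonalize the centre $\Z$, note that the joint-eigenspace projectors lie in $\Z$ so that $\A$ and $\A'$ become block-diagonal, verify that each block $\A_k$ has trivial centre, and apply proposition~\ref{prop:algebra_decomposition_1} blockwise. Your explicit extension-by-zero argument for the triviality of the centre of $\A_k$ is just a fleshed-out version of the paper's terser remark that a non-trivial central element would further subdivide $\J_k$ into simultaneous eigenspaces.
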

\begin{proof}
    The centre $\Z$ is Abelian, as it must commute with itself.
    But then for $\hat Z \in \Z$,
    $\hat Z^\dagger\hat Z = \hat Z\hat Z^\dagger$,
    and thus $\hat Z$ is diagonalizable.
    Since all elements of $\Z$ commute, we can simultaneously diagonalize them.

    Let $k$ label the simultaneous eigenspaces $\J_k \subset \H$ of $\Z$:
    $\H = \bigoplus_k \J_k$.
    The projector $\hat\pi_k$ onto the eigenspace corresponding to $k$
    is contained in $\Z$:
    the eigenprojectors of an operator are polynomials
    of the operator, derivable from the characteristic polynomial,
    and $\Z$ is an operator algebra and thus closed
    under taking of polynomials.
    $[\A,\hat\pi_k] = [\A',\hat\pi_k] = 0$ 
    implies $\A\hat\pi_k = \hat\pi_k\A\hat\pi_k =: \A_k$
    and similarly $\A'\hat\pi_k = \hat\pi_k\A'\hat\pi_k =: \A_{*,k}$.
    Both $\A = \bigoplus_k \A_k$ and $\A' = \bigoplus_k \A_{*,k}$
    are thus block-diagonal with respect to $k$,
    and hence the commutant of $\A_k$ with respect to all operators
    on $\J_k$ is $\A_{*,k}$.
    Furthermore, the centre of $\A_k$, $\A_k \cap \A_{*,k}$,
    must be trivial, for otherwise the eigenspace $\J_k$
    could be further divided into simultaneous eigenspaces.

    Within each $k$-sector, the assumptions of proposition
   ~\ref{prop:algebra_decomposition_1} now apply,
    with the simultaneous eigenspace corresponding to $k$
    in the role of $\H$, $\A_k$ in the role of $\A$,
    and $\A_{*,k}$ in the role of $\A'$.
    Applying the proposition for each $k$ separately
    yields the three isomorphism relations,
    as well as $\J_k \iso \H_{*,k} \otimes \H_k$.
\end{proof}

\medskip
\begin{proof}[Proof of proposition~\ref{prop:algebra_decomposition_full}]
    Apply proposition~\ref{prop:algebra_decomposition_2}
    to obtain an isomorphism
    \begin{equation*}
        \H \iso \bigoplus_k \H_{*,k} \otimes \H_k,
        \qquad \A \iso \bigoplus_k \hat\id_{*,k} \otimes \O_k,
        \qquad \A' \iso \bigoplus_k \O_{*,k} \otimes \hat\id_k,
    \end{equation*}
    in line with property~\ref{item:k_labels} in the proposition
    (except that $\H_{*,k}$ is not yet a tensor product).
    
    The algebra $\A'$ contains the representation operators $\hat U(g)$
    because $\A$ is $G$-invariant.
    Thus, under the above isomorphism,
    $\hat U(g) \iso \bigoplus_k \hat U_{*,k}(g) \otimes \hat\id_k$,
    and $\hat U_{*,k}$ must be unitary since $\hat U$ is unitary.
    We can thus apply the Peter-Weyl theorem
    (proposition~\ref{prop:Peter_Weyl})
    to $\H_{*,k}$ with representation $\hat U_{*,k}$.
    This results in a further isomorphism
    \begin{equation*}
        \H_{*,k} \iso \bigoplus_r \H_r \otimes \H_{r,k},
        \qquad \hat U_{*,k}(g) \iso \bigoplus_r \hat U_r(g)
        \otimes \hat\id_{r,k},
    \end{equation*}
    where $r$ enumerates all inequivalent irreducible unitary
    representations of $G$ with representation spaces $\H_r$
    and actions $\hat U_r$, and $\H_{r,k}$ are the multiplicity spaces.
    This is in line with property~\ref{item:r_labels} in the proposition.
    We now concatenate the two isomorphisms to find the decomposition
    of the proposition.
    Note that we can exchange the order of the direct sums over $k$ and $r$
    and hence combine them into a single sum
    only in the decompositions of $\H$, $\hat U(g)$ and $\A'$,
    but not in the decomposition of $\A$:
    choosing an operator in the algebra
    $\bigoplus_k\bigl(\bigoplus_r \hat\id_r \otimes \hat\id_{r,k} \bigr) \otimes \O_k$
    corresponds to choosing an operator in each algebra summand, \ie\
    in $\O_k$ for each $k$ labelling a sector of the centre;
    meanwhile, choosing an operator in the algebra
    $\bigoplus_{k,r} \hat\id_r \otimes \hat\id_{r,k} \otimes \O_k$
    also corresponds to choosing an operator in each algebra summand,
    but now $\hat\id_r \otimes \hat\id_{r,k} \otimes \O_k$ is an algebra summand
    for each pair $(r, k)$,
    and the same $k$-value may be paired with several $r$-values;
    this makes the latter algebra generally larger than the former.

    Finally, let $\bigoplus_r \hat A_r \otimes \hat B_{r,k}
    \in \overline\A_k$ ($\overline\A \subset \A'$ is block-diagonal
    with respect to $k$ because $\A'$ is).
    $\overline\A = \A' \cap \mathsf G[\O]$ commutes with the representation,
    and so $[\hat A_r, \hat U_r(g)] = 0$ for all $r$.
    But because $\hat U_r$ is irreducible, Schur's lemma implies that
    $\hat A_r \propto \hat\id_r$.
    Therefore, $\overline\A_k \iso \bigoplus_r \hat\id_r \otimes \O_{r,k}$.
\end{proof}

\medskip
\paragraph*{Corollaries.}
An important corollary of proposition~\ref{prop:algebra_decomposition_2}
is the von Neumann \emph{double commutant theorem}
(usually stated for infinite dimensions~\cite{Neumann1930,Davidson1996}):
\begin{proposition}{}{double_commutant}
    If $\H$ is a finite-dimensional Hilbert space and $\A < \O$ an operator subalgebra,
    then $\A'' = \A$.
\end{proposition}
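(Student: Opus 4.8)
The plan is to reduce everything to the structural decomposition already established in proposition~\ref{prop:algebra_decomposition_2} and then compute the second commutant blockwise. First I would invoke proposition~\ref{prop:algebra_decomposition_2} to fix an isomorphism realizing
\[
    \H \iso \bigoplus_k \H_{*,k} \otimes \H_k,
    \qquad \A \iso \bigoplus_k \hat\id_{*,k} \otimes \O_k,
    \qquad \A' \iso \bigoplus_k \O_{*,k} \otimes \hat\id_k.
\]
Because a Hilbert space isomorphism conjugates operators and hence preserves all commutation relations, it carries $\A''$ onto the second commutant of the right-hand side; it therefore suffices to prove $\A'' = \A$ for these concrete algebras, which I assume from now on.

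The inclusion $\A \subseteq \A''$ is immediate, since every element of $\A$ commutes with everything in $\A'$ by the very definition of the commutant. The substance is the reverse inclusion $\A'' \subseteq \A$. I would take $\hat X \in \A'' = (\A')'$ and first argue that it is block diagonal: the orthogonal projectors $\hat\pi_k$ onto the summands $\H_{*,k}\otimes\H_k$ belong to $\A'$ (in block $k$ they equal $\hat\id_{*,k}\otimes\hat\id_k \in \O_{*,k}\otimes\hat\id_k$ and vanish elsewhere), so $[\hat X,\hat\pi_k]=0$ for every $k$ and thus $\hat X = \bigoplus_k \hat X_k$ with $\hat X_k$ an operator on $\H_{*,k}\otimes\H_k$. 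Within a fixed block, $\hat X_k$ commutes with all of $\O_{*,k}\otimes\hat\id_k$, and the commutant of $\O_{*,k}\otimes\hat\id_k$ inside the full operator algebra on $\H_{*,k}\otimes\H_k$ is exactly $\hat\id_{*,k}\otimes\O_k$ — the same elementary tensor-factor computation already used at the end of the proof of proposition~\ref{prop:algebra_decomposition_1}. Hence $\hat X_k \in \hat\id_{*,k}\otimes\O_k$ for each $k$, so $\hat X \in \bigoplus_k \hat\id_{*,k}\otimes\O_k = \A$, which closes the argument.

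The only non-formal ingredient is the tensor-factor fact that the commutant of $\O_{*,k}\otimes\hat\id_k$ equals $\hat\id_{*,k}\otimes\O_k$, but this is routine (expand $\hat X_k$ in a product basis and impose commutation with $\hat A\otimes\hat\id_k$ for all $\hat A$) and is already on record from proposition~\ref{prop:algebra_decomposition_1}. So I expect no genuine obstacle: essentially the entire content of the double commutant theorem has been packaged into proposition~\ref{prop:algebra_decomposition_2}, and the present statement is a short corollary. The single point to keep in mind is that the reduction to the concrete block form is legitimate precisely because the isomorphism is unitary and hence preserves commutants; this is also where the hypothesis $\dim\H<\infty$ enters, via the availability of proposition~\ref{prop:algebra_decomposition_2}, and it is the sole place finite-dimensionality is needed.
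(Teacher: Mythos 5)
Your proposal is correct and follows essentially the same route as the paper's own proof: reduce to the block decomposition of proposition~\ref{prop:algebra_decomposition_2}, deduce that elements of $\A''$ are block-diagonal (you via the block projectors $\hat\pi_k \in \A'$, the paper via $\Z \subset \A'$, which is the same observation), and then compute the commutant of $\O_{*,k} \otimes \hat\id_k$ blockwise. No gaps; the argument is sound as written.
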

\begin{proof}
    $\A''$ must be block-diagonal with respect to $k$ because
    $\Z = \A \cap \A'$ is contained in the centre of $\A''$.
    The commutant of $\O_{*,k} \otimes \hat\id_k$
    on $\H_{*,k} \otimes \H_k$ is simply $\hat\id_{*,k} \otimes \O_k$,
    leading to $\A'' = \A$.
\end{proof}

\medskip
\paragraph*{Generalizations.}
We now provide a rough roadmap to how one could generalize our results
rigorously to infinite-dimensional spaces and non-compact symmetry groups.
However, since following this roadmap would incur considerable mathematical
difficulties and subtleties, we do not attempt it in this work;
we content ourselves with the roadmap.

Proposition~\ref{prop:algebra_decomposition_1}
generalizes to $\dim\H = \infty$,
if $\O$ is the von Neumann algebra of bounded operators
and $\A$ is a von Neumann algebra of \emph{type I}~%
\cite{Neumann1930,Murray1936}.
Intuitively, the finite-dimensional considerations of minimal projectors
still largely work in infinite-dimensional type-I algebras,
because they contain minimal projectors which can be identified with
rank-1 orthogonal projectors in $\O_X$~\cite{Sorce2024}.
Similarly, proposition~\ref{prop:algebra_decomposition_2} generalizes
under the same assumptions ($\A$ bounded and type-I),
but direct sums have to be replaced by more general combinations
of direct sums and direct integrals~\cite{Neumann1949}.
Finally, proposition~\ref{prop:double_commutant},
the von Neumann double commutant theorem,
actually generalizes without restriction on the type~\cite{Neumann1930}.
See~\cite{Sorce2024} for an overview of von Neumann algebras
and their decompositions.
Finally, the generalization of proposition
\ref{prop:algebra_decomposition_full}
follows from the generalizations of the previous propositions,
combined with that of the Peter-Weyl theorem, 
proposition~\ref{prop:Peter_Weyl},
which we already discussed in appendix~\ref{app:representation_theory_groups}.
Notably, here we only need the first part of the Peter-Weyl theorem
(complete reducibility of unitary representations).

Roughly speaking, proposition~\ref{prop:algebra_decomposition_full}
and, hence theorem~\ref{thm:QRF_POV}, generalize with a combination of direct integrals and sums replacing
direct sums if both $G$ and $\A$ are type I in their respective sense.
Particularly, the involved algebras are bounded.
This, of course, excludes some physically interesting operators,
such as the position $\hat x$ of a quantum particle.
However, we can approximate unbounded operators by bounded ones:
\eg, the position $\hat x$, taken as a POVM with improper Dirac-$\delta$
elements can be approximated by a POVM with well-defined projectors,
such as Gaussian states of arbitrarily narrow width.

Essentially, the only real obstacle to generalizing our framework to
infinite dimensions and non-compact symmetry groups
is that the involved algebras must be type I:
for type II and type III algebras, the notion of perspective of a QRF,
or rather, the implementation of the
\hyperref[ppl:perspective]{perspective principle},
would have to be quite significantly extended beyond what we did in this work.
For an algebra $\A$ of type II, it is possible to show that
$\A \iso \O_X \otimes \tilde\A_{X'}$, acting on a bipartite Hilbert space
$\H_X \otimes \H_{X'}$ (with $\H_X$ as large as possible),
and where $\tilde\A_{X'}$ is a type II von Neumann algebra
acting on $\H_{X'}$~\cite{Murray1936,Murray1943};
see again~\cite{Sorce2024} for an overview.
It is conceivable that our interpretation of the
\hyperref[ppl:perspective]{perspective principle}
and hence our notion of perspective of a QRF can be generalized
in this direction.
For the type III case, much less can be said~\cite{Murray1936,Murray1943}.
But type II is perhaps enough:
recent work on so-called \emph{type reduction} in quantum field theories~%
\cite{Witten2022,Chandrasekaran2023,Fewster2024,DeVuyst2025,DeVuyst2025a}
has shown that certain type-III algebras of quantum field observables
can be relationalized through the introduction of an additional quantum system,
specifically a \emph{quantum clock}, resulting in type-II algebras.
The quantum clock can be seen as a specific QRF for time evolution,
\ie\ $G = (\bbR,+)$,
and the relationalization procedure resembles
our relationalization map $\mathsf R_A$.
It is conceivable that type reduction occurs quite generally
for algebras generated from relationalized operators,
regardless of symmetry group.

\section{Regular QRFs}
\label{app:regular}

\noindent
This appendix provides details as well as proofs surrounding regular QRFs.
As explained in the main text, there are many realizations
of an ideal QRF, one of which being \emph{regular} QRFs.
Regular QRFs have the advantages of being the most natural quantum extension
of classical RFs, and of their relatively simple mathematical structure.
Thus, it often makes sense to focus on regular QRFs.
Regular QRFs are often considered in the literature~%
\cite{Aharonov1984,Angelo2011,Giacomini2019,Vanrietvelde2020,Hamette2020,
Hamette2021,Glowacki2023a,CastroRuiz2025a,CastroRuiz2025}.

\medskip
\paragraph*{Construction of a Regular QRF.}
Recall (see section~\ref{sec:QRF_POV})
that a regular QRF for the symmetry group $G$
is a quantum system with Hilbert space $\H \iso L^2(G)$,
the action of $G$ is the left-regular representation $\hat L$,
and one takes the improper projectors $\ketbra{g}{g}$
onto Dirac-distributions as orientation POVM elements.
We call such QRFs ``regular'' because they use the (left)
regular representation.

The improper projectors $\hat \gamma(g) := \ketbra{g}{g}$, $g\in G$,
indeed define a covariant POVM:
\begin{equation*}
    \mathsf U(g')[\hat \gamma(g)] = \hat \gamma(g'g)
    \quad \forall\, g,g' \in G,
\end{equation*}
and completeness $\int_G \der g\, \ketbra{g}{g} = \hat\id$
was shown in appendix \ref{app:representation_theory_groups}.
Thanks to \cref{eq:app_regular_overlaps},
the orientation POVM elements are orthogonal,
and the QRF therefore ideal:
we can just take the improper states corresponding to the POVM elements
to obtain states with definite orientations.

\medskip
\paragraph*{Regular QRF Jumps and Transformations.}
We now turn to show the results surrounding regular QRFs
needed for the main text.
We begin by deriving the properties of the map $\hat W$,
which make $\hat W^\dagger$ a jump into the perspective of a regular QRF;
this result was first shown in~\cite{CastroRuiz2025a}.
\begin{proposition}{Jumps for Regular QRFs}{jump}
    Let $A$ be a regular QRF for the symmetry group $G$,
    and $S$ an arbitrary quantum system with unitary representation
    $\hat U_S$ of $G$,
    both described from the perspective of the laboratory $L$.
    Then the map
    \begin{equation}
        \hat W := \int_G\der g\, \ketbra{g}{g}_A \otimes \hat U_S(g)
        \label{eq:app_regular_inverse_jump_core}
    \end{equation}
    is unitary. Furthermore,
    \begin{equation}
        \mathsf W\bigl[ \hat U_A(g) \otimes \hat\id_S \bigr]
        = \hat U_A(g) \otimes \hat U_S(g)
        \quad \forall\,g \in G,
        \label{eq:regular_jump_representation}
    \end{equation}
    and it induces the following isomorphism relations of operator algebras:
    \begin{align*}
        \mathsf G_A[\O_A] \otimes \O_S &\iso \A^{\text{inv}(AS)|L}, \\
        \hat\id_A \otimes \O_S &\iso \A^{S:A|L}, \\
        \mathsf G_A[\O_A] \otimes \hat\id_S &\iso
        \A^{\overline{S:A}|L}.
    \end{align*}
\end{proposition}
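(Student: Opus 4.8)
The plan is to verify the three assertions in turn — unitarity, the intertwining relation, and the three algebra isomorphisms — all of which reduce to formal manipulations of the improper basis $\{\ket{g}\}_{g\in G}$ together with the algebraic properties of the $G$-twirl recorded in proposition~\ref{prop:G_twirl}. First I would establish unitarity. Writing $\hat W^\dagger = \int_G\der g\,\ketbra{g}{g}_A \otimes \hat U_S^\dagger(g)$ and using the orthogonality relation $\braket{g'}{g} = \delta(g'^{-1}g)$ from \cref{eq:app_regular_overlaps}, the product $\hat W^\dagger\hat W$ collapses under one Haar integral to $\int_G\der g\,\ketbra{g}{g}_A \otimes \hat U_S^\dagger(g)\hat U_S(g) = \hat\id_A \otimes \hat\id_S$, invoking completeness $\int_G\der g\,\ketbra{g}{g}_A = \hat\id_A$; the computation of $\hat W\hat W^\dagger$ is identical. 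For the intertwining relation I would use $\hat U_A = \hat L$ with $\hat L(g_0)\ket{g} = \ket{g_0 g}$, conjugate $\hat U_A(g_0) \otimes \hat\id_S$ by $\hat W$, collapse the resulting double integral with the delta, and apply the homomorphism property $\hat U_S(g_0 g)\hat U_S^\dagger(g) = \hat U_S(g_0)$ together with Haar invariance to recognise $\int_G\der g\,\ket{g_0 g}_A\bra{g}_A = \hat L(g_0)$, yielding $\hat U_A(g_0) \otimes \hat U_S(g_0)$.

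Next I would treat the relational algebra. A direct computation of the same kind gives, for every $\hat m_S \in \O_S$,
\[
    \mathsf W[\hat\id_A \otimes \hat m_S]
    = \int_G\der g\,\ketbra{g}{g}_A \otimes \mathsf U_S(g)[\hat m_S]
    = \mathsf R_A[\hat m_S],
\]
where I used $\hat\gamma_A(g) = \ketbra{g}{g}_A$ for the regular QRF. Since $\mathsf W$ is conjugation by a unitary, it is an algebra isomorphism; hence $\mathsf W[\hat\id_A \otimes \O_S] = \mathsf R_A[\O_S]$ is already an algebra and therefore coincides with the algebra $\A^{S:A|L} = \O[\mathsf R_A[\O_S]]$ it generates. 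This establishes the second isomorphism, and incidentally recovers the claim from the main text that $\mathsf R_A$ is an algebra isomorphism for regular QRFs.

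For the remaining two relations I would exploit the intertwining relation, which gives $\mathsf W^\dagger[\hat U_A(g)\otimes\hat U_S(g)] = \hat U_A(g)\otimes\hat\id_S$, so that $\mathsf W^\dagger$ carries the global action to $\hat L(g)\otimes\hat\id_S$. Because $\mathsf W^\dagger$ is an algebra automorphism preserving commutants, the invariant algebra $\A^{\text{inv}(AS)|L}$ is mapped to the commutant of $\{\hat L(g)\otimes\hat\id_S\}_{g}$ in $\O_A\otimes\O_S$. By property~4 of proposition~\ref{prop:G_twirl}, the commutant of $\{\hat L(g)\}_g$ inside $\O_A$ is exactly $\mathsf G_A[\O_A]$, and the commutation theorem for tensor products then identifies the full commutant as $\mathsf G_A[\O_A]\otimes\O_S$, giving the first isomorphism. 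For the third, I would apply $\mathsf W^\dagger$ to the defining expression \cref{eq:commutant_algebra} of $\A^{\overline{S:A}|L}$: since $\mathsf W^\dagger$ preserves intersections and commutants, the two isomorphisms already proven reduce $\mathsf W^\dagger[\A^{\overline{S:A}|L}]$ to the set of elements of $\mathsf G_A[\O_A]\otimes\O_S$ commuting with $\hat\id_A\otimes\O_S$. As the commutant of $\hat\id_A\otimes\O_S$ is $\O_A\otimes\hat\id_S$, this intersection equals $\mathsf G_A[\O_A]\otimes\hat\id_S$, which is the third isomorphism.

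The routine part is the delta-function bookkeeping; the step demanding the most care is the use of the commutation theorem for commutants in the tensor product $\O_A\otimes\O_S$, namely that the commutant of $\mathcal S\otimes\hat\id_S$ is $\mathcal S'\otimes\O_S$. In finite dimensions this is elementary, but for an infinite group $G$ the regular representation lives on the infinite-dimensional $L^2(G)$ and the basis $\{\ket{g}\}$ is improper, so strictly this is the von Neumann commutation theorem for tensor products of von Neumann algebras; as elsewhere in the paper, I would apply it formally, treating the improper-basis manipulations as limits of well-defined (for example Gaussian-smeared) approximations.
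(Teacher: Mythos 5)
Your proof is correct, and three of its four components (unitarity, the intertwining relation \cref{eq:regular_jump_representation}, and the isomorphisms for $\A^{S:A|L}$ and $\A^{\overline{S:A}|L}$) follow essentially the same route as the paper's proof: the paper likewise collapses the delta functions for unitarity, computes $\mathsf W[\hat\id_A \otimes \hat f_S] = \mathsf R_A[\hat f_S]$ directly, and obtains the commutant isomorphism by noting that $\mathsf G_A[\O_A] \otimes \hat\id_S$ is the commutant of $\hat\id_A \otimes \O_S$ relative to $\mathsf G_A[\O_A] \otimes \O_S$ and that conjugation by a unitary preserves this structure (you run the same argument in the $\mathsf W^\dagger$ direction, which is an immaterial difference). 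Where you genuinely diverge is the invariant-algebra isomorphism $\mathsf G_A[\O_A] \otimes \O_S \iso \A^{\text{inv}(AS)|L}$. The paper identifies $\mathsf G_A[\O_A] \otimes \O_S$ with the twirl $\frac{1}{|G|}\int_G \der g\, (\hat U_A(g)\otimes\hat\id_S)\,\O_{AS}\,(\hat U_A^\dagger(g)\otimes\hat\id_S)$ and pushes this integral through $\mathsf W$, using \cref{eq:regular_jump_representation} inside the integrand to turn it into the full twirl $\mathsf G_{AS}[\O_{AS}]$; this stays entirely within the superoperator machinery of proposition~\ref{prop:G_twirl} and only needs invariance of the Haar measure. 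You instead characterize $\A^{\text{inv}(AS)|L}$ as the commutant of the global action, transport it with $\mathsf W^\dagger$ to the commutant of $\{\hat L(g)\otimes\hat\id_S\}_g$, and then invoke the tensor-product commutation theorem $(\mathcal S\otimes\hat\id_S)' = \mathcal S' \otimes \O_S$ together with property~4 (plus, implicitly, the projector property~2) of proposition~\ref{prop:G_twirl}. Your route makes the logical dependence on the intertwining relation more transparent and avoids integral manipulations, at the price of importing the commutation theorem, which is elementary in finite dimensions but in the infinite-dimensional, improper-basis setting is a von Neumann algebra statement that the paper's explicit twirl computation sidesteps; you correctly flag this as the step requiring care, so there is no gap, only a different trade-off.
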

\noindent
We provide our own proof:
\begin{proof}
    Thanks to $\int_G\der g\,\sketbra{g}{g}_A = \hat\id_A$,
    and \cref{eq:app_regular_overlaps}, $\hat W$ is indeed unitary
    with inverse $W^\dagger$:
    \begin{equation*}
        \left( \int_G\der g\, \ketbra{g}{g}_A \otimes \hat U_S(g) \right)
        \left( \int_G\der g'\,
        \sketbra{g'}{g'}_A \otimes \hat U_S(g'^{-1}) \right)
        = \int_G\der g\, \ketbra{g}{g}_A \otimes \hat\id_S
        = \hat\id_{AS}.
    \end{equation*}
    Next, we compute
    \begin{equation*}
        \hat W \bigl( \hat U_A(g) \otimes \hat\id_S \bigr) W^\dagger
        = \int_G\der g'\,\der g''\,
        \sket{g'}\! \sbraket{g'}{gg''}\! \sbra{g''}_A
        \otimes \hat U_S(g'g''^{-1})
        = \int_G\der g''\, \sketbra{gg''}{g''}_A \otimes \hat U_S(g)
        = \hat U_A(g) \otimes \hat U_S(g).
    \end{equation*}
    Because unitaries preserve operator algebra structures,
    proving the three isomorphism relations reduces to showing that
    the superoperator $\mathsf W$ surjectively maps the left-hand sides
    into the right-hand sides.
    We begin with
    \begin{align*}
        \hat W (\mathsf G_A[\O_A] \otimes \O_S) \hat W^\dagger
        &= \frac{1}{|G|} \int_G\der g\,
        \hat W \bigl( \hat U_A(g) \otimes \hat\id_S \bigr) \hat W^\dagger
        \hat W \O_{AS}\hat W^\dagger
        \hat W \bigl( \hat U_A^\dagger(g) \otimes \hat\id_S \bigr)
        \hat W^\dagger \\
        &= \frac{1}{|G|} \int_G\der g\, \hat U_{AS}(g)
        \O_{AS} \hat U^\dagger_{AS}(g)
        = \mathsf G_{AS}[\O_{AS}],
    \end{align*}
    showing the first relation.
    In the second step, we used that $\O_{AS}$ is invariant under any unitary
    as well as \cref{eq:regular_jump_representation}.
    For the second isomorphism relation,
    let $\hat f_S$ be any operator on $S$. Then,
    \begin{equation*}
        \mathsf W[\hat\id_A \otimes \hat f_S]
        = \int_G\der g\,\der g'\, \sketbra{g}{g} \hat\id \sketbra{g'}{g'}_A
        \otimes \hat U_S(g) \hat f_S \hat U_S^\dagger(g')
        = \int_G \der g\, \ketbra{g}{g}_A \otimes \mathsf U_S(g)[\hat f_S].
    \end{equation*}
    But operators of this form generate $\A^{S:A|L}$,
    and hence it follows that $\mathsf W\bigl[\hat\id_A \otimes \O_S\bigr]
    = \A^{S:A|L}$.
    This shows the second isomorphism relation.
    Finally, it is clear that $\mathsf G_A[\O_A] \otimes \hat\id_S$
    is the commutant of $\hat\id_A \otimes \O_S$
    relative to $\mathsf G_A[\O_A] \otimes \O_S$.
    Since $\hat W$ preserves the algebra structure, it must map
    $\mathsf G_A[\O_A] \otimes \hat\id_S$ to $\overline{\A^{S:A|L}}
    = \A^{\overline{S:A}|L}$.
\end{proof}

\medskip
Furthermore, $\hat W$ also conserves subsystem structure in $S$:%
\footnote{
    We will later show a simultaneously more and less general version of this
    result in proposition~\ref{prop:algebra_decomposition_S_Abelian};
    more general since it applies to all rank-one QRFs
    and less general since it requires an Abelian group $G$.
}
\begin{proposition}{Regular QRFs Resolve Subsystem Structures}
    {regular_subsystem_structure}
    Let $A$ be a regular QRF for the symmetry group $G$
    and $S = S_1S_2$ a bipartite quantum system with Hilbert space
    $\H_S = \H_{S_1} \otimes \H_{S_2}$ and carrying a unitary representation
    $\hat U_S = \hat U_{S_1} \otimes \hat U_{S_2}$ of $G$,
    described from the perspective of the laboratory $L$.
    Define
    \begin{equation*}
        \A^{S_1:A|L} := \mathsf R_A[\O_{S_1} \otimes \hat\id_{S_2}],
        \qquad \A^{S_2:A|L} := \mathsf R_A[\hat\id_{S_1} \otimes \O_{S_2}].
    \end{equation*}
    Then, $\hat W$, as defined in \cref{eq:app_regular_inverse_jump_core},
    induces the algebra isomorphism relations
    \begin{align*}
        \hat\id_A \otimes \O_{S_1} \otimes \hat\id_{S_2}
        &\iso \A^{S_1:A|L}, \\
        \hat\id_A \otimes \hat\id_{S_1} \otimes \O_{S_2}
        &\iso \A^{S_2:A|L}.
    \end{align*}
\end{proposition}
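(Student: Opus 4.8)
The plan is to reduce this statement to the second isomorphism relation of proposition~\ref{prop:jump}, since the only new ingredient is keeping track of the tensor factorization $\H_S = \H_{S_1} \otimes \H_{S_2}$. Because $\hat W$ was already shown to be unitary in proposition~\ref{prop:jump}, the superoperator $\mathsf W$ is an algebra isomorphism and preserves all algebraic structure. Hence it suffices to show that $\mathsf W$ maps $\hat\id_A \otimes \O_{S_1} \otimes \hat\id_{S_2}$ \emph{onto} $\A^{S_1:A|L}$, and analogously $\hat\id_A \otimes \hat\id_{S_1} \otimes \O_{S_2}$ onto $\A^{S_2:A|L}$.

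First I would repeat the computation from the proof of proposition~\ref{prop:jump}, but applied to an operator of the special form $\hat f_S = \hat f_{S_1} \otimes \hat\id_{S_2}$ with $\hat f_{S_1} \in \O_{S_1}$, giving
\begin{equation*}
    \mathsf W[\hat\id_A \otimes \hat f_{S_1} \otimes \hat\id_{S_2}]
    = \int_G\der g\, \ketbra{g}{g}_A \otimes \mathsf U_S(g)[\hat f_{S_1} \otimes \hat\id_{S_2}].
\end{equation*}
The key observation is that the representation factorizes, $\hat U_S = \hat U_{S_1} \otimes \hat U_{S_2}$, so conjugation by $\hat U_S(g)$ acts factor-wise and fixes $\hat\id_{S_2}$:
\begin{equation*}
    \mathsf U_S(g)[\hat f_{S_1} \otimes \hat\id_{S_2}]
    = \mathsf U_{S_1}(g)[\hat f_{S_1}] \otimes \hat\id_{S_2}.
\end{equation*}
Substituting this back and recalling $\mathsf R_A[\;\cdot\;] = \int_G\der g\, \ketbra{g}{g}_A \otimes \mathsf U_S(g)[\;\cdot\;]$ for a regular QRF shows that $\mathsf W[\hat\id_A \otimes \hat f_{S_1} \otimes \hat\id_{S_2}] = \mathsf R_A[\hat f_{S_1} \otimes \hat\id_{S_2}]$ for every $\hat f_{S_1} \in \O_{S_1}$.

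Since the operators $\hat\id_A \otimes \hat f_{S_1} \otimes \hat\id_{S_2}$, as $\hat f_{S_1}$ ranges over $\O_{S_1}$, exhaust the whole algebra $\hat\id_A \otimes \O_{S_1} \otimes \hat\id_{S_2}$, their images under the isomorphism $\mathsf W$ exhaust $\mathsf W[\hat\id_A \otimes \O_{S_1} \otimes \hat\id_{S_2}]$; by the computation these images are precisely $\mathsf R_A[\O_{S_1} \otimes \hat\id_{S_2}] = \A^{S_1:A|L}$, which establishes the first relation. Being the image of an algebra under an isomorphism, $\A^{S_1:A|L}$ is automatically an algebra, so the two sides are genuinely isomorphic as operator algebras. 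The relation for $S_2$ follows by the symmetric argument with the roles of $S_1$ and $S_2$ exchanged. I do not expect a genuine obstacle here: the entire content lies in the factorization identity $\mathsf U_S(g)[\hat f_{S_1} \otimes \hat\id_{S_2}] = \mathsf U_{S_1}(g)[\hat f_{S_1}] \otimes \hat\id_{S_2}$, which is exactly the feature distinguishing this proposition from proposition~\ref{prop:jump}, and the only mild point of care is noting that surjectivity is automatic because $\mathsf W$ is an isomorphism carrying a full operator algebra onto its image.
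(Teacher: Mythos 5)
Your proposal is correct and follows essentially the same route as the paper's proof: conjugate $\hat\id_A \otimes \hat f_{S_1} \otimes \hat\id_{S_2}$ by $\hat W$, use the factorization $\hat U_S = \hat U_{S_1} \otimes \hat U_{S_2}$ to identify the result with $\mathsf R_A[\hat f_{S_1} \otimes \hat\id_{S_2}]$, and conclude by noting these operators exhaust (generate) $\A^{S_1:A|L}$, with the $S_2$ case following symmetrically. Your added remark that surjectivity makes the image automatically an algebra is a slightly more explicit version of the paper's closing observation, not a different argument.
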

\begin{proof}
    For any operator $\hat f_{S_1}$ on $\H_{S_1}$, we compute
    \begin{equation*}
        \hat W\bigl( \hat\id_A \otimes \hat f_{S_1} \otimes \hat\id_{S_2}
        \bigr)\hat W^\dagger
        = \int_G\der g\, \sketbra{g}{g}_A
        \otimes \hat U_{S_1}(g) \hat f_{S_1} \hat U_{S_1}^\dagger(g)
        \otimes \hat\id_{S_2}
        = \mathsf R_A[\hat f_{S_1} \otimes \hat\id_{S_2}]
    \end{equation*}
    and note that $\A^{S_1:A|L}$ is generated by operators of this form.
    The second isomorphism relation follows analogously.
\end{proof}

\medskip
Finally, we derive the QRF transformations.
These transformations were first discussed in~\cite{CastroRuiz2025a}.
\begin{proposition}{QRF Transformations Between Regular QRFs}
    {regular_QRF_transformation}
    Let $A$ and $B$ be regular QRFs for the symmetry group $G$
    and let $Q$ be any quantum system with unitary representation
    $\hat U_Q$ of $G$, all three described from the perspective
    of the laboratory $L$.
    Furthermore, define the QRF jumps
    \begin{gather*}
        \hat V^{\to A} : \H_A \otimes \H_B \otimes \H_Q
        \to \Bigl(\bigoplus_r \H_r \otimes \H_{*,r} \Bigr)
        \otimes \H_B \otimes \H_Q, \\
        \hat V^{\to A} :=
        \int_G\der g\, \hat X_A \ketbra{g}{g}_A \otimes \hat L_B^\dagger(g)
        \otimes \hat U_Q^\dagger(g),
    \end{gather*}
    and
    \begin{gather*}
        \hat V^{\to B} : \H_A \otimes \H_B \otimes \H_Q
        \to \Bigl(\bigoplus_r \H_r \otimes \H_{*,r} \Bigr)
        \otimes \H_A \otimes \H_Q, \\
        \qquad \hat V^{\to B} := \mathsf T_{AB}\bigl[\hat V^{\to A}\bigr]
        = \int_G\der g\, \hat L_A^\dagger(g) \otimes \hat X_B \sketbra{g}{g}_B
        \otimes \hat U_Q^\dagger(g),
    \end{gather*}
    where $\hat X_A : \H_A \to \bigoplus_r \H_r \otimes \H_{*,r}$
    is an isomorphism obtained from the Peter-Weyl theorem
    (proposition~\ref{prop:Peter_Weyl})
    and $\hat X_B \ket{g}_B := \hat X_A \ket{g}_A$ for all $g \in G$.
    Then, the QRF transformation
    $\hat V^{A\to B} = \hat V^{\to B} (\hat V^{\to A})^\dagger$
    is explicitly given by
    \begin{equation*}
        \hat V^{A\to B} = \int_G\der g\,
        \Bigl( \bigoplus_r \hat\id_r \otimes \hat U^\dagger_{*,r}(g) \Bigr)
        \otimes \sket{g^{-1}}_A \sbra{g}_B \otimes \hat U^\dagger_Q(g).
    \end{equation*}
    In particular, it does not depend on $\hat X_A$ and $\hat X_B$.
\end{proposition}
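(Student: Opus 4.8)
The plan is to compute the composition $\hat V^{A\to B} = \hat V^{\to B}(\hat V^{\to A})^\dagger$ directly and massage it into the stated form. First I would write down the adjoint $(\hat V^{\to A})^\dagger = \int_G\der g'\, \ketbra{g'}{g'}_A\hat X_A^\dagger \otimes \hat L_B(g')\otimes\hat U_Q(g')$ and multiply it on the left by $\hat V^{\to B}$. Carrying out the product factor by factor — keeping in mind that $\hat L_A^\dagger(g)$ acts on the $\H_A$ register, that $\hat X_B\ketbra{g}{g}_B$ sends the $\H_B$ register into the charge space $\bigoplus_r\H_r\otimes\H_{*,r}$, and that the $\H_Q$ factors combine to $\hat U_Q^\dagger(g)\hat U_Q(g')=\hat U_Q(g^{-1}g')$ — yields a double integral over $g,g'$. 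Using the regular-representation identities from appendix~\ref{app:representation_theory_groups} (namely $\hat L_A^\dagger(g)\ket{g'}_A = \ket{g^{-1}g'}_A$ and $\sbra{g}_B\hat L_B(g') = \sbra{g'^{-1}g}_B$), the integrand reduces to a product of rank-one maps, and the substitution $h := g^{-1}g'$ decouples the $\H_Q$ and frame registers, which then depend on $h$ only through $\ket{h}_A\sbra{h^{-1}}_B$ and $\hat U_Q(h)$.

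The key step is the remaining $g$-integral, which couples the outgoing and incoming charge spaces; after the substitution it reads $\hat K(h) := \int_G\der g\, \hat X_B\ket{g}_B\sbra{gh}_A\hat X_A^\dagger$. Here I would invoke the defining relation $\hat X_B\ket{g}_B = \hat X_A\ket{g}_A$ to pull everything through the same isomorphism, giving $\hat K(h) = \hat X_A\bigl(\int_G\der g\,\ket{g}_A\sbra{gh}_A\bigr)\hat X_A^\dagger$. A further change of variables together with completeness $\int_G\der g\,\ketbra{g}{g}_A = \hat\id_A$ identifies the inner integral as the right-regular representation $\hat R_A(h)$ (using $\hat R_A(h)\ket{g}_A = \ket{gh^{-1}}_A$ from appendix~\ref{app:representation_theory_groups}). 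Finally, the Peter-Weyl theorem (proposition~\ref{prop:Peter_Weyl}), which is precisely what defines $\hat X_A$, diagonalizes the right-regular representation as $\hat X_A\hat R_A(h)\hat X_A^\dagger = \bigoplus_r\hat\id_r\otimes\hat U_{*,r}(h)$. This produces exactly the charge-space factor of the target, and a final relabelling $g := h^{-1}$ (legitimate by inversion-invariance of the Haar measure) converts $\hat U_{*,r}(h)$, $\ket{h}_A\sbra{h^{-1}}_B$ and $\hat U_Q(h)$ into $\hat U_{*,r}^\dagger(g)$, $\ket{g^{-1}}_A\sbra{g}_B$ and $\hat U_Q^\dagger(g)$, matching the claimed expression.

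I expect the main obstacle to be bookkeeping rather than conceptual: the composition \emph{crosses} the tensor factors — the incoming charge register of $\hat V^{A\to B}$ is fed (via $\hat X_A^\dagger$) into the $\H_A$ output, while the incoming $\H_B$ register is sent into the outgoing charge register — so one must track scrupulously which rank-one operator lives on which register before the integrals are performed, and the clean, non-crossing structure of the target emerges only after doing the $g$-integral. The one genuinely substantive point is recognizing that this $g$-integral reassembles $\hat R_A$, so that Peter-Weyl turns it into the multiplicity-space action $\bigoplus_r\hat\id_r\otimes\hat U_{*,r}$. The stated independence from $\hat X_A$ and $\hat X_B$ then follows at once: since both jumps are built from the same $\hat X_A$ (with $\hat X_B$ determined by it through $\hat X_B\ket{g}_B=\hat X_A\ket{g}_A$), the isomorphisms cancel in the combination $\hat X_A(\cdots)\hat X_A^\dagger$, leaving an expression that involves only the intrinsic regular-representation data $(\H_r,\H_{*,r},\hat U_{*,r})$.
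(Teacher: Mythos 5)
Your proposal is correct and follows essentially the same route as the paper's proof: both compute the double Haar integral for $\hat V^{\to B}(\hat V^{\to A})^\dagger$, use regular-representation identities and a change of variables to decouple the registers, recognize the surviving group integral as the right-regular representation $\hat R_A$, and diagonalize it with the Peter-Weyl theorem via $\hat X_A$. The only difference is bookkeeping: the paper conjugates the isomorphisms $\hat X_A,\hat X_B$ away first and tracks the register crossing with an explicit swap operator $\hat T_{AB}$, whereas you carry them through the computation and let them cancel inside $\hat K(h)$ using $\hat X_B\ket{g}_B = \hat X_A\ket{g}_A$ --- the mathematical substance is identical.
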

\begin{proof}
    We compute
    
    \begin{multline*}
        \hat X_B^\dagger \hat V^{\to B} (\hat V^{\to A})^\dagger \hat X_A
        = \int_G\der g\,\der g'\, \hat L_A^\dagger(g) \sketbra{g'}{g'}_A
        \otimes \sketbra{g}{g}_B \hat L_B(g')
        \otimes \hat U_Q^\dagger(g) \hat U_Q(g') \\
        = \int_G\der g\,\der g'\, \sketbra{g^{-1}g'}{g'}_A
        \otimes \sketbra{g}{g'^{-1}g}_B \otimes \hat U_Q^\dagger(g'^{-1}g)
        = \int_G\der g\,\der g'\, \sketbra{g^{-1}}{g'}_A
        \otimes \sketbra{g'g}{g}_B \otimes \hat U^\dagger_Q(g) \\
        = \hat T_{AB} \int_G\der g\, \hat R^\dagger_A(g) \otimes
        \sketbra{g^{-1}}{g}_B \otimes \hat U^\dagger_Q(g),
    \end{multline*}
    where $\hat T_{AB} \ket{g}_A \sket{g'}_B := \sket{g'}_A \ket{g}_B$
    for all $g,g' \in G$ and $\hat R_A$ is the right-regular representation.
    Applying $\hat X_B$ and $\hat X_A$ on either side requires us to compute
    \begin{equation*}
        \hat X_B^\dagger \hat T_{AB}
        \bigl(\hat R^\dagger_A(g) \otimes \sketbra{g^{-1}}{g}_B\bigr) \hat X_A
        = \hat X_A^\dagger \hat R^\dagger_A(g) \hat X_A
        \otimes \sket{g^{-1}}_A\sbra{g}_B.
    \end{equation*}
    According to proposition~\ref{prop:Peter_Weyl}, we have
    \begin{equation*}
        \hat X_A^\dagger \hat R^\dagger_A(g) \hat X_A
        = \bigoplus_r \hat\id_r \otimes \hat U_{*,r}^\dagger(g).
    \end{equation*}
    Combining both computations, the result follows.
\end{proof}

\section{Embedding}
\label{app:embedding}

\noindent
As mentioned in the main text, the map $\hat W^\dagger$
(see \eg\ \cref{eq:app_regular_inverse_jump_core})
used for the jump into a regular QRF
can also be used to obtain a map \emph{resembling}
a jump to a non-ideal, rank-one QRF.
This fact is useful as a stepping stone
to construct actual QRF jumps into non-ideal QRFs
(in appendix~\ref{app:POVs_Abelian}, we do this for non-ideal, rank-one QRFs
for the case of an Abelian symmetry group).
Specifically, one first \emph{embeds} the non-ideal, rank-one QRF $A$
into a regular QRF $\tilde A$:
\begin{proposition}{Embedding}{embedding}
    Let $A$ be a QRF for the symmetry group $G$ 
    with rank-one orientation POVM elements
    $\hat\gamma_A(g) = \frac{1}{c}\sketbra{g}{g}_A$, $c > 0$,
    and unitary representation $\hat U_A(g') \sket{g}_A = \sket{g'g}_A$,
    $\forall\,g,g' \in G$.
    Let $\tilde A$ be a regular QRF for $G$.
    Then the \emph{embedding}
    \begin{equation*}
        \hat E := \frac{1}{\sqrt{c}} \int_G\der g\,
        \sket{g}_{\tilde A} \sbra{g}_A
    \end{equation*}
    is the unique linear map $\hat E : \H_A \to \H_{\tilde A}$
    such that
    \begin{enumerate}
        \medskip
        \item\label{item:isometry} $\hat E$ is an isometry: $\hat E^\dagger \hat E = \hat\id_A$.

        \medskip
        \item\label{item:representation} $\hat E \hat U_A(g) = \hat L_{\tilde A}(g) \hat E$
        for all $g \in G$.
        
        \medskip
        \item\label{item:identity} $\hat E^\dagger \sket{e}_{\tilde A} = d\sket{e}_A$
        for some $d > 0$.
    \end{enumerate}
    \medskip
    In fact, $d = 1/\sqrt c$.
    If $A$ is regular,
    then $\hat E$ is the map $\sket{g}_A \mapsto \sket{g}_{\tilde A}$
    which changes the label from $A$ to $\tilde A$.
\end{proposition}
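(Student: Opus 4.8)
We must show that the map $\hat E = \frac{1}{\sqrt{c}}\int_G \der g\, \ket{g}_{\tilde A}\bra{g}_A$ is the *unique* linear map satisfying three properties: it is an isometry, it intertwines $\hat U_A$ with the left-regular representation $\hat L_{\tilde A}$, and it maps the dual of the identity-orientation state appropriately. Plus we must verify $d=1/\sqrt c$ and the trivial special case when $A$ is itself regular.

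Let me think about what's going on here.The plan is to first verify that the stated $\hat E$ has the three required properties, reading off $d$ along the way, and then to prove uniqueness by showing that any map with these properties is forced to coincide with $\hat E$. Throughout I will lean on two facts: the POVM completeness relation $\frac1c\int_G\der g\,\ketbra{g}{g}_A = \hat\id_A$ (equivalent to $\int_G\der g\,\hat\gamma_A(g)=\hat\id_A$), and the overlaps $\sbraket{g'}{g}_{\tilde A}=\delta(g'^{-1}g)$ of the regular QRF from appendix~\ref{app:representation_theory_groups}. For the isometry I would compute $\hat E^\dagger\hat E = \frac1c\int_G\der g\,\der g'\,\ket{g}_A\sbraket{g}{g'}_{\tilde A}\bra{g'}_A = \frac1c\int_G\der g\,\ketbra{g}{g}_A = \hat\id_A$, the middle step collapsing via $\sbraket{g}{g'}_{\tilde A}=\delta(g^{-1}g')$ and the last via completeness.

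For property~\ref{item:representation}, starting from $\bra{g}_A\hat U_A(g') = \sbra{g'^{-1}g}_A$ and substituting $h=g'^{-1}g$ (legitimate by invariance of the Haar measure and unimodularity), I get $\hat E\hat U_A(g') = \frac1{\sqrt c}\int_G\der h\,\sket{g'h}_{\tilde A}\bra{h}_A = \hat L_{\tilde A}(g')\hat E$, using $\sket{g'h}_{\tilde A}=\hat L_{\tilde A}(g')\ket{h}_{\tilde A}$. For property~\ref{item:identity}, $\hat E^\dagger\sket{e}_{\tilde A} = \frac1{\sqrt c}\int_G\der g\,\ket{g}_A\sbraket{g}{e}_{\tilde A} = \frac1{\sqrt c}\int_G\der g\,\delta(g^{-1})\ket{g}_A = \frac1{\sqrt c}\ket{e}_A$, which both exhibits property~\ref{item:identity} and gives $d=1/\sqrt c$.

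For uniqueness, suppose $\hat E'$ satisfies all three properties. The key observation is that property~\ref{item:representation} together with $\ket{g}_A = \hat U_A(g)\ket{e}_A$ gives $\hat E'\ket{g}_A = \hat L_{\tilde A}(g)\hat E'\ket{e}_A$, so $\hat E'$ is completely determined by the single vector $\ket{\phi}:=\hat E'\ket{e}_A\in\H_{\tilde A}\iso L^2(G)$; note the $\ket{g}_A$ span $\H_A$ by completeness. I would then pin down $\ket{\phi}$ in two stages. First, evaluating property~\ref{item:identity} against $\bra{g}_A$ and inserting the intertwining relation yields $\overline{\phi(g^{-1})} = d'\,\sbraket{g}{e}_A$ for the positive constant $d'$ attached to $\hat E'$; combined with $\sbraket{g^{-1}}{e}_A = \overline{\sbraket{g}{e}_A}$ (from $\ket{g}_A=\hat U_A(g)\ket{e}_A$ and unitarity) this shows the wave function of $\ket\phi$ is $\phi(g) = d'\,f(g)$, where $f(g):=\sbraket{g}{e}_A$. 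Second, the isometry fixes the constant: $\sbraket{e}{e}_A = \norm{\hat E'\ket{e}_A}^2 = d'^2\int_G\der g\,\abs{f(g)}^2 = d'^2\,\bra{e}_A\bigl(\int_G\der g\,\ketbra{g}{g}_A\bigr)\ket{e}_A = d'^2 c\,\sbraket{e}{e}_A$, whence $d'=1/\sqrt c$. Thus $\hat E'\ket{e}_A$ has the same wave function $\frac1{\sqrt c}f$ as $\hat E\ket{e}_A$, and by the intertwining relation $\hat E'=\hat E$ on all of $\H_A$. This argument reproves $d=1/\sqrt c$ as a byproduct.

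Finally, the regular special case drops out: if $A$ is itself regular then $c=1$ and $f(g)=\sbraket{g}{e}_A=\delta(g^{-1})$, so $\hat E\ket{g'}_A = \int_G\der g\,\ket{g}_{\tilde A}\,\delta(g^{-1}g') = \ket{g'}_{\tilde A}$ is precisely the relabelling $A\to\tilde A$. I expect the main obstacle to be bookkeeping rather than conceptual: keeping the $g\leftrightarrow g^{-1}$ conventions and the Haar change of variables consistent (all justified by unimodularity), and correctly handling the improper, $\delta$-normalized states $\ket{g}_{\tilde A}$ alongside the overcomplete, non-orthogonal proper states $\ket{g}_A$. The one genuinely load-bearing identity is $\int_G\der g\,\abs{f(g)}^2 = \bra{e}_A\bigl(\int_G\der g\,\ketbra{g}{g}_A\bigr)\ket{e}_A = c\,\sbraket{e}{e}_A$, which is what converts the abstract isometry condition into the numerical value of $d$.
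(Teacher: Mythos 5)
Your proof is correct and takes essentially the same approach as the paper: both arguments use the intertwining requirement to reduce the unknown map to a single undetermined object, requirement 3 to fix its direction (the wave function $\sbraket{g}{e}_A$ up to a positive constant), and the isometry condition together with POVM completeness to force $d = 1/\sqrt{c}$. The only cosmetic difference is that the paper parametrizes a general map by the family of bras $\sbra{\varphi_g}_A$ and propagates requirement 3 to all $g$ via covariance, whereas you work with the single image vector $\hat E'\sket{e}_A$; you also spell out the existence verification that the paper dismisses with ``conversely, one easily shows.''
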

\noindent
The proof is adapted from~\cite{Garmier2023}:
\begin{proof}
    A general map $\hat E : \H_A \to \H_{\tilde A}$ is of the form
    \begin{equation*}
        \hat E = \int_G\der g\, \sket{g}_{\tilde A} \sbra{\varphi_g}_A,
    \end{equation*}
    where $\sbra{\varphi_g}_A$ are linear forms on $\H_A$,
    \ie\ $\sket{\varphi_g}_A$ are not necessarily normalized
    and potentially even improper states of $A$.
    Requirement~\ref{item:identity} is $\int_G\der g\, \sket{\varphi_g}_A
    \sbraket{g}{e}_{\tilde A} = d \sket{e}_A$ for some $d > 0$,
    and thanks to \cref{eq:app_regular_overlaps} we thus have
    $\sket{\varphi_e}_A = d \sket{e}_A$.
    Requirement~\ref{item:representation} implies
    \begin{equation*}
        \hat U_A(g')^\dagger \sket{\varphi_g}_A
        = \hat U_A(g')^\dagger \hat E^\dagger \sket{g}_{\tilde A}
        = \hat E^\dagger \hat L_{\tilde A}(g')^\dagger \sket{g}_{\tilde A}
        = \hat E^\dagger \sket{g'^{-1}g}
        = \sket{\varphi_{g'^{-1}g}}
    \end{equation*}
    for all $g,g' \in G$.
    In other words, $\sket{\varphi_g}_A = \hat U_A(g) \sket{\varphi_e}_A$
    for all $g \in G$.
    Hence, $\sket{\varphi_g}_A = d \sket{g}_A$ for all $g\in G$.
    Finally, requirement~\ref{item:isometry} gives
    \begin{equation*}
        \hat\id_A = d^2 \int_G\der g\,\der g'\,
        \sket{g'}_A \sbraket{g'}{g}_{\tilde A} \sbra{g}_A
        = d^2 \int_G\der g\, \sketbra{g}{g}_A
        = d^2 c\, \hat\id_A.
    \end{equation*}
    It follows that $d = 1/\sqrt c$ and we find that $\hat E$
    must be of the form given in the statement of the proposition.
     Conversely, one easily shows that $\hat E$ of that form
    satisfies the three requirements.
    If $A$ is regular, then $c = 1$ and
    $\hat E = \int_G\der g\, \sket{g}_{\tilde A} \sbra{g}_A$
    is precisely the map described in the proposition statement.
\end{proof}
\medskip
\noindent
The embedding can then be used to obtain a map resembling a jump
into the non-ideal QRF:
\begin{proposition}{}{general_jump}
    Let $A$ be a QRF for the symmetry group $G$
    with rank-one orientation POVM elements
    $\hat\gamma_A^{|L}(g) = \frac{1}{c} \sketbra{g}{g}_A$, $c > 0$,
    and unitary representation $\hat U_A(g') \sket{g}_A = \sket{g'g}_A$,
    $g,g' \in G$.
    Let $\tilde A$ be a regular QRF,
    and let $\hat E$ be the embedding $\H_A \to \H_{\tilde A}$
    of proposition~\ref{prop:embedding}.
    Furthermore, let $S$ be a quantum system
    with unitary representation~$\hat U_S$ of~$G$.
    Finally, let $\hat W$ be defined as in
    \cref{eq:app_regular_inverse_jump_core}.
    Then:

    \begin{enumerate}
        \medskip
        \item\label{item:projector}
            The map $\hat\pi : \H_{\tilde A} \otimes \H_S
            \to \H_{\tilde A} \otimes \H_S$ defined as
            \begin{equation}
                \hat\pi :=
                \hat W^\dagger \bigl(\hat E\hat E^\dagger \otimes \hat\id_S
                \bigr) \hat W
                = \frac{1}{c} \int_G\der g\,
                \hat R_{\tilde A}(g) \otimes \hat U_S(g)
                \sbraket{e}{g}_A
                \label{eq:app_general_jump_projector}
            \end{equation}
            is the orthogonal projector onto the image of the isometry
            $\hat W^\dagger (\hat E \otimes \hat\id_S) :
            \H_A \otimes \H_S \to \H_{\tilde A} \otimes \H_S$.
            If $A$ is regular, then $\hat\pi = \hat\id_{\tilde AS}$.

        \medskip
        \item 
            For any operator $\hat f_S$ on $S$ and its corresponding
            relativized operator $\hat F := \mathsf R_A[\hat f_S]$
            it holds that
            \begin{equation}
                \mathsf W^\dagger \circ (\mathsf E \otimes \sfid_S)[\hat F]
                = \hat\pi \bigl( \hat\id_{\tilde A} \otimes \hat f_S \bigr)
                \hat\pi.
                \label{eq:app_general_jump_relative}
            \end{equation}

        \medskip
        \item\label{item:commutation_almost_jump}
            For all $g\in G$,
            \begin{equation*}
                \bigl[ \hat L_{\tilde A}(g) \otimes \hat\id_S,
                \hat\pi \bigr] = 0
            \end{equation*}
            and
            \begin{equation}
                \mathsf W^\dagger \circ (\mathsf E \otimes \sfid_S)
                \bigl[ \hat U_A(g) \otimes \hat U_S(g) \bigr]
                = \hat\pi \bigl( \hat L_{\tilde A}(g) \otimes \hat\id_S \bigr)
                \hat\pi.
                \label{eq:app_general_jump_representation}
            \end{equation}
        \end{enumerate}
    \medskip
\end{proposition}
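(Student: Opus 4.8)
The plan is to recognise that the operator $\hat J := \hat W^\dagger(\hat E \otimes \hat\id_S) : \H_A \otimes \H_S \to \H_{\tilde A} \otimes \H_S$ is an \emph{isometry}, being the composition of the isometry $\hat E \otimes \hat\id_S$ (proposition~\ref{prop:embedding}, property~\ref{item:isometry}) with the unitary $\hat W$ (proposition~\ref{prop:jump}). Since $\hat J^\dagger \hat J = \hat\id_A \otimes \hat\id_S$, the operator $\hat\pi = \hat J \hat J^\dagger = \hat W^\dagger(\hat E\hat E^\dagger \otimes \hat\id_S)\hat W$ is automatically the orthogonal projector onto $\operatorname{image}(\hat J)$, which gives the first assertion of part~\ref{item:projector} essentially for free. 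To obtain the explicit form $\frac1c\int_G\der g\,\hat R_{\tilde A}(g)\otimes\hat U_S(g)\,\sbraket{e}{g}_A$, I would expand $\hat E\hat E^\dagger = \frac1c\int\der g\der g'\,\sbraket{g}{g'}_A\,\sketbra{g}{g'}_{\tilde A}$, insert the two copies of $\hat W$, and collapse the $\tilde A$-integrations using the regular-QRF orthogonality $\sbraket{g}{g'}_{\tilde A}=\delta(g'^{-1}g)$ of \eqref{eq:app_regular_overlaps}. A change of variables $g'=gh$ together with $\sbraket{g}{g'}_A = \sbraket{e}{g^{-1}g'}_A$ (from $\sket{g}_A = \hat U_A(g)\sket{e}_A$) and $\sketbra{g}{gh}_{\tilde A} = \hat R_{\tilde A}(h)\sketbra{gh}{gh}_{\tilde A}$ then lets me use Haar invariance to integrate out the leftover $g$ and produce the left factor $\hat R_{\tilde A}$. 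The regular case $\hat\pi = \hat\id_{\tilde AS}$ follows at once, since then $c=1$ and $\sbraket{e}{g}_A = \delta(g)$.

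For the second and third claims I would exploit a single unifying mechanism. Note first that, by definition of the superoperators, $\mathsf W^\dagger\circ(\mathsf E\otimes\sfid_S)[\;\cdot\;] = \hat J\,[\;\cdot\;]\,\hat J^\dagger$. Using $\hat\pi = \hat J\hat J^\dagger$ and $\hat J^\dagger\hat J=\hat\id$, each claimed identity reduces to a \emph{pullback} statement on $\H_A\otimes\H_S$: namely $\mathsf R_A[\hat f_S] = \hat J^\dagger(\hat\id_{\tilde A}\otimes\hat f_S)\hat J$ for the relationalization, and $\hat U_A(g)\otimes\hat U_S(g) = \hat J^\dagger(\hat L_{\tilde A}(g)\otimes\hat\id_S)\hat J$ for the representation. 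Indeed, once such a pullback holds, sandwiching it between $\hat J$ and $\hat J^\dagger$ gives $\hat J\hat J^\dagger(\cdots)\hat J\hat J^\dagger = \hat\pi(\cdots)\hat\pi$, which is exactly the right-hand side of \eqref{eq:app_general_jump_relative} (respectively \eqref{eq:app_general_jump_representation}).

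To establish these two pullbacks I would peel off $\hat W$ first and $\hat E$ second. Conjugating by $\hat W$ turns $\hat\id_{\tilde A}\otimes\hat f_S$ into the regular relationalization $\mathsf R_{\tilde A}[\hat f_S]$, and turns $\hat L_{\tilde A}(g)\otimes\hat\id_S$ into $\hat L_{\tilde A}(g)\otimes\hat U_S(g)$, both by proposition~\ref{prop:jump} (the latter via \eqref{eq:regular_jump_representation} with $\hat U_{\tilde A}=\hat L_{\tilde A}$). It then remains to conjugate by $\hat E$. The representation case is clean: $\hat E^\dagger\hat L_{\tilde A}(g)\hat E = \hat E^\dagger\hat E\,\hat U_A(g) = \hat U_A(g)$, using the intertwining $\hat L_{\tilde A}(g)\hat E = \hat E\hat U_A(g)$ and $\hat E^\dagger\hat E = \hat\id_A$ (proposition~\ref{prop:embedding}). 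For the relationalization the key computation is $\hat E^\dagger\sketbra{g}{g}_{\tilde A}\hat E = \frac1c\sketbra{g}{g}_A = \hat\gamma_A(g)$, which follows from $\hat E^\dagger\sket{g}_{\tilde A} = \frac1{\sqrt c}\sket{g}_A$; inserting this into $\mathsf R_{\tilde A}[\hat f_S]$ reconstructs $\mathsf R_A[\hat f_S]$ term by term. The commutation $[\hat L_{\tilde A}(g)\otimes\hat\id_S,\hat\pi]=0$ I would read off the explicit formula (since $[\hat L_{\tilde A},\hat R_{\tilde A}]=0$ and $\hat\id_S$ commutes with $\hat U_S$), or argue structurally: $\hat L_{\tilde A}(g)$ maps $\operatorname{image}(\hat E)$ onto itself because $\hat L_{\tilde A}(g)\hat E=\hat E\hat U_A(g)$, so the image of $\hat J$ is invariant and $\hat\pi$ commutes with the unitary $\hat L_{\tilde A}(g)\otimes\hat\id_S$.

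The conceptual content is light once the isometry viewpoint is adopted; the main obstacle is bookkeeping. Concretely, deriving the closed form of $\hat\pi$ in part~\ref{item:projector} is the most error-prone step, since it requires two successive Haar-measure changes of variables and careful use of the regular-QRF orthogonality to collapse the correct integrations while retaining the $\hat R_{\tilde A}$ factor; I would pay particular attention to inversion-invariance of the measure ($G$ unimodular). Everything else is a formal consequence of $\hat J^\dagger\hat J=\hat\id$ together with the two base results, proposition~\ref{prop:jump} and proposition~\ref{prop:embedding}.
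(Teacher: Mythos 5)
Your proposal is correct and takes essentially the same approach as the paper: both proofs rest on recognizing $\hat J := \hat W^\dagger(\hat E\otimes\hat\id_S)$ as an isometry (so that $\hat\pi=\hat J\hat J^\dagger$ is automatically the orthogonal projector onto its image), on the conjugation identities of proposition~\ref{prop:jump}, and on the embedding properties of proposition~\ref{prop:embedding}, with the explicit form of $\hat\pi$ obtained by the same Haar-measure substitutions. Your reorganization — proving the pullbacks $\hat J^\dagger(\,\cdot\,)\hat J$ first and then sandwiching, and offering the structural alternative for $[\hat L_{\tilde A}(g)\otimes\hat\id_S,\hat\pi]=0$ via invariance of the image of $\hat J$ — is only a cosmetic repackaging of the paper's direct expansion of $\hat\pi(\,\cdot\,)\hat\pi$.
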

A version of this result, restricted to the zero charge sector, was first found in~\cite{Hamette2021}.
\begin{proof}
    1. We first note that $\hat E \hat E^\dagger$ is the orthogonal projector
    onto the image of $\hat E$ in $\H_{\tilde A}$:
    $(\hat E\hat E^\dagger)^\dagger = \hat E\hat E^\dagger$ and
    \begin{equation*}
        (\hat E \hat E^\dagger)^2
        = \hat E\, \hat\id_A \hat E^\dagger = \hat E\hat E^\dagger,
        \qquad \hat E\hat E^\dagger\hat E
        = \hat E\, \hat\id_A = \hat E.
    \end{equation*}
    Furthermore,
    \begin{multline*}
        \hat\pi := \hat W^\dagger(\hat E\hat E^\dagger \otimes \hat\id_S)\hat W
        = \frac{1}{c} \int_G\der g\,\der g'\, \sketbra{g}{g'}_{\tilde A}
        \otimes \hat U^\dagger_S(g) \hat U_S(g') \sbraket{g}{g'}_A
        = \frac{1}{c} \int_G\der g\,\der g'\, \sketbra{g'g}{g'}_{\tilde A}
        \otimes \hat U_S(g^{-1}) \sbraket{g}{e}_A \\
        = \frac{1}{c} \int_G\der g\, \hat R_{\tilde A}(g) \otimes \hat U_S(g)
        \sbraket{e}{g}_A,
    \end{multline*}
    where in the second step we have substituted $g \rightsquigarrow g'g$,
    and in the second-to-last step we have substituted
    $g \rightsquigarrow g^{-1}$.
    Since $\hat W^\dagger$ is unitary and
    $\hat E\hat E^\dagger \otimes \hat\id_S$
    is the orthogonal projector onto the image of $\hat E \otimes \hat\id_S$,
    $\hat\pi$ is the orthogonal projector onto the image of
    $\hat W^\dagger (\hat E \otimes \hat\id_S)$.
    This last map is an isometry because $\hat E \otimes \hat\id_S$
    is an isometry and $\hat W^\dagger$ is unitary.
    If $A$ is regular, then $\hat E\hat E^\dagger$ is the identity
    (because $\hat E$ is the identity up to label change),
    and hence $\hat\pi$ is the identity.

    2. From statement~\ref{item:projector} it follows that
    \begin{multline*}
        \hat\pi \bigl( \hat\id_{\tilde A} \otimes \hat f_S \bigr) \hat\pi
        = \hat W^\dagger\bigl(\hat E\hat E^\dagger \otimes \hat\id_S\bigr)
        \hat W \bigl(\hat\id_{\tilde A} \otimes \hat f_S\bigr)\hat W^\dagger
        \bigl(\hat E\hat E^\dagger \otimes \hat\id_S\bigr)\hat W \\
        = \hat W^\dagger\bigl(\hat E\hat E^\dagger \otimes \hat\id_S\bigr)
        \left(\int_G\der g\, \sketbra{g}{g}_{\tilde A}
        \otimes \mathsf U_S(g)[\hat f_S]\right)
        \bigl(\hat E\hat E^\dagger \otimes \hat\id_S\bigr)\hat W
        = \hat W^\dagger (\hat E \otimes \hat\id_S) \hat F
        (\hat E^\dagger \otimes \hat\id_S) \hat W.
    \end{multline*}

    3. Using \cref{eq:regular_jump_representation}
    and the properties of $\hat E$ we compute
    \begin{equation*}
        \mathsf W^\dagger \circ (\mathsf E \otimes \sfid_S)
        \bigl[ \hat U_A(g) \otimes \hat U_S(g) \bigr]
        = \mathsf W^\dagger[ \hat U_{\tilde A}(g) \hat E\hat E^\dagger
        \otimes \hat U_S(g)]
        = \mathsf W^\dagger\bigl[ \hat L_{\tilde A}(g) \otimes \hat U_S(g)\bigr] \cdot
        \mathsf W^\dagger\bigl[\hat E\hat E^\dagger \otimes \hat\id_S\bigr]
        = \bigl( \hat L_{\tilde A}(g) \otimes \hat\id_S \bigr) \hat\pi.
    \end{equation*}
    Now, $[\hat L_{\tilde A}(g) \otimes \hat\id_S,\hat\pi] = 0$
    because the left- and right-regular representations on $\tilde A$ commute.
    Hence,
    \begin{equation*}
        \mathsf W^\dagger \circ (\mathsf E \otimes \sfid_S)
        \bigl[ \hat U_A(g) \otimes \hat U_S(g) \bigr]
        = \bigl( \hat L_{\tilde A}(g) \otimes \hat\id_S \bigr) \hat\pi^2
        = \hat\pi \bigl( \hat L_{\tilde A}(g) \otimes \hat\id_S \bigr) \hat\pi.
    \end{equation*}
\end{proof}

To gain intuition about $\hat\pi$, let us sketch how $\hat\pi$ can be decomposed using the representation theory of $G$.
For simplicity and illustrative purposes, we will assume that $G$ is compact.
In appendix~\ref{app:POVs_Abelian}, we will carry out the details of this decomposition for all Abelian symmetry groups $G$,
including non-compact ones.
The main idea is as follows: 
the various representations of $G$ ($\hat R_{\tilde A}$, $\hat U_S$ and $\hat U_A$ within the overlap $\braket{e}{g}_A$)
occurring in \cref{eq:app_general_jump_projector} can be decomposed into irreducible representations.
Then, tensor products of those irreducible representations couple together to form direct sums
of irreducible representations (according to the Peter-Weyl theorem, proposition~\ref{prop:Peter_Weyl}).
Finally, the integral $\int_G\der g$ leads to a further simplification via
\emph{Schur orthogonality}~\cite{Fell1988a}.
Concretely, we first use the Peter-Weyl theorem to decompose the right-regular representation
$\hat R_{\tilde A}(g)$ into irreducible representations,
producing a direct sum over the total charge $r$:
\begin{equation*}
    \hat\pi \iso \bigoplus_r \hat\id_r^\Gamma \otimes \int_G \der g\, \hat U_{r^*}(g) \otimes \hat U_S(g)
    \braket{e}{g}_A,
\end{equation*}
where $r^*$ labels the irreducible representation complex-conjugate to $r$.\footnote{
    We adopt this new notation here, instead of $*,r$ (which was also used more generally for commutants),
    as it will prove useful in the final expression of $\hat\pi$.
}
The superscript-$\Gamma$ on the multiplicity spaces indicates that this is
where the left-regular representation acts irreducibly with $\hat U_r$.
Next, we decompose $\hat U_{r^*} \otimes \hat U_S$ into irreducible representations:
\begin{equation*}
    \hat\pi \iso \bigoplus_r \hat\id_r^\Gamma \otimes \!\bigoplus_{q \in I(r)} \int_G \der g\, \hat U_q(g) \otimes \hat\id_{*,q}
    \braket{e}{g}_A.
\end{equation*}
Here, $I(r)$ is the set of charges $q$ occurring in $\hat U_{r^*}
\otimes \hat U_S$ in a given $r$-sector,
and the irreducible representations $\hat U_q$ occur with multiplicity $\hat\id_{*,q}$.
Note that at this point, the original tensor product structure
$\H_{\tilde A} \otimes \H_S$ is generally lost,
although we will see below that this is not the case when $G$ is Abelian.
Now, since $\hat\pi$ is an orthogonal projector, it is clear that $\int_G\der g\, \hat U_q(g) \braket{e}{g}_A =: \hat\pi_q$
must also be an orthogonal projector, possibly zero.
Next, we use \emph{Schur orthogonality}~\cite{Fell1988a}:
if $\hat U_r$ and $\hat U_{r'}$ are two irreducible unitary representations of $G$
with dimensions $\dim(r)$ and $\dim(r')$ respectively,
and $\{\ket{i}_r\}_{i=1}^{\dim(r)}$, $\{\ket{i}_{r'}\}_{i=1}^{\dim(r')}$
are orthogonal bases for the representation spaces, then
\begin{equation*}
    \frac{1}{|G|} \int_G \der g\, \bra{i}_r \hat U_r(g) \ket{j}_r\,
    \bra{k}_{r'} \hat U_{r'}(g) \ket{l}_{r'}
    = \frac{1}{\dim(r)} \delta_{r,r'^*} \delta_{i,k} \delta_{j,l},
\end{equation*}
where $r'^*$ labels the irreducible representation complex-conjugate to $r'$.
The overlap $\braket{e}{g}_A$, being an expectation value of a representation,
can always be written as a sum of matrix elements of the irreducible representations contained in $A$,
and thus Schur orthogonality can be used to compute $\hat\pi_q$.
Let $J$ denote the representation charges contained in~$A$ and $J^*$ the charges corresponding
to the complex-conjugate counterparts, then Schur orthogonality implies that $\hat\pi_q = 0$ whenever $q \notin J^*$.
The exact form of the non-zero $\hat\pi_q$ depend on the matrix elements contained in the overlap,
and are \eg\ particularly easily computed if $A$ is a so-called \emph{L-R system}~\cite{Hamette2021}.
Hence, we generally have
\begin{equation}
    \hat\pi = \bigoplus_r \hat\id_r^\Gamma \otimes \!\bigoplus_{q \in I(r) \cap J^*} \hat \pi_q \otimes \hat\id_{*,q}.
    \label{eq:general_jump_projector_decomposition_sketch}
\end{equation}
In conclusion, we see that if $I(r) \cap J^* = I(r)$ for a given value of $r$,
then $\hat\pi$ reduces to the identity in this sector,
and the QRF jump behaves like in the regular case.
If $A$ is regular, then $J$ and $J^*$ contain all possible charges, and $\hat\pi = \hat\id$,
as expected.

\section{Details on Points of View for Abelian Groups}
\label{app:POVs_Abelian}

\noindent
In this appendix, we provide details for section~\ref{sec:POVs_Abelian}.
As in the main text, we assume that $A$ is a QRF
for an Abelian group $G$ with rank-one
orientation POVM elements $\hat\gamma_A(g) = \frac{1}{c} \sketbra{g}{g}_A$,
$g \in G$, $c > 0$, and unitary representation
$\hat U_A(g')\sket{g}_A = \sket{g'g}_A$, $g,g' \in G$.
Furthermore, let $S$ be a quantum system
with unitary representation $\hat U_S$ of $G$.
We now derive the perspective of $A$.

\medskip
\paragraph*{Strategy.}
Our main tools for the task are proposition~\ref{prop:general_jump}
and the representation theory of $G$.
Since $G$ is Abelian, the latter is particularly simple,
making the problem tractable.
We will come back to it further down.
Let us briefly detail how we will accomplish the task.

Firstly, we note that according to
\cref{eq:app_general_jump_relative}, applying the isometry
$\hat V^{\to A} := \hat W^\dagger (\hat E \otimes \hat\id_S)
= \hat\pi \hat W^\dagger (\hat E \otimes \hat\id_S)$
makes relative operators act almost semi-simply on the subsystem $S$,
up to the application of the projector $\hat\pi$.
Similarly, \cref{eq:app_general_jump_representation}
shows that $\hat V^{\to A}$ maps the global $G$-action
onto the subsystem $\tilde A$, up to $\hat\pi$.
Importantly, $S$ and $\tilde A$ are subsystems of the Hilbert space
$\H_{\tilde A} \otimes \H_S$
which is typically (whenever $\hat\pi \neq \hat\id$)
\emph{not} isomorphic to $\H^{|L} = \H_A \otimes \H_S$.
We write ``$\hat V^{\to A}$'', foreshadowing that this will nevertheless
turn out to be the sought-after QRF jump; but this remains to be shown.
Accordingly, we set $\H^{|A} := \hat V^{\to A} \H^{|L}$.
$\H^{|A}$ and $\H^{|L}$ are isomorphic because $\hat V^{\to A}$
as defined above is an isometry. 

Secondly, we wish to apply the projector $\hat\pi$,
if possible \emph{without destroying the subsystem structure
on $\H_{\tilde A} \otimes \H_S$}.
Using the representation theory of $G$,
we will see that $\hat\pi$ can be decomposed as
$\hat\pi = \sum_i \hat\pi_{i,\tilde A} \otimes \hat\pi_{i,S}$,
where $\hat\pi_{i,\tilde A}$ and $\hat\pi_{i,S}$
are orthogonal projectors onto subspaces of
$\H_{\tilde A}$ and $\H_S$, respectively,
such that the projectors $\hat\pi_{i,\tilde A} \otimes \hat\pi_{i,S}$
are mutually orthogonal for different index values $i$.
Hence, the subsystem structure $\H_{\tilde A} \otimes \H_S$
stays intact when applying $\hat\pi$ for each $i$,
but becomes superselected due to the sum over $i$.
Furthermore, the $\hat\pi_{i,\tilde A}$'s project on invariant
(in the sense of the global $G$-action)
subspaces, so that the representation structure is also conserved
(at most pruned).
Hence, we reach a decomposition like that of theorem~\ref{thm:QRF_POV}.
This step importantly only works because $G$ is Abelian
and, hence, its representation theory is fairly straightforward.

Thirdly, one may wonder if the commutant is also represented correctly,
since so far we have only technically found a decomposition adapted to
relative operators and the global $G$-action.
However, we will see that the decomposition will automatically also fit
the commutant, again essentially because $G$ is Abelian.

The decomposition found here is the decomposition of
\eqref{eq:general_jump_projector_decomposition_sketch} when $G$ is
Abelian. In contrast to general groups $G$, we will see that the
subsystem structure between $\tilde A$ and $S$ is preserved by $\hat
\pi$ as given by \eqref{eq:general_jump_projector_decomposition_sketch}.
The preservation of this structure is, as far as we know, specific to
the Abelian case.

\medskip
\paragraph*{Representation Theory for Abelian $\boldsymbol G$.}
Since $G$ is Abelian, it is natural to write the group multiplication
as addition: $g+g' := gg' = g'g$ and $e=0$.
Furthermore, all irreducible representations are one-dimensional~%
\cite{Weil1965} (this follows from Schur's lemma).
We consider only unitary representations and will not explicitly
mention that they are unitary every time.

Irreducible representations thus act on $\bbC$
and are generally of the form $\hat U(g) = \ex^{\i r(g)}$
for some function $r$ taking values in $\bbR/2\pi$
and such that $r(ag + g') = ar(g) + r(g') \mod 2 \pi$, $a \in \bbZ$.
Two such representations are equivalent (isomorphic)
if and only if they are equal.
Essentially, $r$ is the \emph{charge}
which uniquely labels the irreducible unitary representation.
For any charge $r$, $a r$ for $a \in \bbZ$, particularly $-r$,
are also charges labelling irreducible representations.
The zero charge $r=0$ is the trivial representation.
Consider two Hilbert spaces $\H_1,\H_2 = \bbC$
with irreducible representations given by functions $r_1$ and $r_2$.
Then, $\H_1 \otimes \H_2 = \bbC$ carries the representation
$\ex^{\i r_1(g)} \ex^{\i r_2(g)} = \ex^{\i(r_1+r_2)(g)}$ which is thus
characterized by the function $r_1 + r_2$.
It is necessarily irreducible because it is one-dimensional.
Note that two different functions $r_1$ and $r_2$ may lead to the same
irreducible representation, as is \eg\ the case in finite groups;
in that case we identify $r_1$ and $r_2$.
In summary, the charges $\Sigma$ labelling
the irreducible representations of $G$
form a module over $\bbZ$ (with addition and scalar multiplication defined
as the addition and scaling of functions modulo identifications),
and charge addition corresponds to $\Sigma$-addition.
This makes $\Sigma$ into a group, called the \emph{Pontryagin dual} of $G$~%
\cite{Weil1965}.
Contrast this to the non-Abelian groups where charge addition
is generally more involved, \eg\ spin-addition for $\SU(2)$.

The (non-compact generalization of) the Peter-Weyl theorem
(proposition~\ref{prop:Peter_Weyl}) applies:
\begin{equation}
    L^2(G) \iso \bigoplus_{r\in \Sigma} \H_r,
    \qquad \H_r = \bbC,
    \qquad \hat L(g) \iso \bigoplus_{r\in \Sigma} \ex^{\i r(g)}\, \hat\id_r,
    \qquad \hat R(g) \iso \bigoplus_{r\in \Sigma} \ex^{-\i r(g)}\, \hat\id_r,
    \label{eq:Peter_Weyl_Abelian}
\end{equation}
where $\bigoplus$ indicates as usual the appropriate combination of direct sum
and integral needed for the given group $G$.
Note that the multiplicities are all trivial, and we could leave them out.
Because $\Sigma$ is a group, we can choose the measure in
$\bigoplus_{r\in\Sigma}$ as a Haar measure of $\Sigma$.
If this is done, then one can generalize Fourier theory and particularly
important for us, obtain a generalization of \emph{Schur orthogonality}~%
\cite{Weil1965}:
\begin{equation}
    \frac{1}{C} \int_G\der g\, \ex^{\i r(g) - \i r'(g)} = \delta(r-r')
    \label{eq:Schur_orthogonality}
\end{equation}
for some constant $C > 0$ dependent on the choice of Haar measure on $\Sigma$,
and where the ``$\delta$-function'' is to be interpreted as
$\int_\Sigma\der r\, \delta(r-r')\, f(r) = f(r')$ for $\der r$
the Haar measure on $\Sigma$.
In particular, $\bigoplus_{r\in\Sigma} \delta(r-r')\, \H_r = \H_{r'}$.

If $G$ is compact, then $\Sigma$ is discrete, and $\bigoplus_{r\in\Sigma}$
truly a direct sum.
In that case, it makes sense to take the counting measure on $\Sigma$,
so that $C = |G|$ and $\delta(r-r') = \delta_{r,r'}$ is the Kronecker-$\delta$.
If $G = (\bbR^n,+)$, then \cref{eq:Schur_orthogonality}
is the well-known Fourier-formula for the Dirac-$\delta$,
and hence $C = (2\pi)^n$.
We will however not need the actual value of $C$.

\medskip
\paragraph*{Decomposition of $\boldsymbol{\hat\pi}$.}
We will consider explicitly the compact case,
and indicate at the end of this step how our findings
can be generalized to non-compact symmetry groups.

The Hilbert space $\H_{\tilde A}$ together with the left- and right-regular
representations $\hat L_{\tilde A}$ and $\hat R_{\tilde A}$
acting on it decompose according to \cref{eq:Peter_Weyl_Abelian}.
Furthermore, $\H_S$ carrying the representation $\hat U_S$ can be decomposed as
\begin{equation}
    \H_S \iso \bigoplus_{q\in\sigma_S} \H_q,
    \qquad \hat U_S(g) \iso \bigoplus_{q\in\sigma_S}
    \ex^{\i q(g)}\, \hat\id_q.
\end{equation}
Here, $\sigma_S \subset \Sigma$ is the set of charges occurring in $\hat U_S$,
and $\H_q$ is the multiplicity space of $q \in \sigma_S$;
because the irreducible representation spaces
are one-dimensional, we could leave them out,
and for notational simplicity, we labelled the multiplicity spaces directly
with ``$q$'' rather than ``$*,q$'' like in proposition~\ref{prop:Peter_Weyl}.
Note that the multiplicity spaces may have any dimension
because $\hat U_S$ is arbitrary.

Consequently, the projector $\hat\pi$, as defined in \eqref{eq:app_general_jump_projector}, decomposes as follows:
\begin{equation*}
    \hat\pi \iso \frac{1}{c}
    \sum_{\substack{r\in\Sigma\\q\in\sigma_S}}
    \hat\pi_{r,\tilde A} \otimes \hat\pi_{q,S}
    \int_G\der g\,
    \ex^{-\i r(g) + \i q(g)}
    \braket{0}{g}_A.
\end{equation*}
To simplify $\hat\pi$ further, we must also decompose $\H_A$ into
irreducible representations using the Peter-Weyl theorem:
\begin{equation*}
    \H_A \iso \bigoplus_{p\in\sigma_A} \H_p,
    \qquad \hat U_A(g) \iso \bigoplus_{p\in\sigma_A}
    \ex^{\i p(g)}\, \hat\id_p,
\end{equation*}
where $\sigma_A\subset\Sigma$ is the set of charges contained in $A$
and $\H_p$ is the multiplicity subspace for $p\in\sigma_A$.
Most generally, the state $\ket{0}_A$ is of the form
$\ket{0}_A \iso \sum_{p\in\sigma_A} \alpha_p \sket{\psi_p}$,
with $\sket{\psi_p} \in \H_p$ normalized so that
$\sum_{p\in\sigma_A} \sketbra{\psi_p}{\psi_p} = \hat\id_A$,
and~$\alpha_p \in \bbC$.
Accordingly, $\ket{g}_A \iso \sum_{p\in\sigma_A} \alpha_p\,\ex^{\i p(g)}\,
\sket{\psi_p}$ and completeness $\frac{1}{c}\int_G\der g\,\ketbra{g}{g}_A
= \hat\id_A$ of the orientation POVM requires
\begin{equation*}
    \sum_{p\in\sigma_A} \hat\pi_p = \hat\id_A
    = \frac{1}{c} \sum_{p,\tilde p\in\sigma_A}\int_G\der g\,
    \alpha_p \overline{\alpha_{\tilde p}}\,
    \sketbra{\psi_p}{\psi_{\tilde p}} \ex^{\i p(g) - \i \tilde p(g)}
    = \frac{C}{c} \sum_{p\in\sigma_A} |\alpha_p|^2 \ketbra{\psi_p}{\psi_p},
\end{equation*}
where we have used Schur orthogonality, \cref{eq:Schur_orthogonality}.
As we are in the compact case $C = |G|$;
however, since we later wish to generalize, we keep $C$ unspecified
and use the Dirac-$\delta$ notation rather than Kronecker-$\delta$.
Thus, we must have that $\sketbra{\psi_p}{\psi_p} = \hat\pi_p$,
forcing $\dim\H_p = 1$, and $|\alpha_p|^2 = c/C$.
Hence, $c^{-1}\sbraket{0}{g}_A = C^{-1}\sum_{p\in\sigma_A} \ex^{\i p(g)}$,
and
\begin{equation*}
    \hat\pi \iso \,\sum_{\mathclap{\substack{r\in\Sigma\\
    q\in\sigma_S,\,p\in\sigma_A}}}\,
    \hat\pi_{r,\tilde A} \otimes \hat\pi_{q,S}\,
    \frac{1}{C} \int_G\der g\,
    \ex^{-\i r(g) + \i q(p) + \i p(g)}
    = \,\sum_{\mathclap{\substack{r\in\Sigma\\
    q\in\sigma_S,\,p\in\sigma_A}}}\,
    \hat\pi_{r,\tilde A} \otimes \hat\pi_{q,S}\,
    \delta(p + q - r),
\end{equation*}
where we have again used Schur orthogonality.

Now is a good time discuss the changes needed for the non-compact case.
Chiefly, all (direct) sums over charges must be replaced by the appropriate
combinations of (direct) integrals and sums, as dictated by the non-compact
generalization of the Peter-Weyl theorem.
Furthermore, projectors $\hat\pi_{r,\tilde A}$ and $\hat\pi_{q,S}$
onto charges typically become improper,
as do the states $\sket{\psi_p}$.
However, all completeness relations remain valid if the sums were
correctly replaced as just described.
All $\delta$-functions are now generally products of Kronecker-
and Dirac-$\delta$ distributions on $\Sigma$.
Finally, the constant $C$ cancels just like in the compact case,
and the decomposition of $\hat\pi$ has the same algebraic form.
One easily checks that this translation from compact to non-compact
works if \eg\ $G = (\bbR^n,+)$, using Fourier integrals.

Continuing in parallel for both cases using the notation for the compact case,
we notice that the $p+q-r = 0$ is only possible if two conditions are met:
firstly, $r \in \sigma_A + \sigma_S$, but according to charge addition,
$\sigma_A + \sigma_S = \sigma_{AS}$ is simply the charge spectrum of
the total system $AS$;
secondly, $q \in \sigma_S \cap (\sigma_{AS} - \sigma_A)$
because $q \in \sigma_S$, $r \in \sigma_{AS}$ and $p \in \sigma_A$.
The conditions are shown in figure~\ref{fig:charge_diagram} in the main text.
Using them, we can simplify the decomposition of $\hat\pi$:
\begin{equation*}
    \hat\pi \iso \,\sum_{\mathclap{\substack{r\in\sigma_{AS}\\
    q\in\sigma_S \cap (r - \sigma_A)}}}\,
    \hat\pi_{r,\tilde A} \otimes \hat\pi_{q,S}.
\end{equation*}
So indeed, $\hat\pi$ is of the form described above in the strategy outline.

To simplify the notation, we denote by
$\bb\kappa(r) := \sigma_S \cap (r-\sigma_A)$
the set of charges occurring in the above sum over $q$,
and introduce the spaces and corresponding orthogonal projectors
\begin{equation*}
    \H_{\bb k,S} := 
    \,\bigoplus_{q \in \bb k}\, \H_{q,S},
    \qquad \hat\pi_{\bb k,S} :=
    \sum_{q \in \bb k} \hat\pi_{q,S}
\end{equation*}
obtained from any given set $\bb k$
(boldface distinguishes set from single charges).
This allows us to write
\begin{equation}
    \hat\pi \iso \sum_{\substack{r\in\sigma_{AS}}}
    \hat\pi_{r,\tilde A} \otimes \hat\pi_{\bb \kappa(r),S}
    = \sum_{\substack{\bb k \in \bb\kappa(\sigma_{AS})\\
    r \in \bb\kappa^{-1}(\bb k)}}
    \hat\pi_{r,\tilde A} \otimes \hat\pi_{\bb k,S}
    = \sum_{\bb k \in \bb\kappa(\sigma_{AS})} \hat\pi_{\bb\kappa^{-1}(\bb k),\tilde A}
    \otimes \hat\pi_{\bb k,S},
    \label{eq:pi_decomposition_Abelian}
\end{equation}
where we see $\bb\kappa$ as a function
with image $\bb\kappa(\sigma_{AS})$, and $\bb\kappa^{-1}(\bb k)$
is the pre-image of the set $\bb k$ under $\bb\kappa$.
Notice that different charges $r_1 \neq r_2$ may result in the same set
$\bb\kappa(r_1) = \bb\kappa(r_2)$,
but different sets $\bb k_1 \neq \bb k_2$
are always due to different charges $r_1 \neq r_2$
because $\bb\kappa$ is a function.
As we will see, the perspective of $A$ is dictated by this function
$\bb\kappa$; if $G$ is not Abelian,
there is no reason to believe that such a function exists.

\medskip
\paragraph*{Perspective of $\boldsymbol A$.}
With proposition~\ref{prop:general_jump} and
\cref{eq:pi_decomposition_Abelian} it follows that
via the QRF jump $\hat V^{\to A} = \hat W^\dagger (\hat E \otimes \hat\id_S)
= \hat\pi\hat W^\dagger (\hat E \otimes \hat\id_S)$,
we get decompositions for the Hilbert space $\H^{|A}$,
for the global $G$-action $\hat U^{|A}$,
and for relative operators of the form
$\hat F = \frac{1}{c} \int_G\der g\,\sketbra{g}{g}_A \otimes
\mathsf U_S(g)[\hat f_S]$, \eg\ relative POVM elements:
\begin{align}
    \H^{|A} := \hat V^{\to A}\H^{|L} &=
    \bigoplus_{\substack{\bb k \in \bb\kappa(\sigma_{AS})\\
    r \in \bb\kappa^{-1}(\bb k)}}
    \H_{r,\tilde A} \otimes \H_{\bb k,S},
    \label{eq:Hilbert_space_A_Abelian}\\
    \qquad \hat U^{|A}(g) := \mathsf V^{\to A}[\hat U^{|L}(g)] &= 
    \bigoplus_{\substack{\bb k \in \bb\kappa(\sigma_{AS})\\
    r \in \bb\kappa^{-1}(\bb k)}}
    \ex^{\i r(g)}\, \hat\id_{r,\tilde A} \otimes \hat\id_{\bb k,S},
    \label{eq:global_reorientations_A_Abelian}\\
    \qquad \mathsf V^{\to A}[\hat F] &=
    \bigoplus_{\bb k \in \bb\kappa(\sigma_{AS})}
    \hat\id_{\bb\kappa^{-1}(\bb k),\tilde A} \otimes \hat\pi_{\bb k,S} \hat f_S \hat\pi_{\bb k,S}.
    \label{eq:relative_POVM_A_Abelian}
\end{align}
Now, the elements of $\A^{S:A|A}$ are most generally polynomials
of operators with the form given in \cref{eq:relative_POVM_A_Abelian},
but finding the algebra directly by computing polynomials is non-trivial.
The $G$-invariant commutant $\A^{\overline{S:A}|A}$ on the other hand
is easy to find, consisting of the $G$-invariant operators
which commute with all operators of the form
of \cref{eq:relative_POVM_A_Abelian},
as the latter generate $\A^{S:A|A}$.
We find
\begin{equation}
    \A^{\overline{S:A}|A} =
    \mathsf V^{\to A}\bigl[\A^{\overline{S:A}|L}\bigr] =
    \bigoplus_{\substack{\bb k\in\bb\kappa(\sigma_{AS})\\
    r\in\bb\kappa^{-1}(\bb k)}}
    \O_{r,\tilde A} \otimes \hat\id_{\bb k,S},
    \label{eq:commutant_A_Abelian}
\end{equation}
where we have used that $\hat f_S$ is arbitrary and that
$G$-invariant operators are block-diagonal in $r$
(apply Schur's lemma to the fact that $G$ acts according to
\cref{eq:global_reorientations_A_Abelian}).

It remains to find the operator algebra $\A^{S:A|A}$
generated by operators of the form given in \cref{eq:relative_POVM_A_Abelian}.
Notice that \cref{eq:relative_POVM_A_Abelian} is block-diagonal in $\bb k$,
and hence the same is true for $\A^{S:A|A}$. This leads to
\begin{equation}
    \A^{S:A|A} = \mathsf V^{\to A}[\A^{S:A|L}] <
    \bigoplus_{\bb k\in\bb\kappa(\sigma_{AS})}
    \hat\id_{\bb\kappa^{-1}(\bb k),\tilde A} \otimes \O_{\bb k,S}.
    \label{eq:larger_algebra_A_Abelian}
\end{equation}
The goal will be to show that this is actually an equality.
For this, we take the intersection of \cref{eq:commutant_A_Abelian}
and of the right-hand side of \cref{eq:larger_algebra_A_Abelian}
to find that
\begin{equation*}
    \Z^{S:A|A} = \A^{S:A|A} \cap \A^{\overline{S:A}|A}
    < \O\Bigl[ \hat\pi_{\bb\kappa^{-1}(\bb k),\tilde A} \otimes
    \hat\pi_{\bb k,S} \st \bb k \in \bb\kappa(\sigma_{AS}) \Bigr].
\end{equation*}
Because any element of an operator algebra can be written
as some polynomial of generators
it follows that $\Z^{S:A|A}$ consists of certain polynomials of the
pairwise orthogonal projectors $\hat\pi_{\bb\kappa^{-1}(\bb k),\tilde A}
\otimes \hat\pi_{\bb k,S}$, $\bb k \in \bb\kappa(\sigma_{AS})$.
But polynomials of pairwise orthogonal projectors are linear combinations
of them, and such a linear combination
$\smash{\sum_i \lambda_i \hat\pi_{\bb\kappa^{-1}(\bb k_i),\tilde A}
\otimes \hat\pi_{\bb k_i}}$ is itself an orthogonal projector if and only if
$\lambda_i \in \{0,1\}$ for all $i$.

Thus, if $\hat\Pi_k \in \Z^{S:A|A}$ is the orthogonal projector
onto a simultaneous eigenspace of $\Z^{S:A|A}$,
\ie\ onto a $k$-sector in the notation of proposition
\ref{prop:algebra_decomposition_full} or equivalently
theorem~\ref{thm:QRF_POV}, then it must have the form
\begin{equation}
    \hat\Pi_k = \bigoplus_i \hat\pi_{\bb\kappa^{-1}(\bb k_i),\tilde A}
    \otimes \hat\pi_{\bb k_i,S},
    \label{eq:centre_projector_A_Abelian}
\end{equation}
where $\bb k_i \in \bb\kappa(\sigma_{AS})$ are distinct sets of charges.
Note the difference between the subscript-$\bb k$ (bold font),
which labels a set of charges,
and subscript-$k$ (regular font),
which labels a simultaneous eigenspace of the centre $\Z^{S:A|A}$.
Given orthogonal projectors $\hat\Pi_k$, $\hat\Pi_{k'}$
onto \emph{different} simultaneous eigenspaces of $\Z^{S:A|A}$,
the sets $\{\bb k_i\}_i$ and $\{\bb k'_j\}_j$ of charge sets
characterizing the projectors
in the sense of \cref{eq:centre_projector_A_Abelian}
must be \emph{disjoint},
as otherwise the projectors are not orthogonal to each other.
Because $\bb\kappa$ is a function,
the representation charges contained in either eigenspace are
also disjoint: any given representation charge value $r$
may only occur in a single simultaneous eigenspace of $\Z^{S:A|A}$,
\ie\ for a single $k$-value.
This observation will be useful later.

\Needspace{2cm}
To continue, we need the following result, which we assume for now and will prove later:
\begin{lemma}{}{app_A_Abelian_technical}
    Let $K$ label the set of simultaneous eigenspaces
    of $\Z^{S:A|A}$ (the set over which the $k$-sums in
    proposition~\ref{prop:algebra_decomposition_full} 
    and theorem~\ref{thm:QRF_POV} run).
    There exists a bijection
    $\jmath : K \to \bb\kappa(\sigma_{AS})$,
    and the orthogonal projector in \cref{eq:centre_projector_A_Abelian}
    onto the simultaneous eigenspace of $\Z^{S:A|A}$ labelled by $k \in K$ is
    \begin{equation*}
        \hat\Pi_k = \hat\pi_{(\bb\kappa^{-1} \circ \jmath)(k),\tilde A}
        \otimes \hat\pi_{\jmath(k),S}.
    \end{equation*}
\end{lemma}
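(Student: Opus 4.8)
The plan is to show that the lemma reduces entirely to one structural fact: that each of the mutually orthogonal block projectors $\hat\Pi_{\bb k} := \hat\pi_{\bb\kappa^{-1}(\bb k),\tilde A}\otimes\hat\pi_{\bb k,S}$, indexed by $\bb k\in\bb\kappa(\sigma_{AS})$, already lies in the centre $\Z^{S:A|A}$. Granting this, the bijection and the stated form of $\hat\Pi_k$ follow at once: we established just before the lemma that $\Z^{S:A|A}$ is contained in the algebra generated by the $\hat\Pi_{\bb k}$, which—since these are pairwise orthogonal—is merely their linear span. If each $\hat\Pi_{\bb k}$ is itself central, they must be the minimal projectors of this commutative algebra, hence exactly the projectors onto the simultaneous eigenspaces of $\Z^{S:A|A}$. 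The assignment sending $k\in K$ to the unique $\bb k$ with $\hat\Pi_k=\hat\Pi_{\bb k}$ is then the desired bijection $\jmath:K\to\bb\kappa(\sigma_{AS})$, and $\hat\Pi_k=\hat\Pi_{\jmath(k)}$ is precisely the claimed expression.

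One half of the centrality is immediate. From \cref{eq:commutant_A_Abelian}, the operator $\hat\Pi_{\bb k}$—which acts as the identity on the sector $\bb k$ and as zero on every other sector, since the preimages $\bb\kappa^{-1}(\bb k)$ are pairwise disjoint—is obtained by taking the identity in each $\O_{r,\tilde A}$ with $r\in\bb\kappa^{-1}(\bb k)$; hence $\hat\Pi_{\bb k}\in\A^{\overline{S:A}|A}$. The real work is to show $\hat\Pi_{\bb k}\in\A^{S:A|A}$. Here I would use that $\A^{S:A|A}$ is block-diagonal in $\bb k$ (noted below \cref{eq:larger_algebra_A_Abelian}) and that its compression to each block is surjective onto $\O_{\bb k,S}$: the generators $\mathsf V^{\to A}[\mathsf R_A[\hat f_S]]$ of \cref{eq:relative_POVM_A_Abelian} compress to $\hat\pi_{\bb k,S}\hat f_S\hat\pi_{\bb k,S}$, and these exhaust $\O_{\bb k,S}$ as $\hat f_S$ ranges over $\O_S$. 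Thus $\A^{S:A|A}$ is a finite-dimensional semisimple $*$-algebra sitting inside $\bigoplus_{\bb k}\hat\id_{\bb\kappa^{-1}(\bb k),\tilde A}\otimes\O_{\bb k,S}$ with full compressions onto each factor.

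By Artin--Wedderburn, such an algebra is a direct sum of simple components, each embedding diagonally into a group $G_c\subseteq\bb\kappa(\sigma_{AS})$ of blocks via $*$-isomorphisms $\O_{\bb k,S}\iso\O_{\bb k',S}$ identifying the corresponding entries; its minimal central projectors are the sums $\sum_{\bb k\in G_c}\hat\Pi_{\bb k}$, and $\hat\Pi_{\bb k}\in\A^{S:A|A}$ exactly when $G_c=\{\bb k\}$ is a singleton. To exclude any nontrivial gluing I would invoke the charge spectra. If distinct $\bb k,\bb k'$ lay in the same component, the induced isomorphism $\phi$ would satisfy $\hat\pi_{\bb k',S}\hat f_S\hat\pi_{\bb k',S}=\phi\bigl(\hat\pi_{\bb k,S}\hat f_S\hat\pi_{\bb k,S}\bigr)$ for every $\hat f_S$. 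Since $\bb k\neq\bb k'$ as subsets of $\sigma_S$, their symmetric difference is nonempty; choosing a charge $q^\ast$ in it and $\hat f_S=\hat\pi_{q^\ast,S}$ makes exactly one side vanish while the other equals the nonzero projector $\hat\pi_{q^\ast,S}$, contradicting the injectivity of $\phi$. Hence every component is a singleton, each $\hat\Pi_{\bb k}$ is a minimal central projector, and the lemma follows.

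The main obstacle is the step invoking the Artin--Wedderburn structure of subdirect products of matrix algebras and translating ``blocks glued within a single simple component'' into the concrete relation $\hat\pi_{\bb k',S}\hat f_S\hat\pi_{\bb k',S}=\phi(\hat\pi_{\bb k,S}\hat f_S\hat\pi_{\bb k,S})$; by contrast, the private-charge contradiction that finishes the argument is the clean part. A secondary subtlety, glossed over here as elsewhere in the paper, is the passage to infinite-dimensional $\H_S$ and non-compact $G$, where the projectors $\hat\pi_{q,S}$ and $\hat\pi_{\bb k,S}$ may be improper and the simple-component decomposition must be read in the type-I von Neumann sense of \cref{prop:algebra_decomposition_full}.
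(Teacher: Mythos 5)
Your proof is correct at the level of rigour the paper itself adopts (finite dimensions, with the same deferral of the non-compact/infinite-dimensional subtleties), but it takes a genuinely different route from the paper's. The paper argues top-down: it takes an arbitrary simultaneous eigenspace of $\Z^{S:A|A}$, whose projector is already known from \cref{eq:centre_projector_A_Abelian} to be a sum $\bigoplus_i \hat\pi_{\bb\kappa^{-1}(\bb k_i),\tilde A}\otimes\hat\pi_{\bb k_i,S}$, applies proposition~\ref{prop:algebra_decomposition_full} \emph{inside} that sector, and matches the abstract decomposition against the concrete ones in \cref{eq:Hilbert_space_A_Abelian,eq:commutant_A_Abelian}; since any charge $r$ can occur in at most one $k$-sector, the abstract commutant blocks $\O_{r,k}$ get identified with $\O_{r,\tilde A}$, which forces the single space $\H_k$ to coincide with $\H_{\bb k_i,S}$ for every $i$ simultaneously---impossible for distinct charge sets unless the sum over $i$ has one term. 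You instead argue bottom-up: you show each candidate projector $\hat\Pi_{\bb k}=\hat\pi_{\bb\kappa^{-1}(\bb k),\tilde A}\otimes\hat\pi_{\bb k,S}$ is itself central, with membership in $\A^{\overline{S:A}|A}$ immediate from \cref{eq:commutant_A_Abelian} and membership in $\A^{S:A|A}$ the crux, handled by the subdirect-product (Artin--Wedderburn) structure of $\A^{S:A|A}$ inside $\bigoplus_{\bb k}\hat\id_{\bb\kappa^{-1}(\bb k),\tilde A}\otimes\O_{\bb k,S}$ together with the private-charge contradiction (a $*$-isomorphism $\phi$ cannot annihilate the nonzero projector $\hat\pi_{q^*,S}$). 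Both routes ultimately rest on comparable structure theory---the paper on proposition~\ref{prop:algebra_decomposition_full}, you on the Wedderburn analysis of a $*$-subalgebra with surjective block compressions, which is essentially the content of proposition~\ref{prop:algebra_decomposition_2} in disguise---so neither is more elementary. What yours buys: it isolates the obstruction to sector gluing as a single concrete operator identity, it makes surjectivity of $\jmath$ automatic (each $\hat\Pi_{\bb k}$ is a minimal central projector, whereas the paper's surjectivity argument via ``support on all $\bb k$-sectors'' is looser), and it yields equality in \cref{eq:larger_algebra_A_Abelian}, i.e.\ \cref{eq:relative_operators_A_Abelian}, as an immediate by-product rather than as a separate step after the lemma. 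What it costs: the subdirect-product statement (surjectivity of each block compression onto a simple algebra forces exactly one Wedderburn component per block, with glued blocks linked by $*$-isomorphisms) is invoked rather than proven; if your argument were to replace the paper's, that step should be spelled out, e.g.\ by the kernel-ideal argument, or derived from the paper's own lemma~\ref{lem:minimal_projectors} and proposition~\ref{prop:algebra_decomposition_2}.
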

\noindent
Thus, $\bb k$ labels precisely the simultaneous eigenspaces of
the centre $\Z^{S:A|A}$ (hence the choice of symbol).
Let now $\hat\pi_{\bb k} :=
\hat\pi_{\bb\kappa^{-1}(\bb k),\tilde A} \otimes \hat\pi_{\bb k,S}$
be the orthogonal projector onto a $\bb k$-sector.
According to lemma~\ref{lem:app_A_Abelian_technical}, $\hat\pi_{\bb k} = \hat\Pi_{\jmath^{-1}(\bb k)}$
is an orthogonal projector onto a simultaneous eigenspace of $\Z^{S:A|A}$.
Hence, $\hat\pi_{\bb k} \in \Z^{S:A|A} \subset \A^{S:A|A}$.
Therefore, for any relational operator $\mathsf V^{\to A}[\hat F]
\in \A^{S:A|A}$ of the form of \cref{eq:relative_POVM_A_Abelian}, also
\begin{equation*}
    \hat\pi_{\bb k} \mathsf V^{\to A}[\hat F] \hat\pi_{\bb k}
    = \hat\id_{\bb\kappa^{-1}(\bb k),\tilde A}
    \otimes \hat\pi_{\bb k} \hat f_S \hat\pi_{\bb k}
    \in \A^{S:A|A}.
\end{equation*}
But since $\hat f_S$ is arbitrary, we have equality in
\cref{eq:larger_algebra_A_Abelian}:
\begin{equation}
    \A^{S:A|A} = \mathsf V^{\to A}[\A^{S:A|L}]
    = \bigoplus_{\bb k\in\bb\kappa(\sigma_{AS})}
    \hat\id_{\bb\kappa^{-1}(\bb k),\tilde A} \otimes \O_{\bb k,S},
    \label{eq:relative_operators_A_Abelian}
\end{equation}
showing what we wanted to show.

\medskip
\begin{proof}[Proof of lemma~\ref{lem:app_A_Abelian_technical}]
    Consider the orthogonal projector $\hat\Pi_k$ onto a simultaneous
    eigenspace of $\Z^{S:A|A}$,
    as defined in \cref{eq:centre_projector_A_Abelian}.
    According to proposition~\ref{prop:algebra_decomposition_full},
    this eigenspace and the related mathematical objects decompose as
    \begin{gather}
        \hat\Pi_k \H^{|A} = \bigoplus_{r\in\Omega} \H_{r,k} \otimes \H_k,
        \qquad \hat\Pi_k \A^{S:A|A} \hat\Pi_k
        = \Bigl(\bigoplus_{r\in\Omega} \hat\id_{r,k}\Bigr) \otimes \O_k, \notag\\
        \hat\Pi_k \A^{\overline{S:A}|A} \hat\Pi_k
        = \bigoplus_{r\in\Omega} \O_{r,k} \otimes \hat\id_k,
        \qquad \hat\Pi_k \hat U^{|A}(g) \hat\Pi_k
        = \bigoplus_{r\in\Omega} \ex^{\i r(g)}\hat\id_{r,k} \otimes \hat\id_k,
        \label{eq:decompositions_k_fixed_A_Abelian}
    \end{gather}
    for some Hilbert spaces $\H_k$, $\H_{r,k}$, and where
    $\Omega$ is the set of representation charges contained in the eigenspace.
    We have also used that all irreducible representations are one-dimensional
    and act as $\bbC$-valued exponentials.
    According to \cref{eq:centre_projector_A_Abelian},
    $\Omega = \bigcup_i \bb\kappa^{-1}(\bb k_i)$.

    The simultaneous eigenspace corresponding to $\hat\Pi_k$,
    which is characterized by the set of distinct charge sets $\{\bb k_i\}_i$,
    is also characterized by the set of pre-image charge sets
    $\{\bb\kappa^{-1}(\bb k_i)\}_i$,
    simply because $\bb\kappa$ is a function.
    We can thus restrict our previous decompositions in
    \cref{eq:Hilbert_space_A_Abelian,eq:global_reorientations_A_Abelian,%
    eq:commutant_A_Abelian}
    to those values of $r$ in the pre-images of the sets $\{\bb k_i\}_i$
    and thereby obtain the corresponding decompositions in
    \cref{eq:decompositions_k_fixed_A_Abelian}.
    Particularly, we must have that
    \begin{equation*}
        \Bigl(\bigoplus_{\mathclap{\substack{i\\r\in\bb\kappa^{-1}(\bb k_i)}}}
        \H_{r,k} \Bigr) \otimes \H_k
        = \bigoplus_{\mathclap{\substack{i\\r\in\bb\kappa^{-1}(\bb k_i)}}}
        \H_{r,\tilde A} \otimes \H_{\bb k_i},
        \qquad
        \Bigl(\bigoplus_{\mathclap{\substack{i\\r\in\bb\kappa^{-1}(\bb k_i)}}}
        \O_{r,k} \Bigr) \otimes \hat\id_{k}
        = \bigoplus_{\mathclap{\substack{i\\r\in\bb\kappa^{-1}(\bb k_i)}}}
        \O_{r,\tilde A} \otimes \hat\id_{\bb k_i,S}.
    \end{equation*}
    We have added brackets to emphasize that the second factors on the
    left-hand sides do not depend on $r$;
    this will be a crucial point.

    From \cref{eq:commutant_A_Abelian}, we know that
    $\O_{r,\tilde A}$ is the restriction of the $G$-invariant commutant
    to the charge $r$:
    $\pi_{r,\tilde A} \otimes \mathsf\id_{\bb\kappa(r)}
    \bigl[\A^{\overline{S:A}|A}\bigr]
    = \O_{r,\tilde A} \otimes \hat\id_{\bb\kappa(r),S} \iso \O_{r,\tilde A}$.
    At the same time, proposition~\ref{prop:algebra_decomposition_full}
    tells us that the restriction of the $G$-invariant commutant
    to the charge $r$ \emph{and} to the simultaneous centre eigenspace
    labelled by $k$ is $\O_{r,k}$.
    But we remarked above that any given charge value cannot occur in two
    different simultaneous eigenspaces of $\Z^{S:A|A}$.
    Hence, $\O_{r,k}$ is actually also the restriction of the $G$-invariant
    commutant to \emph{just} the charge $r$.
    Hence, we may identify (up to a basis change)
    $\O_{r,k} = \O_{r,\tilde A}$ and also $\H_{r,k} = \H_{r,\tilde A}$.
    But then this means that $\H_k = \H_{\bb k_i,S}$ for all $i$.
    And since the $\bb k_i$'s are distinct sets, this is possible only
    if the sum over $i$ runs over a single value.

    With this, we have shown that every $k$-sector is a $\bb k$-sector,
    and hence there is a function $\jmath : K \to \bb\kappa(\sigma_{AS})$
    implementing this mapping.
    This gives the form of $\hat\Pi_k$ in the lemma statement.
    For different simultaneous eigenspaces of $\Z^{S:A|A}$
    to be different, $\jmath$ must furthermore be injective.
    Finally, $\jmath$ must also be surjective,
    because \cref{eq:relative_POVM_A_Abelian} generally has support on
    all $\bb k$-sectors, and consequently the same must be true for
    $\Z^{S:A|A}$.
\end{proof}

\medskip
\paragraph*{Summary.}
We have found a decomposition as in theorem~\ref{thm:QRF_POV}:
\begin{proposition}{QRF Perspective for Abelian Groups}{POV_Abelian}
    Let $G$ be an Abelian group, $A$ a QRF with rank-one orientation POVM elements
    $\hat\gamma_A^{|L}(g) = \frac{1}{c} \sketbra{g}{g}_A$
    and representation $\hat U_A(g') \sket{g}_A = \sket{g'g}_A$,
    and let $S$ be a quantum system with representation $\hat U_S$.
    Consider the map $\hat V^{\to A} : \H_A \otimes \H_S = \H^{|L}
    \to \H_{\tilde A} \otimes \H_S$, defined as
    $\hat V^{\to A} := \hat W^\dagger(\hat E \otimes \id_S)$,
    where $\hat E$ is the embedding of $A$ into a regular QRF $\tilde A$
    given in proposition~\ref{prop:embedding},
    and $\hat W$ is defined in \cref{eq:app_regular_inverse_jump_core}.
    
    \medskip
    Then, $\hat V^{\to A}$, seen as a unitary map onto its image,
    is the QRF jump into the perspective of $A$
    in accordance with theorem~\ref{thm:QRF_POV}:
    \begin{align*}
        \allowdisplaybreaks
        \H^{|A} := \hat V^{\to A} \H^{|L} &=
        \bigoplus_{\substack{\bb k\in\bb\kappa(\sigma_{AS})\\
        r\in\bb\kappa^{-1}(\bb k)}}
        \H_{r,\tilde A} \otimes \H_{\bb k,S},
        \qquad \bb\kappa(r) := \sigma_S\cap(r-\sigma_A),
        \qquad \H_{\bb k,S} := \bigoplus_{q\in\bb k} \H_{q,S},\\
        \A^{S:A|A} := \mathsf V^{\to A}[\A^{S:A|L}] &=
        \bigoplus_{\bb k\in\bb\kappa(\sigma_{AS})}
        \hat\id_{\bb\kappa^{-1}(\bb k),\tilde A} \otimes \O_{\bb k,S},\\
        \hat F^{|A} := \mathsf V^{\to A}[\hat F^{|L}] &=
        \bigoplus_{\bb k\in\bb\kappa(\sigma_{AS})}
        \hat\id_{\bb\kappa^{-1}(\bb k),\tilde A} \otimes
        \hat\pi_{\bb k,S} \hat f_S \hat\pi_{\bb k,S},
        \qquad \hat F^{|L} := \frac{1}{c}\int_G\der g\, \sketbra{g}{g}_A
        \otimes \mathsf U_S(g)[\hat f_S],\\
        \A^{\overline{S:A}|A} := \mathsf V^{\to A}
        [\A^{\overline{S:A}|A}] &=
        \bigoplus_{\substack{\bb k\in\bb\kappa(\sigma_{AS})\\
        r\in\bb\kappa^{-1}(\bb k)}}
        \O_{r,\tilde A} \otimes \hat\id_{\bb k,S},\\
        \hat U^{|A}(g) := \mathsf V^{\to A}[\hat U^{|L}(g)] &=
        \bigoplus_{\substack{\bb k\in\bb\kappa(\sigma_{AS})\\
        r\in\bb\kappa^{-1}(\bb k)}}
        \ex^{\i r(g)}\,\hat\id_{r,\tilde A} \otimes \hat\id_{\bb k,S}.
    \end{align*}
    In particular, $r$ takes on the same role as in the theorem,
    and $\bb k$ here takes on the role of $k$ in the theorem.
    Note that all $\H_{r,\tilde A}$ are one-dimensional.
\end{proposition}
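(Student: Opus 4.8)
The plan is to assemble the ingredients established earlier in this appendix, so that the proof reduces to a synthesis rather than a fresh computation. I would start from \cref{prop:general_jump}: since the embedding $\hat E\otimes\hat\id_S$ is an isometry (\cref{prop:embedding}) and $\hat W^\dagger$ is unitary, the composite $\hat V^{\to A}=\hat W^\dagger(\hat E\otimes\hat\id_S)$ is an isometry, and $\hat\pi=\hat W^\dagger(\hat E\hat E^\dagger\otimes\hat\id_S)\hat W$ is exactly the orthogonal projector onto its image $\H^{|A}:=\hat V^{\to A}\H^{|L}$. Hence $\hat V^{\to A}$, regarded as a map onto $\H^{|A}$, is unitary, which is what is needed to transport operators and algebras via $\mathsf V^{\to A}$.

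Next I would feed the Abelian decomposition of the projector, \cref{eq:pi_decomposition_Abelian}, into the identities \cref{eq:app_general_jump_relative,eq:app_general_jump_representation} of \cref{prop:general_jump}. Because $\mathsf V^{\to A}[\hat F]=\hat\pi(\hat\id_{\tilde A}\otimes\hat f_S)\hat\pi$ and $\mathsf V^{\to A}[\hat U^{|L}(g)]=\hat\pi(\hat L_{\tilde A}(g)\otimes\hat\id_S)\hat\pi$, the block structure of $\hat\pi$ immediately produces the stated forms of $\H^{|A}$, of $\hat F^{|A}$, and of $\hat U^{|A}(g)$, that is, \cref{eq:Hilbert_space_A_Abelian,eq:relative_POVM_A_Abelian,eq:global_reorientations_A_Abelian}. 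The one-dimensionality of the factors $\H_{r,\tilde A}$ is inherited from the fact that all irreducible representations of an Abelian $G$ are one-dimensional, as recorded in \cref{eq:Peter_Weyl_Abelian}. The $G$-invariant commutant $\A^{\overline{S:A}|A}$ I would then obtain directly: it consists of the $G$-invariant operators commuting with every operator of the form \cref{eq:relative_POVM_A_Abelian}. Since $\hat f_S$ ranges over all of $\O_S$ and, by \cref{eq:global_reorientations_A_Abelian}, $G$ acts by a scalar within each charge sector $r$, Schur's lemma forces such an operator to be block-diagonal in $r$ and to act trivially on the $S$-factor, yielding \cref{eq:commutant_A_Abelian}.

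The genuinely non-trivial step---and the one I expect to be the main obstacle---is promoting the inclusion \cref{eq:larger_algebra_A_Abelian} to the equality $\A^{S:A|A}=\bigoplus\hat\id_{r,\tilde A}\otimes\O_{\bb k,S}$. Here I would invoke \cref{lem:app_A_Abelian_technical}, whose content is that the simultaneous eigenspaces of the centre $\Z^{S:A|A}$ coincide with the $\bb k$-sectors, with projectors $\hat\pi_{\bb k}=\hat\pi_{\bb\kappa^{-1}(\bb k),\tilde A}\otimes\hat\pi_{\bb k,S}$. Granting this, each $\hat\pi_{\bb k}$ lies in the centre, hence in $\A^{S:A|A}$, so that sandwiching a relational operator between two copies of $\hat\pi_{\bb k}$ gives $\hat\id_{\bb\kappa^{-1}(\bb k),\tilde A}\otimes\hat\pi_{\bb k,S}\hat f_S\hat\pi_{\bb k,S}\in\A^{S:A|A}$; letting $\hat f_S$ vary over all operators on $\H_S$ then fills out the full block $\hat\id_{r,\tilde A}\otimes\O_{\bb k,S}$, closing the inclusion and reproducing \cref{eq:relative_operators_A_Abelian}.

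Finally, comparing the four resulting decompositions with \cref{thm:QRF_POV} identifies $r$ with the representation charge and $\bb k$ (equivalently the centre label $k$, via the bijection $\jmath$ of \cref{lem:app_A_Abelian_technical}) with the label of the centre's eigenspaces, which completes the proof. Everything except the $\A^{S:A|A}$ equality is a direct transcription of the preceding results, which is precisely why that equality has been isolated as a separate lemma; the conceptual weight of the argument therefore rests entirely on \cref{lem:app_A_Abelian_technical}, while the proposition itself is a bookkeeping step collecting the consequences.
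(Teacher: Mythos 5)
Your proposal is correct and follows essentially the same route as the paper: the paper's proposition is explicitly a summary of the appendix's derivation, which proceeds exactly as you outline — unitarity of $\hat V^{\to A}$ onto its image via proposition~\ref{prop:general_jump}, insertion of the Abelian decomposition \cref{eq:pi_decomposition_Abelian} of $\hat\pi$ into \cref{eq:app_general_jump_relative,eq:app_general_jump_representation}, the Schur-lemma argument for the commutant, and the sandwiching argument with the central projectors $\hat\pi_{\bb k}$ supplied by lemma~\ref{lem:app_A_Abelian_technical} to upgrade the inclusion \cref{eq:larger_algebra_A_Abelian} to the equality \cref{eq:relative_operators_A_Abelian}. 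You also correctly locate the conceptual weight of the argument in that lemma, which is precisely why the paper isolates it.
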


\medskip
\paragraph*{Subsystems from the Perspective of $\boldsymbol A$.}
Consider now the case where $S = S_1S_2$ consists of two subsystems
in the perspective of $L$.
We know from proposition~\ref{prop:regular_subsystem_structure}
that if $A$ is regular $S{:}A|A$ will also 
factor into two subsystems~$S_1{:}A|A$ and $S_2{:}A|A$.
However, this does not hold in general for non-ideal QRFs.
In the case of Abelian $G$ with rank-one orientation POVM elements for $A$,
which we are currently discussing, more can be said:
\begin{proposition}{}{algebra_decomposition_S_Abelian}
    Let $G$ be an Abelian group, $A$ a QRF for the group $G$
    with rank-one orientation POVM elements, and $S=S_1S_2$.
    Then, in the notation of proposition~\ref{prop:POV_Abelian},
    $S{:}A|A$ factorizes for a given charge set~$\bb k$, \ie
    \begin{equation*}
        \H_{\bb k,S} = \H_{\bb x_1,S_1} \otimes \H_{\bb x_2,S_2},
        \qquad \O_{\bb k,S} = \O_{\bb x_1,S_1} \otimes \O_{\bb x_2,S_2}
    \end{equation*}
    for some sets $\bb x_1$ and $\bb x_2$ of charges,
    if and only if the set of pairs $(q_1,q_2)
    \in \sigma_{S_1} \times \sigma_{S_2}$ satisfying $q_1 + q_2 \in \bb k$
    is~$\bb x_1 \times \bb x_2$.
    We denote the set of values of $\bb k$ for which this
    occurs by $\Xi \subset \bb\kappa(\sigma_{AS})$.
    It particularly occurs if~$\bb k = \sigma_{S_1S_2}$,
    and then $\bb x_1 = \sigma_{S_1}$, $\bb x_2 = \sigma_{S_2}$.

    \medskip
    Let furthermore $\hat X_1, \hat X_2 \in \A^{S:A|A}$ be of the form
    \begin{align*}
        \hat X_1 &= \bigoplus_{\substack{\bb k\in\bb\kappa(\sigma_{AS})\\
        r\in\bb\kappa^{-1}(\bb k)}}
        \hat\id_{r,\tilde A} \otimes \hat\pi_{\bb k,S}
        \bigl( \hat X^{\bb k}_{S_1} \otimes \hat\id_{S_2} \bigr)
        \hat\pi_{\bb k,S}, \\
        \hat X_2 &= \bigoplus_{\substack{\bb k\in\bb\kappa(\sigma_{AS})\\
        r\in\bb\kappa^{-1}(\bb k)}}
        \hat\id_{r,\tilde A} \otimes \hat\pi_{\bb k,S}
        \bigl( \hat\id_{S_1} \otimes \hat X^{\bb k}_{S_2} \bigr)
        \hat\pi_{\bb k,S},
    \end{align*}
    with $\hat X^{\bb k}_{S_1} \in \O_{S_1}$,
    $\hat X^{\bb k}_{S_2} \in \O_{S_2}$ arbitrary for all $\bb k$.
    Then,
    \begin{equation*}
        \bigl[ \hat X_1, \hat X_2 \bigr]
        = \bigoplus_{\substack{\bb k \in \bb\kappa(\sigma_{AS}) \backslash\Xi\\
        r\in\bb\kappa^{-1}(\bb k)}} \hat\id_{r,\tilde A} \otimes
        \bigoplus_{\substack{q_1+q_2\in\bb k\\q_1'+q_2'\in\bb k}}
        \Bigl( \chi_{\bb k}(q_1'+q_2) - \chi_{\bb k}(q_1+q_2') \Bigr)\,
        \hat\pi_{q_1,S_1} \hat X^{\bb k}_{S_1} \hat\pi_{q_1',S_1}
        \otimes \hat\pi_{q_2,S_2} \hat X^{\bb k}_{S_2} \hat\pi_{q_2',S_2},
    \end{equation*}
    where $\chi_{\bb k}$ is the characteristic function
    \begin{equation*}
        \chi_{\bb k}(q) := \begin{cases}
            1, & \quad \text{if $q \in \bb k$}, \\
            0, & \quad \text{otherwise}.
        \end{cases}
    \end{equation*}
    Particularly, $\hat X_1$ and $\hat X_2$ commute if they are
    restricted to $\bb k \in \Xi$.
\end{proposition}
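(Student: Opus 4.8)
The plan is to handle the two parts of the proposition separately: the factorization criterion is essentially a combinatorial fact about the charge-graded decomposition, while the commutator formula is a direct computation once everything is expanded in charge projectors. Throughout I would work block by block in $\bb k$ (and $r$), since by proposition~\ref{prop:POV_Abelian} both $\A^{S:A|A}$ and the operators $\hat X_1,\hat X_2$ are block-diagonal with the $\tilde A$-factor acting as $\hat\id_{r,\tilde A}$; hence the commutator is block-diagonal too, the inert factor $\hat\id_{r,\tilde A}$ can be carried along unchanged, and I only need to understand what happens on each $\H_{\bb k,S}$.

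First I would record the representation-theoretic input. Because $G$ is Abelian, charge is additive under tensor products (\cref{eq:Peter_Weyl_Abelian}), so the charge-$q$ multiplicity space of $S=S_1S_2$ is $\H_{q,S}=\bigoplus_{q_1+q_2=q}\H_{q_1,S_1}\otimes\H_{q_2,S_2}$, whence $\H_{\bb k,S}=\bigoplus_{q_1+q_2\in\bb k}\H_{q_1,S_1}\otimes\H_{q_2,S_2}$ (the sum over $q_1\in\sigma_{S_1}$, $q_2\in\sigma_{S_2}$), and equivalently $\hat\pi_{\bb k,S}=\sum_{q_1,q_2}\chi_{\bb k}(q_1+q_2)\,\hat\pi_{q_1,S_1}\otimes\hat\pi_{q_2,S_2}$. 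For the first part I then observe that $\H_{\bb k,S}$ is indexed by the set $P_{\bb k}:=\{(q_1,q_2)\in\sigma_{S_1}\times\sigma_{S_2}:q_1+q_2\in\bb k\}$, and since the summands $\H_{q_1,S_1}\otimes\H_{q_2,S_2}$ are nonzero and mutually orthogonal, this graded decomposition is unique. If $P_{\bb k}=\bb x_1\times\bb x_2$ then immediately $\H_{\bb k,S}=\H_{\bb x_1,S_1}\otimes\H_{\bb x_2,S_2}$ and $\O_{\bb k,S}=\O_{\bb x_1,S_1}\otimes\O_{\bb x_2,S_2}$; conversely, equality of the full operator algebras forces equality of their identities, hence of the underlying spaces, and uniqueness of the graded decomposition then forces $P_{\bb k}=\bb x_1\times\bb x_2$. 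The special case $\bb k=\sigma_{S_1S_2}$ follows because $\sigma_{S_1S_2}=\sigma_{S_1}+\sigma_{S_2}$, so every pair lies in $P_{\bb k}$ and $P_{\bb k}=\sigma_{S_1}\times\sigma_{S_2}$, giving $\bb x_1=\sigma_{S_1}$, $\bb x_2=\sigma_{S_2}$.

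For the second part I would compute the commutator on a fixed $\bb k$-block. Writing $\hat X^{\bb k}_{S_1}=\sum_{q_1,q_1'}\hat\pi_{q_1,S_1}\hat X^{\bb k}_{S_1}\hat\pi_{q_1',S_1}$ and inserting $\hat\id_{S_2}=\sum_{q_2}\hat\pi_{q_2,S_2}$, the sandwiched operator becomes $\sum\chi_{\bb k}(q_1+q_2)\chi_{\bb k}(q_1'+q_2)\,\hat\pi_{q_1,S_1}\hat X^{\bb k}_{S_1}\hat\pi_{q_1',S_1}\otimes\hat\pi_{q_2,S_2}$, and analogously for the $S_2$-operator. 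Multiplying the two and using orthogonality $\hat\pi_{q,S_i}\hat\pi_{q',S_i}=\delta_{qq'}\hat\pi_{q,S_i}$ collapses the internal indices; after relabelling so that both products are expressed against the same operator $\hat\pi_{q_1,S_1}\hat X^{\bb k}_{S_1}\hat\pi_{q_1',S_1}\otimes\hat\pi_{q_2,S_2}\hat X^{\bb k}_{S_2}\hat\pi_{q_2',S_2}$, the coefficient of $\hat X_1\hat X_2$ comes out as $\chi_{\bb k}(q_1+q_2)\chi_{\bb k}(q_1'+q_2')\chi_{\bb k}(q_1'+q_2)$ and that of $\hat X_2\hat X_1$ as $\chi_{\bb k}(q_1+q_2)\chi_{\bb k}(q_1'+q_2')\chi_{\bb k}(q_1+q_2')$ (repeated factors simplifying via $\chi_{\bb k}^2=\chi_{\bb k}$). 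The two common leading factors impose the summation constraints $q_1+q_2\in\bb k$ and $q_1'+q_2'\in\bb k$, and subtracting leaves exactly the stated factor $\chi_{\bb k}(q_1'+q_2)-\chi_{\bb k}(q_1+q_2')$; restoring the direct sum over $r$ and $\bb k$ with the inert $\hat\id_{r,\tilde A}$ yields the claimed formula.

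Finally, the restriction to $\bb\kappa(\sigma_{AS})\backslash\Xi$ and the vanishing statement follow from the first part: if $\bb k\in\Xi$ then $P_{\bb k}=\bb x_1\times\bb x_2$, so the constraints $q_1+q_2\in\bb k$ and $q_1'+q_2'\in\bb k$ force $q_1,q_1'\in\bb x_1$ and $q_2,q_2'\in\bb x_2$; then the cross sums $q_1'+q_2$ and $q_1+q_2'$ also lie in $\bb k$, both characteristic functions equal one, and the coefficient vanishes. Hence every $\Xi$-block contributes zero and the sum may be restricted to $\bb\kappa(\sigma_{AS})\backslash\Xi$. I expect the only genuine friction to be the index bookkeeping in the commutator: keeping the incoming and outgoing charges on each factor straight through the orthogonality collapses, and fixing a consistent relabelling so that $\hat X_1\hat X_2$ and $\hat X_2\hat X_1$ are written against the same operator basis before subtracting. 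The conceptual ingredients—charge additivity, uniqueness of the graded decomposition, and $\chi_{\bb k}^2=\chi_{\bb k}$—are all elementary.
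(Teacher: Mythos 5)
Your proposal is correct and follows essentially the same route as the paper's proof: decompose $\H_{\bb k,S}$ into charge sectors $\bigoplus_{q_1+q_2\in\bb k}\H_{q_1,S_1}\otimes\H_{q_2,S_2}$, identify factorization with the solution set being a Cartesian product, and obtain the commutator formula by expanding $\hat\pi_{\bb k,S}$ in charge projectors and collapsing the middle indices via orthogonality. The only cosmetic difference is that you derive the vanishing on $\Xi$-blocks directly from the coefficient $\chi_{\bb k}(q_1'+q_2)-\chi_{\bb k}(q_1+q_2')$, whereas the paper first notes that the factorized projector makes the two sandwiched operators commute; both arguments are valid and interchangeable.
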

\medskip
\begin{proof}
    Let $\bb k \in \bb\kappa(\sigma_{AS}) \subset \sigma_{S_1S_2}$
    be fixed; note that $\sigma_S = \sigma_{S_1S_2}
    = \sigma_{S_1} + \sigma_{S_2}$.
    We have that
    \begin{equation*}
        \H_{\bb k,S} = \hat\pi_{\bb k,S} \H_S
        = \hat\pi_{\bb k,S} \bigl( \hat H_{S_1} \otimes \H_{S_2} \bigr)
        = \bigoplus_{q_1 + q_2 \in \bb k}
        \bigl( \hat\pi_{q_1,S_1} \H_{S_1} \bigr) \otimes
        \bigl( \hat\pi_{q_2,S_2} \H_{S_2} \bigr),
    \end{equation*}
    where the direct sum ranges over all $q_1 \in \sigma_{S_1}$,
    $q_2 \in \sigma_{S_2}$ such that $q_1 + q_2 \in \bb k$.
    This factorizes if and only if the sum can be split into
    two direct sums over $q_1$ and $q_2$ respectively.
    But this is precisely the condition under which the set of solutions
    $(q_1,q_2)$ is a Cartesian product $\bb x_1 \times \bb x_2$
    for some sets of charges $\bb x_1 \subset \sigma_{S_1}$
    and $\bb x_2 \subset \sigma_{S_2}$.
    And since the sums then run over $\bb x_1$ and $\bb x_2$ respectively,
    this implies the form of $\H_{\bb k,S}$ and 
    the form of $\O_{\bb k,S}$ in the proposition also follows.
    Clearly, this is the case if $\bb k = \sigma_{S_1S_2}$,
    since then $q_1$ runs over $\sigma_{S_1}$
    independently of $q_2$ which runs over $\sigma_{S_2}$.

    Let $\hat X_1$ and $\hat X_2$ be as described.
    Then,
    \begin{equation*}
        \bigl[ \hat X_1, \hat X_2 \bigr]
        = \bigoplus_{\substack{\bb k \in \bb\kappa(\sigma_{AS})\\
        r \in \bb\kappa^{-1}(\bb k)}} \hat\id_{r,\tilde A} \otimes
        \bigl[
            \hat\pi_{\bb k,S} (\hat X^{\bb k}_{S_1} \otimes \hat\id_{S_2})
            \hat\pi_{\bb k,S},
            \hat\pi_{\bb k,S} (\hat\id_{S_1} \otimes \hat X^{\bb k}_{S_2})
            \hat\pi_{\bb k,S}
        \bigr].
    \end{equation*}
    According to what we just showed, the commutator vanishes for those
    values $\bb k$ for which the set of solutions to $q_1 + q_2 \in \bb k$
    is a product,
    because in those cases the projector $\hat\pi_{\bb k,S}$ factorizes.
    Otherwise,
    \begin{align*}
        \bigl[
            \hat\pi_{\bb k,S} (\hat X^{\bb k}_{S_1} &\otimes \hat\id_{S_2})
            \hat\pi_{\bb k,S},
            \hat\pi_{\bb k,S} (\hat\id_{S_1},\hat X^{\bb k}_{S_2})
            \hat\pi_{\bb k,S}
        \bigr] \\
        &= \bigoplus_{\substack{q_1+q_2\in\bb k\\q_1'+q_2'\in\bb k\\
        q_1''+q_2''\in\bb k}}
        (\hat\pi_{q_1,S_1} \otimes \hat\pi_{q_2,S_2})
        (\hat X^{\bb k}_{S_1} \otimes \hat\id_{S_2})
        (\hat\pi_{q_1'',S_1} \otimes \hat\pi_{q_2'',S_2})
        (\hat\id_{S_1} \otimes \hat X^{\bb k}_{S_2})
        (\hat\pi_{q_1',S_1} \otimes \hat\pi_{q_2',S_2})
        - (\leftrightarrow)
        \\
        &= \bigoplus_{\substack{q_1+q_2\in\bb k\\q_1'+q_2'\in\bb k}}
        \chi_{\bb k}(q_1' + q_2)\,
        \hat\pi_{q_1,S_1} \hat X^{\bb k}_{S_1} \hat\pi_{q_1',S_1}
        \otimes \hat\pi_{q_2,S_2} \hat X^{\bb k}_{S_2} \hat\pi_{q_2',S_2}
        - (\leftrightarrow),
    \end{align*}
    where ``$(\leftrightarrow)$'' is the second term of the commutator.
    We find that the second term is identical to the first except
    that the characteristic function must be replaced with
    $\chi_{\bb k}(q_1+q_2')$.
\end{proof}

\section[Explicit Example: Qutrit and Qubit as QRFs for $\gpU(1)$]
{Explicit Example: Qutrit and Qubit as QRFs for $\boldsymbol{\gpU(1)}$}
\label{app:qutrit_qubit}

\noindent
The purpose of this appendix is to provide the technical details for example~\ref{exa:qubit_qutrit}.

\medskip
\paragraph*{Setup.}
Let $A$ and $B$ be a qutrit and qubit respectively,
\ie\ $\H_A \iso \bbC^3$ and $\H_B \iso \bbC^2$, that carry angular
momentum $J_A \in \{-1,0,1\}$ and $J_B \in \{-1,1\}$, respectively.
Technically, we introduce orthonormal bases
$\{\ket{-1}_A,\ket{0}_A,\ket{1}_A\} \subset \H_A$
and $\{\ket{-1}_B,\ket{1}_B\} \subset \H_B$,
and define the angular momentum operators
\begin{equation*}
    \hat J_A := -\ketbra{-1}{-1}_A + 0 \ketbra{0}{0}_A + \ketbra{1}{1}_A,
    \qquad \hat J_B := -\ketbra{-1}{-1}_B
    + \ketbra{1}{1}_B.
\end{equation*}
Exponentiating the angular momentum produces representations of the
Abelian group $G = \gpU(1)$:
\begin{equation*}
    \hat U_A(\theta) := \ex^{-\i \hat J_A \theta},
    \qquad \hat U_B(\theta) := \ex^{-\i \hat J_B \theta}
\end{equation*}
for $\theta \in \gpU(1) = (\bbR/2\pi,+)$.
We fix the Haar measure for $\gpU(1)$ to be the Lebesgue measure
$\der\theta$ on $\bbR/2\pi$.
Hence, $|\gpU(1)| = \int_{\gpU(1)} \der g
= \int_0^{2\pi}\der\theta = 2\pi$.

We then define the states
\begin{align}
    \ket{\theta}_A &:= \hat U_A(\theta)\,
    \frac{1}{\sqrt{3}}\bigl( \ket{-1}_A + \ket{0}_A + \ket{1}_A \bigr)
    = \frac{1}{\sqrt 3}\bigl( \ex^{\i\theta} \ket{-1}_A
    + \ket{0}_A + \ex^{-\i\theta} \ket{1}_A \bigr), \label{eq:app_qutrit_qubit_orientation_1} \\
    \ket{\theta}_B &:= \hat U_B(\theta)\,
    \frac{1}{\sqrt{2}} \bigl( \ket{-1}_B + \ket{1}_B \bigr)
    = \frac{1}{\sqrt 2} \bigl( \ex^{\i\theta} \ket{-1}_B
    + \ex^{-\i\theta} \ket{1}_B \bigr).
    \label{eq:app_qutrit_qubit_orientation_2}
\end{align}
They satisfy
\begin{equation*}
    \int_0^{2\pi}\der\theta\, \ketbra{\theta}{\theta}_A
    = \frac{2\pi}{3}\, \hat\id_A,
    \qquad \int_0^{2\pi}\der\theta\, \ketbra{\theta}{\theta}_B
    = \pi\,\hat\id_B,
\end{equation*}
and by definition $\hat U_A(\theta')\ket{\theta}_A = \sket{\theta'+\theta}_A$,
$\hat U_B(\theta')\ket{\theta}_B = \sket{\theta'+\theta}_B$
for all $\theta,\theta' \in \gpU(1)$.
Thus, by setting
\begin{equation*}
    \hat\gamma_A(\theta) := 
    \frac{3}{2\pi} \ketbra{\theta}{\theta}_A,
    \qquad \hat\gamma_B(\theta) :=
    \frac{1}{\pi} \ketbra{\theta}{\theta}_B,
\end{equation*}
$A$ and $B$ become rank-one QRFs for the Abelian symmetry group $\gpU(1)$.
Hence, proposition~\ref{prop:POV_Abelian} applies.
Note that
\begin{equation*}
    \braket{\theta'}{\theta}_A = \frac{1}{3} + \frac{2}{3}\cos(\theta - \theta'),
    \qquad
    \braket{\theta'}{\theta}_B = \cos(\theta - \theta'),
\end{equation*}
and so both $A$ and $B$ are non-ideal QRFs.

\medskip
\paragraph*{Perspective of $\boldsymbol A$.}
The irreducible representations of $\gpU(1)$ are labelled by
the integer angular momentum values.
Specifically, we have $\sigma_A = \{-1,0,1\}$,
$\sigma_B = \{-1,1\}$ and hence
$\sigma_{AB} = \{-2,-1,0,1,2\}$.
We now apply proposition~\ref{prop:POV_Abelian}
with $B$ in the role of $S$.
From \cref{eq:kappa}, we can deduce that the charge diagram
is given by the left diagram in example~\ref{exa:qubit_qutrit}.
There are three $\bb k$-sectors:
$\bb k_\pm := \{\pm 1\}$ and $\bb k_0 := \{-1,1\}$.
They are in one-to-one correspondence with the $k$-sectors
of the centre of $\A^{B:A|L} \iso \A^{B:A|A}$.
The angular momentum charges contained in those sectors are
$\bb\kappa^{-1}(\bb k_{\pm}) = \{\pm 1,\pm 2\}$
and $\bb\kappa^{-1}(\bb k_0) = \{0\}$.

Consider a generic relative operator:
\begin{equation}
    \hat F^{|L} = \frac{3}{2\pi}
    \int_0^{2\pi}\der\theta\, \ketbra{\theta}{\theta}_A \otimes
    \mathsf U_B(\theta)[\hat f_B],
    \qquad \hat f_B = \sum_{\pm,\pm'} f_{\pm\pm'} \ketbra{\pm 1}{\pm'1}_B,
    \qquad f_{\pm\pm'} :=
    \sbra{\pm 1}_B \hat f_B \sket{\pm'1}_B.
    \label{eq:app_qutrit_qubit_tmp_0}
\end{equation}
Without the need of any computation,
we can read off proposition~\ref{prop:POV_Abelian} that
\begin{align}
    \hat F^{|A} = \mathsf V^{\to A}[\hat F^{|L}]
    = \bigl( \ketbra{-1}{-1} + \ketbra{-2}{-2} \bigr)_{\tilde A}
    &\otimes f_{--} \ketbra{-1}{-1}_B \notag \\
    + \bigl( \ketbra{1}{1} + \ketbra{2}{2} \bigr)_{\tilde A}
    &\otimes f_{++} \ketbra{1}{1}_B \notag \\
    + \ketbra{0}{0}_{\tilde A}
    &\otimes \bigl( f_{--}\ketbra{-1}{-1} + f_{-+} \ketbra{-1}{1}
    + f_{+-} \ketbra{1}{-1} + f_{++}\ketbra{1}{1} \bigr)_B
    \label{eq:app_qutrit_qubit_tmp_1}
\end{align}
where $\tilde A$ is the regular QRF into which $A$ is embedded.
The first two lines are the sectors $\bb k_\mp$
and the third line is the sector $\bb k_0$.
It also follows from proposition~\ref{prop:POV_Abelian}
that the most general element of $\A^{B:A|A}$ is of the form
\begin{align}
    \hat A^{|A} =
    \bigl( \ketbra{-1}{-1} + \ketbra{-2}{-2} \bigr)_{\tilde A}
    &\otimes \mu \ketbra{-1}{-1}_B \notag \\
    + \bigl( \ketbra{1}{1} + \ketbra{2}{2} \bigr)_{\tilde A}
    & \otimes \nu\ketbra{1}{1}_B \notag \\
    + \ketbra{0}{0}_{\tilde A} 
    &\otimes \bigl( \alpha\ketbra{-1}{-1} + \beta\ketbra{-1}{1}
    + \gamma\ketbra{1}{-1} + \delta\ketbra{1}{1} \bigr)_B
    \label{eq:app_qutrit_qubit_tmp_2}
\end{align}
with $\alpha,\beta,\gamma,\delta,\mu,\nu \in \bbC$.
Clearly, not all elements in $\A^{B:A|A}$ are relative operators,
but relative operators generate $\A^{B:A|A}$.
Note also that $\dim\A^{B:A|A} = 6 > 4 = \dim\O_B \geq \dim\mathsf R_A[\O_B]$.
So when observed relative to the QRF $A$,
$B$ is in some sense ``back-reacted'' onto by the QRF
to reveal further degrees of freedom.

Let us explicitly see how relational operators generate $\A^{B:A|A}$.
For this, we consider the relational operators $\hat T^{|A}$,
$\hat M^{|A}$, $\hat N^{|A}$ and $\hat X^{|A}$
obtained when inserting the operators
\begin{equation*}
    \hat t = \alpha\ketbra{-1}{-1} + \beta\ketbra{-1}{1}
    + \gamma\ketbra{1}{-1} + \delta\ketbra{1}{1}
    \qquad \hat m = \mu\ketbra{-1}{-1},
    \qquad \hat n = \nu\ketbra{1}{1},
    \qquad \hat x = \ketbra{-1}{1} + \ketbra{1}{-1}
\end{equation*}
in place of $\hat f_B$;
$\hat x$ is the Pauli-$x$ operator.
They are each of the form given by equation \eqref{eq:app_qutrit_qubit_tmp_1}.
Since $\hat x$ has vanishing diagonal elements,
$\hat X^{|A}$ has support only on the $\bb k_0$-sector.
Furthermore,
$(\hat X^{|A})^2 = \ketbra{0}{0}_{\tilde A} \otimes
\bigl( \ketbra{-1}{-1} + \ketbra{1}{1} \bigr)_B$
is the orthogonal projector onto the $\bb k_0$-sector.
Now it is not hard to see that
\begin{equation*}
    \hat A^{|A} = \bigl( \hat\id - (\hat X^{|A})^2 \bigr) \cdot
    \bigl( \hat M^{|A} + \hat N^{|A} \bigr)
    + (\hat X^{|A})^2\, \hat T^{|A}.
\end{equation*}
Note that $\hat\id$ is also a relative operator,
obtained from $\hat f_B = \hat\id_B$.
Thus, indeed, every element of $\A^{B:A|A}$ can be written as a polynomial
of relative operators.

Let us now turn to the specific example of back-reaction from example~\ref{exa:qubit_qutrit}.
There, we studied the state 
\begin{equation}
    \hat\rho^{|L} = \ketbra{\psi_\varphi}{\psi_\varphi}_A \otimes \hat\sigma_B,
    \qquad \ket{\psi_\varphi} = \frac{1}{\sqrt 2}(\ket{\theta_1} + e^{i \varphi} \ket{\theta_2}),
\end{equation}
where specifically, $\theta_1 = 0$ and $\theta_2 = 2\pi/3$
and $\hat\sigma_B = q\ketbra{-1}{-1}_B + (1-q)\ketbra{1}{1}_B$,
$0 \leq q \leq 1$, is a general $G$-invariant state on $B$.

We now analyse this state from the perspective of $A$.
Since we are only going to consider $G$-invariant degrees of freedom in our state,
we can, without loss of generality, transform the $G$-twirled version of $\hat\rho^{|L}$
into the perspective of $A$ instead of $\hat\rho^{|L}$.
Applying the $G$-twirl gives
\begin{equation*}
    \mathsf G_{AB}[\hat\rho^{|L}] = \frac{1}{2\pi} \int_0^{2\pi} \der\theta\,
    \mathsf U_A(\theta)\bigl[ \sketbra{\psi_\varphi}{\psi_\varphi}_A \bigr]
    \otimes \hat\sigma_B
    = \frac{1}{3} \bigl( p_+ \ketbra{-1}{-1}
    + p_0 \ketbra{0}{0}
    + p_- \ketbra{1}{1} \bigr)_A \otimes \hat\sigma_B,
\end{equation*}
where
\begin{equation*}
    p_\pm = 1 + \cos(\theta_2 - \theta_1 \pm \varphi),
    \qquad p_0 = 1 + \cos\varphi.
\end{equation*}
Here we have used statement 5 in proposition~\ref{prop:G_twirl}
(for Abelian groups, $\mathsf G = \bigoplus_r \Pi_r$,
where $\Pi_r$ is the orthogonal projection superoperator
onto the charge-$r$ subspace)
and \cref{eq:app_qutrit_qubit_orientation_1}.
We could now explicitly apply the QRF jump $\mathsf W^\dagger \circ \mathsf E$
to compute the state $\mathsf G_{\tilde A}[\hat\rho^{|A}]$ from the perspective of $A$ directly.
However, it will be easier to do so in each $r$-sector separately,
given that we have already decomposed our state into sectors of total charge.
The lowest total charge is $r=-2$, stemming from $-1$ in $A$ and $-1$ in $B$.
With the expressions for $\hat W^\dagger$ and $\hat E$
in \cref{eq:regular_inverse_jump_core} and \cref{eq:embedding} respectively,
we compute
\begin{equation*}
    \mathsf W^\dagger \circ \mathsf E \Biggl[
        \frac{1}{3} p_+ \ketbra{-1}{-1}_A \otimes q \ketbra{-1}{-1}_B
    \Biggr]
    = \frac{3}{2\pi} \int_0^{2\pi} \der\theta\,\der\theta' \frac{\ex^{2 \i(\theta-\theta')}}{3}
    \sketbra{\theta}{\theta'}_{\tilde A} \otimes
    \frac{1}{3} p_+ q \ketbra{-1}{-1}_B,
\end{equation*}
where we have used $\braket{\theta}{-1}_A = \ex^{-\i\theta} / \sqrt 3$.
But now we do not have to compute the integrals:
we already know that the first tensor factor must be proportional to $\ketbra{-2}{-2}_{\tilde A}$, since we are in the $r = 2$ charge sector;
and from the unitarity of the QRF jump and the normalization of $\mathsf G_{AB}[\hat\rho^{|L}]$,
we know that the proportionality factor must be one, \ie, 
\begin{equation*}
    \frac{3}{2\pi} \int_0^{2\pi} \der\theta\,\der\theta' \frac{\ex^{2 \i(\theta-\theta')}}{3}
    \sketbra{\theta}{\theta'}_{\tilde A}
    = \sketbra{-2}{-2}_{\tilde A}.
\end{equation*}
Similarly, we see that for $r=-1$, the term in $\mathsf G_{\tilde A}[\hat\rho^{|A}]$ must be $\ketbra{-1}{-1}_A \otimes \frac{1}{3} p_0 q \ketbra{-1}{-1}_B$,
and so on.
Overall,
\begin{align}
    \mathsf G_{\tilde A}[\hat\rho^{|A}] =
    \frac{1}{3} \bigl( p_+ \ketbra{-2}{-2} + p_0 \ketbra{-1}{-1} \bigr)_{\tilde A}
    &\otimes q \ketbra{-1}{-1}_B \notag\\
    + \frac{1}{3} \ketbra{0}{0}_{\tilde A} &\otimes
    \bigl( p_- q \ketbra{-1}{-1}_B + p_+ (1-q) \ketbra{1}{1}_B \bigr) \notag\\
    + \frac{1}{3} \bigl( p_- \ketbra{2}{2} + p_0 \ketbra{1}{1} \bigr)_{\tilde A}
    &\otimes (1-q) \ketbra{1}{1}_B.
\end{align}
Clearly, $\mathsf G_{\tilde A}[\hat\rho^{|A}] \notin \A^{B:A|A}$,
as it is not an expression of the form of \eqref{eq:app_qutrit_qubit_tmp_2},
meaning that $\mathsf G_{\tilde A}[\hat\rho^{|A}]$ is non-trivial in the extra particle
$\overline{S{:}A}|A$.
To isolate the information in $\mathsf G_{\tilde A}[\hat\rho^{|A}]$ accessible using
$\A^{B:A|A}$, we consider the reduced density operator $\hat\rho^{B:A|A} \in \A^{B:A|A}$
defined implicitly as
\begin{equation}
    \tr\bigl[ \hat\rho^{B:A|A} \hat X^{|A} \bigr]
    = \tr\bigl[ \hat\rho^{|A} \hat X^{|A} \bigr]
    \quad \forall\, \hat X^{|A} \in \A^{B:A|A}.
    \label{eq:app_qutrit_qubit_trace_condition}
\end{equation}
Since $\A^{B:A|A}$ is $G$-invariant, also $\tr\bigl[ \hat\rho^{|A} \hat X^{|A} \bigr]
= \tr\bigl[ \mathsf G_{\tilde A}[\hat\rho^{|A}] \hat X^{|A} \bigr]$.
For a general operator $\hat X^{|A} \in \A^{B:A|A}$ of the form given in
\cref{eq:app_qutrit_qubit_tmp_2}, we get
\begin{equation*}
    \tr\bigl[ \mathsf G_{\tilde A}[\hat\rho^{|A}] \hat X^{|A} \bigr]
    = \frac{1}{3} \mu (p_+ + p_0) q
    + \frac{1}{3} \nu (p_- + p_0) (1-q)
    + \frac{1}{3} \alpha p_- q + \frac{1}{3} \delta p_+ (1-q).
\end{equation*}
Generally, a state $\hat\rho^{B:A|A} \in \A^{B:A|A}$ also has the form given in \cref{eq:app_qutrit_qubit_tmp_2},
\begin{align*}
    \hat\rho^{B:A|A} =
    \bigl( \ketbra{-1}{-1} + \ketbra{-2}{-2} \bigr)_{\tilde A}
    &\otimes \tilde\mu \ketbra{-1}{-1}_B \notag \\
    + \bigl( \ketbra{1}{1} + \ketbra{2}{2} \bigr)_{\tilde A}
    & \otimes \tilde\nu\ketbra{1}{1}_B \notag \\
    + \ketbra{0}{0}_{\tilde A} 
    &\otimes \bigl( \tilde\alpha\ketbra{-1}{-1} + \tilde\beta\ketbra{-1}{1}
    + \tilde\gamma\ketbra{1}{-1} + \tilde\delta\ketbra{1}{1} \bigr)_B,
\end{align*}
where $\tilde\alpha, \tilde\beta, \tilde\gamma, \tilde\delta, \tilde\mu, \tilde\nu \in \bbC$
such that furthermore $(\hat\rho^{B:A|A})^\dagger = \hat\rho^{B:A|A}$, $\hat\rho^{B:A|A} > 0$
and $\tr\bigl[\hat\rho^{B:A|A}\bigr] = 1$,
\ie\ such that $\hat\rho^{B{:}A|A}$ is a state.
This gives
\begin{equation*}
    \tr\bigl[ \hat\rho^{B:A|A} \hat X^{|A} \bigr]
    = \tilde\mu \mu + \tilde\nu \nu
    + \tilde\alpha \alpha + \tilde\gamma\beta + \tilde\beta\gamma + \tilde\delta\delta,
\end{equation*}
leading us to
\begin{align}
    \hat\rho^{B:A|A} =
    \frac{1}{3} \bigl( \ketbra{-2}{-2} + \ketbra{-1}{-1} \bigr)_{\tilde A}
    &\otimes (p_+ + p_0) q \ketbra{-1}{-1}_B \notag\\
    + \smash{\frac{1}{3}} \ketbra{0}{0}_{\tilde A} &\otimes
    \bigl( p_- q \ketbra{-1}{-1}_B + p_+ (1-q) \ketbra{1}{1}_B \bigr) \notag\\
    + \frac{1}{3} \bigl( \ketbra{2}{2} + \ketbra{1}{1} \bigr)_{\tilde A}
    &\otimes (p_- + p_0) (1-q) \ketbra{1}{1}_B.
\end{align}
It is not hard to check that this is a state.
Since it satisfies the trace condition in \cref{eq:app_qutrit_qubit_trace_condition}
and is a state contained in $\A^{B:A|A}$, $\hat\rho^{B:A|A}$ is the sought-after density operator.
Note that finding the reduced density operator on $B{:}A|A$
amounts to performing an average of $p_+$ and $p_0$ and of $p_-$ and $p_0$.

From \cref{eq:app_qutrit_qubit_tmp_1} it follows that for a general relational operator
$\hat F^{|L} = \mathsf R_A[\hat f_B]$,
\begin{equation*}
    \tr\bigl[ \hat\rho^{B:A|A} \hat F^{|A} \bigr]
    = \frac{1}{3} \bigl(
        f_{--} (p_+ + p_0 + p_-) q
        + f_{++} (p_- + p_0 + p_+) (1-q)
    \bigr),
\end{equation*}
where $f_{\pm,\pm'}$ are defined as in \cref{eq:app_qutrit_qubit_tmp_0}.
If we insert the specific values $\theta_1 = 0$ and $\theta_2 = 2\pi/3$, \ie\ $\Delta\theta = 2\pi/3$, then
\begin{equation*}
    p_+ + p_- + p_0 = 3 + \cos\varphi + \cos(2\pi/3 + \varphi) + \cos(2\pi/3 - \varphi) = 3,
\end{equation*}
so that the trace above becomes independent of $\varphi$.
Relational operators therefore cannot be used to measure the phase.
However, consider the relational operator $\hat Q^{|A}$ obtained by inserting
$\hat f_B = \ketbra{+}{+}_B$, where $\ket{+} = \smash{\frac{1}{\sqrt{2}} (\ket{+1} + \ket{-1})}$,
in equation~\eqref{eq:app_qutrit_qubit_tmp_0} and then moving to the perspective of $A$
as in equation~\eqref{eq:app_qutrit_qubit_tmp_1}.
It is easy to see that $(\hat Q^{|A})^2$ is not a relational operator and its
expectation value depends on the phase:
\begin{equation*}
    \tr\bigl[ \hat\rho^{B:A|A} (\hat Q^{|A})^2 \bigr]
    = \frac{1}{3} - \frac{1}{24} \cos\varphi + \frac{\sqrt 3}{24} \left( 2q - 1 \right) \sin\varphi.
\end{equation*}
From the perspective of $L$, we can interpret this result as follows.
The operator $\hat Q^{|L} = (\mathsf V^{\to A})^{-1}[Q^{|A}] = \mathsf R_A[\ketbra{+}{+}_B]$
is the POVM element resulting from relationalizing the projector $\sketbra{+}{+}_B$ with respect to the QRF $A$.
It can be seen as the POVM element arising from the quantum operation
defined by the Kraus operator $\hat K^{|L} = (\hat Q^{|L})^{1/2}$.
Note that $\hat K^{|L}$ is not a relational operator, but it belongs to $\A^{B{:}A|L}$.
Likewise, we may interpret $(\hat Q^{|L})^2$ as the POVM element corresponding
to the quantum operation defined by the Kraus operator $(\hat K^{|L})^2$
realised by repeating the afore-mentioned measurement on $B$ relative to $A$.
As $A$ is non-ideal, repeating such a measurement entails a back-reaction of $B$ onto $A$,
implying that information about $\varphi$,
originally stored in $A$ from the perspective of $L$, is accessible in the subsystem $B{:}A|L$.
\medskip
\paragraph*{Perspective of $\boldsymbol B$.}
Similarly, the perspective of $B$ is described by the right diagram
in example~\ref{exa:qubit_qutrit}.
This time, there are four $\bb k$-sectors:
$\bb k_{\pm} = \{\pm 2\}$, $\bb k_\times = \{0\}$, $\bb k_0 = \{-1,1\}$.
The angular momentum charges contained in those sectors are
$\bb\kappa^{-1}(\bb k_{\pm}) = \{\pm 2\}$,
$\bb\kappa^{-1}(\bb k_\times) = \{-1,1\}$,
$\bb\kappa^{-1}(\bb k_0) = \{0\}$.
Note that $\bb k_\times$ consists of two disconnected parts in the figure,
because $\bb\kappa^{-1}(\{0\}) = \{-1,1\}$, which does not contain $r = 0$.

Let $\hat f_A$ be a general operator on $A$ with matrix elements
$f_{ij} = \bra{i}_A \hat f_A \ket{j}_A$,
where $i,j \in \{-,+,0\}$ and $\ket{\pm} := \ket{\pm 1}$.
The corresponding relative operator $\hat F^{|L}$ becomes,
according to proposition~\ref{prop:POV_Abelian},
\begin{align*}
    \hat F^{|B} = \mathsf V^{\to B}[\hat F^{|L}]
    = \ketbra{-2}{-2}_{\tilde B}
    &\otimes f_{--} \ketbra{-1}{-1}_A \\
    + \ketbra{2}{2}_{\tilde B}
    &\otimes f_{++} \ketbra{1}{1}_A \\
    + \bigl(\ketbra{-1}{-1} + \ketbra{1}{1} \bigr)_{\tilde B}
    &\otimes f_{00} \ketbra{0}{0}_A \\
    + \ketbra{0}{0}_{\tilde B}
    &\otimes \bigl( f_{--} \ketbra{-1}{-1} + f_{-+} \ketbra{-1}{1}
    + f_{+-} \ketbra{1}{-1} + f_{++} \ketbra{1}{1} \bigr)_A.
\end{align*}
The first two rows are the sector $\bb k_\mp$,
the third row is $\bb k_\times$
and the last row is $\bb k_0$.
The most general element of $\A^{A:B|B}$ is obtained by
replacing all individual occurrences of $f_{ij}$ with
arbitrary complex numbers.
This way, we see that $\dim\A^{A:B|B} = 7 < 9 = \dim\O_A$.
So observing $A$ relative to the ``smaller'' QRF,
$B$, we effectively lose  degrees of freedom;
the ``back-reaction'' of $B$ onto $A$ does not reveal
enough further degrees of freedom to make up in numbers for the lost ones.

\section{QRFs for the Galilei Group in One Dimension}
\label{app:Galilei}

\noindent
\paragraph*{Galilei Group in One Dimension and its Central Extension.}
The Galilei group $\group{Gal} \iso (\bbR^2,+)$ in one dimension
has the group multiplication $(a,v) \cdot (a',v') = (a+a',v+v')$,
corresponding to addition of translations ($a$, $a'$)
and boosts ($v$, $v'$).
The mass-$m$ representation $\hat U$~\cite{Bargmann1954},
defined in example~\ref{exa:Galilei} and acting on a quantum particle
with Hilbert space $L^2(\bbR)$, is projective:
\begin{equation*}
    \hat U(a',v') \hat U(a,v)
    = \exp\left( \i\frac{m}{2} (av' - a'v)\right) \hat U(a+a',v+v').
\end{equation*}
To make it unitary, one introduces a new generator $m\,\hat\id$
with corresponding parameter $\theta \in \bbR$,
leading to a new class of Galilei-transformations of the particle
besides translations and boosts, which we here call
\emph{$\theta$-transformations}:
\begin{equation*}
    \hat U(\theta,a,v) := \ex^{\i m \theta} \hat U(a,v).
\end{equation*}
Note that the new generator commutes with the other two generators,
it lies in the \emph{centre} of the \emph{universal enveloping algebra}
of the Lie algebra of $\group{Gal}$.
Choosing the new generator to be in the centre makes sense because
its purpose is to cancel a phase in the original representation.
Indeed, we see that for $\hat U(\theta,a,v)$ to be non-projective,
we must have the group multiplication
\begin{equation*}
    (\theta',a',v') \cdot (\theta,a,v)
    = \Bigl(\theta'+\theta+\frac{av'-a'v}{2},a+a',v+v'\Bigr)
\end{equation*}
of a new group consisting of three real parameters.
This new group is the \emph{central extension} $\group{CGal}$ of $\group{Gal}$,
and $\hat U(\theta,a,v)$ is a unitary representation of $\group{CGal}$~%
\cite{LevyLeblond1963,Giulini1996}.
Note that because $\hat U(a,v)$ is irreducible, so is $\hat U(\theta,a,v)$.

Clearly, the transformation $(\theta,a,v)$
is just a regular Galilei transformation $(a,v)$
with some additional phase dictated by $\theta$.
Thus, when building QRFs for $\group{Gal}$,
we may without loss of generality consider $\group{CGal}$ instead.
$\theta$-translations are however not unphysical:
intuitively, $\theta$ is ``conjugate'' to the mass $m$ of the particle
and thus related to shifts in proper time.
This can be made precise by considering Galilei transformations
as the classical limit of Poincaré transformations, of which
$\theta$-transformations then arise as a remnant~\cite{Greenberger2001}.
But since they act on quantum states with fixed $m$ only as a phase
they are not commonly encountered.

\medskip
\paragraph*{Particles as QRFs for Gal.}
A quantum particle with mass-$m$ representation $\hat U(\theta,a,v)$
obviously cannot be an ideal QRF for CGal, since $\theta$-transformations
merely act as a phase.
Furthermore, it is also not an ideal QRF for Gal, \ie,
when $\theta$-transformations are not considered. 

To see this, one can write the most general state $\hat\rho$
of the particle and the most general orientation POVM element $\hat\gamma$
in the improper position basis,
and notice that in order for $\hat\rho$ to have a definite orientation
under translations alone, it must be a position eigenstate.
Similarly, to have definite orientation under boosts,
it must be a velocity eigenstate.
Due to Heisenberg uncertainty $\Delta x\Delta\dot{x} \geq 1/(2m)$, 
this can never be the case simultaneously.
Hence, any QRF for Gal built from a single quantum particle
is always not ideal.

Notice that as $m\to\infty$, the uncertainty relation between
position and velocity disappears,
and we expect that ideal QRFs are possible in this limit.
This only works because the Galilei boosts are changes in velocity
and not momentum:
while it is possible to have states of the particle with
arbitrarily sharp position and velocity at the expense of large $m$,
one can never have states with arbitrarily sharp position and momentum.

\medskip
\paragraph*{Rank-One Particle QRFs.}
To build a QRF for $\group{CGal}$ (and also for $\group{Gal}$)
using $L^2(\bbR)$, the mass-$m$ representation $\hat U(\theta,a,v)$,
and rank-one POVM elements,
we choose a state $\ket{0,0,0} = \ket{e}$
and set $\ket{\theta,a,v} := \hat U(\theta,a,v) \ket{0,0,0}$.
Because the representation is irreducible,
it follows from Schur's lemma that $\int_{\group{CGal}} \der g\,
\ketbra{\theta,a,v}{\theta,a,v} \propto \hat\id$
regardless of which state $\ket{e}$ we choose;
all choices eventually lead to a QRF.
A short computation shows, using the Haar measure
$\der g := \der\theta\,\der a\,\der v$, that
\begin{multline*}
    \int_{\group{CGal}} \der g\, \sketbra{\theta,a,v}{\theta,a,v}
    = \int\der\theta\,\der a\,\der v\,\der x'\,\der x\,
    \sbraket{x'}{0,0,0} \sbraket{0,0,0}{x}
    \ex^{\i mv(x' - x)} \ex^{-\i(a + vt) \hat p} \sketbra{x'}{x}
    \ex^{\i(a + vt) \hat p} \\
    = \int\der\theta\,\der a\,\der v\,\der x'\,\der x\,
    \sbraket{x'}{0,0,0} \sbraket{0,0,0}{x}
    \ex^{\i mv(x'-x)}
    \ketbra{x'+a}{x+a}
    = \frac{2\pi}{m}\int\der\theta\,\der x\,
    \bigl|\sbraket{x}{0,0,0}\bigr|^2\, \hat\id \\
    = \frac{2\pi \braket{e}{e} \int\der\theta}{m} \hat\id,
\end{multline*}
where in the second step we have first substituted
$a \rightsquigarrow a - vt$ before applying the shifts.
Thus, when setting $\hat\gamma_A(\theta,a,v) := \frac{1}{c}
\sketbra{\theta,a,v}{\theta,a,v}$,
the factor $c$ is infinite.
But since this is due to the central extension of the Galilei group,
which we performed not out of physical necessity
but for mathematical convenience,
we will not worry about it and treat it as an infinite symbol.
In the language of~\cite{Perelomov1986},
the orientation POVM is a \emph{coherent state system}.

\medskip
\paragraph*{Squeezed Coherent States.}
A useful choice~\cite{Garmier2023} for the state $\ket{0,0,0}$
is a \emph{squeezed coherent state}:
\begin{equation}
    \sbraket{x}{0,0,0} := C\, \exp\left( -m\frac{x^2}{2\omega^2} \right),
    \label{eq:coherent_seed_state}
\end{equation}
where $\omega > 0$ is the squeezing parameter
and $C = m^{1/4}\omega^{-1/2}\pi^{-1/4}$.
One easily shows that
\begin{equation*}
    \langle \hat x \rangle_{0,0,0}
    = \langle \hat u \rangle_{0,0,0} = 0,
    \qquad \langle \hat{\Delta x}{}^2 \rangle_{0,0,0} = \frac{\omega^2}{2m},
    \qquad \langle \hat{\Delta u}{}^2 \rangle_{0,0,0} = \frac{1}{2m\omega^2},
\end{equation*}
where $\hat u = \hat p / m$ is the velocity operator.
The Heisenberg uncertainty relation is saturated.
In fact, all states which saturate the Heisenberg uncertainty relation
are phase-space-shifted versions of \cref{eq:coherent_seed_state},
and taking all possible values $\omega > 0$ of the squeezing parameter~%
\cite{Ballentine2014}.
All other orientation states are then generated from $\ket{0,0,0}$
through application of $\hat U(\theta,a,v)$,
and one finds (we set $t = 0$)
\begin{equation*}
    \sbraket{x}{\theta,a,v} =
    C \exp\left(
        -m\frac{(x - a)^2}{2\omega^2}
        + \i m \left( vx + \theta - \frac{va}{2} \right)
    \right),
    \qquad \langle \hat x \rangle_{\theta,a,v} = a,
    \qquad \langle \hat u \rangle_{\theta,a,v} = v,
\end{equation*}
while the variances remain unchanged.
The overlaps are
\begin{equation*}
    \sbraket{\theta',a',v'}{\theta,a,v}
    = \exp\left(
        -m \frac{(a - a')^2}{4\omega^2} - m\omega^2\frac{(v - v')^2}{4}
    \right)
    \exp\left(
        \i m \frac{va' - v'a}{2} + \i m (\theta - \theta')
    \right).
\end{equation*}
They approach a $\delta$-distribution in the shift and boost parameters
when taking the limit $m \to \infty$,
yielding an ideal QRF for Gal, as we expected above.

\medskip
\paragraph*{Details for Figure~\ref{fig:Galilei_fuzziness}.}
Finally, we fill in the details leading up the example
shown in figure~\ref{fig:Galilei_fuzziness}.
We use the rank-one orientation POVM elements described above.
From the perspective of the laboratory, we consider a state of the form
$\hat\rho^{|L} = \hat\sigma_A \otimes \hat\varsigma_S$,
where $\hat\varsigma_S$ is arbitrary, and $\hat\sigma_A = \sketbra{\phi}{\phi}_A$ is a superposition
of POVM states:
\begin{equation}
    \hat\sigma_A = \sketbra{\phi}{\phi}_A,
    \qquad
    \ket{\phi}_A = \frac{1}{\sqrt{2}} \bigl( \sket{\theta_1,a_0,v_0}
    + \sket{\theta_2,a_2,v_2} \bigr).
\end{equation}
According to \cref{eq:fuzziness}, the probability
\begin{equation}
    p(\theta,a,v) = \tr\left( \frac{1}{c}\sketbra{\theta,a,v}{\theta,a,v}
    \hat\sigma_A \right)
    = \frac{1}{c} \sbra{\theta,a,v} \hat\sigma_A \sket{\theta,a,v}
\end{equation}
of measuring the orientation $(\theta,a,v)$
captures the more mixed nature of the effective state when observing $S$ relative to $A$
compared to when observing it relative to $L$.
The effective state is
\begin{equation}
    \hat\varsigma'_S = \int\der a\,\der v\, p(a,v)\,
    \mathsf U_S(a,v)[\hat\varsigma_S],
    \qquad p(a,v) = \int\der\theta\, p(\theta,a,v)
    = \frac{2\pi}{m} \bigl| \braket{0,a,v}{\phi} \bigr|^2.
\end{equation}
Note that the integral $\int\der\theta$ in the definition of
$p(a,v)$ nicely cancels the infinity present in $c$, as one would expect.
We compute
\begin{multline*}
    p(a,v) = \frac{\pi}{m} \exp\left( -m\frac{(a-a_1)^2}{2\omega^2}
    - m\omega^2 \frac{(v-v_1)^2}{2}\right)
    + \frac{\pi}{m} \exp\left( -m\frac{(a-a_2)^2}{2\omega^2}
    - m\omega^2 \frac{(v-v_2)^2}{2}\right) \\
    + \frac{2\pi}{m} \exp\left( -m\frac{(a-a_1)^2 + (a-a_2)^2}{4\omega^2}
    - m\omega^2 \frac{(v-v_1)^2 + (v-v_2)^2}{4} \right)
    \cos\left( m\frac{(v_1-v_2)a - v(a_1-a_2)}{2}
    + m(\theta_1 - \theta_2) \right).
\end{multline*}
The probability density consists of two peaks centred around
$(a_1,v_1)$ and $(a_2,v_2)$, with finite cross-sections
due to Heisenberg uncertainty, as well as an interference term due
to relative phases between the two POVM states which are superposed
in the QRF state $\ket{\phi}_A$.
Notice how only the difference $\theta_1 - \theta_2$ enters into $p(a,v)$,
because only this difference has a physical influence on the QRF state
by means of a relative phase.
Finally, note that the width of the peaks vanishes in the limit
$m\to\infty$, and the interference disappears.
This is of course what we would expect from an ideal QRF.

\section{Comparisons with Other Frameworks}
\label{app:comparisons}

\noindent
The various QRF frameworks differ both in interpretation
and mathematical implementation.
Comparing interpretations is typically involved due to conceptual complexity,
and has been done for other frameworks,
including the ideal case of our framework, \eg\
in~\cite{Vuyst2025,CastroRuiz2025}.
Here we focus mostly on mathematical aspects
and compare our framework with others.
For simplicity, we will use our notation throughout.

\medskip
\paragraph*{Perspective-Neutral.}
In the \emph{perspective-neutral framework}~%
\cite{Hoehn2021,Hamette2021,AhmadAli2022},
the potential existence of an external laboratory frame is typically not assumed explicitly \apriori\footnote{
    However, see~\cite{Krumm2021}, where the perspective neutral framework is interpreted operationally in terms of an external laboratory frame.
};
instead, one argues, using ideas from gauge theory,
for the existence of a perspective-neutral structure,
which encodes the perspectives of all QRFs
while not corresponding to any of those perspectives in particular.
Mathematically, the perspective-neutral structure consists of the
\emph{kinematical Hilbert} space $\H_A \otimes \H_S$
carrying a unitary, non-projective representation $\hat U_A \otimes \hat U_S$
of some symmetry group $G$,
and the assumption that the physical states are those which transform
trivially under the representation.
As such, the framework relies on the \emph{coherent} group average.
The perspective-neutral framework can deal with generally non-compact
and non-Abelian symmetry groups,
as well as potentially non-ideal QRFs $A$ with rank-one orientation POVMs,
\ie\ $\hat\gamma_A^{|L}(g) = \frac{1}{c} \ketbra{g}{g}_A$ for $c > 0$.
Furthermore, the role of the QRF jump into the perspective of $A$
is played by the family of
\emph{Schrödinger reduction maps}
\begin{equation*}
    \hat R^{\to A}(g) := \bra{g}_A \otimes \hat\id_S.
\end{equation*}
The parameter $g \in G$ indicates the orientation at which the QRF
is gauge-fixed.
The reduction maps act on physical states in the neutral perspective
to produce states from the perspective of $A$.
The reduction maps are unitary,
and unitary QRF transformations can be defined
by appropriately concatenating reduction maps and their inverses.

In contrast, our framework does not focus only on the trivial representation
sector and as such, relies on the incoherent group average, the $G$-twirl.
Furthermore, we also construct QRF transformations by concatenating
jumps and their inverses;
however, our jumps do not rely on the idea of fixing the QRF orientation
and instead derive from the principles of
\hyperref[ppl:perspective]{perspective}
and \hyperref[ppl:invariance]{invariance}.

The perspective-neutral framework can be compared
more directly to our framework in the case where it makes sense to interpret the
perspective-neutral structure in terms of a laboratory frame~\cite{Krumm2021}.
However, leaving the operational interpretation aside, one can ask how these frameworks are related to each other at the mathematical level. As it turns out, the former can be obtained from the latter essentially by restricting to the trivial representation
(hereafter denoted with ``$r=0$''):
\begin{proposition}{}{PN}
    Let $A$ be a QRF for compact $G$ with rank-one orientation POVM
    $\hat\gamma_A^{|L}(g) = \frac{1}{c}\ketbra{g}{g}_A$, $c > 0$,
    and representation $\hat U_A(g')\ket{g}_A = \sket{g'g}_A$,
    $g,g' \in G$.
    Let $S$ carry the representation $\hat U_S$.
    Assume that the subspace $\H^{|L}_{r=0}$ of $\H^{|L}$
    carrying the trivial representation
    is non-zero; we denote the orthogonal projector onto
    $\H^{|L}_{r=0}$ with~$\hat\pi_{r=0,AS}$.
    Let the map $\hat V^{\to A}_{r=0} :=
    \hat W^\dagger (\hat E \otimes \hat\id_S) \hat\pi_{r=0,AS} :
    \H^{|L}_{r=0} \to \H_{\tilde A} \otimes \H_S$
    be given by the map of proposition~\ref{prop:general_jump}
    (including embedding of $A$ into $\tilde A$),
    restricted to the trivial representation.
    Then:
    \begin{enumerate}
        \medskip
        \item\label{item:PN_projector}
            We have
            \begin{equation}
                \hat\pi_{r=0,AS} = \frac{1}{|G|} \int_G\der g\,
                \hat U_A(g) \otimes \hat U_S(g).
            \end{equation}
            Furthermore,
            \begin{equation}
                \hat V^{\to A}_{r=0} (\hat V^{\to A}_{r=0})^\dagger
                = (\pi_{r=0,\tilde A} \otimes \sfid_S)[\hat\pi]
                := (\hat\pi_{r=0,\tilde A} \otimes \hat\id_S)
                \hat\pi (\hat\pi_{r=0,\tilde A} \otimes \hat\id_S),
                \label{eq:PN_projectors}
            \end{equation}
            where $\hat\pi$ is the orthogonal projector onto the image of
            $\hat W^\dagger(\hat E \otimes \hat\id_S)$
            defined in proposition
           ~\ref{prop:general_jump} and
            \begin{equation}
                \hat\pi_{r=0,\tilde A} = \frac{1}{|G|} \int_G\der g\,
                \hat L_{\tilde A}(g)
            \end{equation}
            is the orthogonal projector onto the trivial representation
            sector of $\tilde A$.

        \medskip
        \item
            It holds that
            \begin{equation}
                \hat V^{\to A}_{r=0} = \frac{1}{\sqrt c}
                \ket{0}_{\tilde A}\! \bra{e}_A \otimes \hat\id_S,
                \qquad \ket{0}_{\tilde A} := \int_G\der g\, \ket{g}_{\tilde A}.
                \label{eq:PN_jump}
            \end{equation}
            The map is invertible on its image by
            \begin{equation}
                \bigl(\hat V^{\to A}_{r=0}\bigr)^{-1}
                = \bigl(\hat V^{\to A}_{r=0}\bigr)^\dagger
                (\pi_{r=0,\tilde A} \otimes \sfid_S)[\hat\pi]
                = \frac{1}{|G| \sqrt c} \int_G \der g\,
                \ket{g}_A\! \bra{0}_{\tilde A} \otimes \hat U_S(g).
                \label{eq:PN_inverse_jump}
            \end{equation}

        \medskip
        \item
            There exists a QRF jump $\hat V^{\to A} : \H^{|L} \to \H^{|A}$,
            including a Hilbert space $\H^{|A} \iso \H^{|L}$,
            such that theorem~\ref{thm:QRF_POV} is satisfied and
            $\hat V^{\to A}_{r=0} = \hat V^{\to A} \hat\pi_{r=0,AS}$.
            Furthermore, the subspace $\H^{|A}_{r=0}
            := \hat V^{\to A}_{r=0} \H^{|L}_{r=0} \subset \H^{|A}$
            contains a single $k$-sector and the commutant subsystem is trivial:
            $\H^{|A}_{r=0} = \H_{r=0} \otimes \H'$,
            where relational operators, restricted to $r=0$ and transformed by
            $\hat V^{\to A}_{r=0}$, act semi-simply
            (even irreducibly) on $\H'$ and $\H_{r=0}$
            is the trivial representation space.
    \end{enumerate}
\end{proposition}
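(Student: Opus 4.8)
The plan is to prove the three statements in order, relying on the intertwining relations that define the ingredients of $\hat V^{\to A}_{r=0}$: the embedding property $\hat E\hat U_A(g)=\hat L_{\tilde A}(g)\hat E$ (proposition~\ref{prop:embedding}), the regular jump relation \cref{eq:regular_jump_representation} (applied to $\tilde A$), and the projector identity $\hat\pi=\hat W^\dagger(\hat E\hat E^\dagger\otimes\hat\id_S)\hat W$ from \cref{eq:app_general_jump_projector}. For statement~\ref{item:PN_projector}, the first identity is standard: $\hat P:=\frac{1}{|G|}\int_G\der g\,\hat U_A(g)\otimes\hat U_S(g)$ is self-adjoint and idempotent by inversion- and translation-invariance of the Haar measure, and $\hat U^{|L}(h)\hat P=\hat P$ shows it projects onto the $\hat U^{|L}$-invariant subspace, which is exactly $\H^{|L}_{r=0}$, so $\hat P=\hat\pi_{r=0,AS}$. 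The key move for \cref{eq:PN_projectors} is to commute $\hat\pi_{r=0,AS}$ past $\hat E\otimes\hat\id_S$ and $\hat W^\dagger$: inserting the averaged form of $\hat\pi_{r=0,AS}$, using $(\hat E\otimes\hat\id_S)(\hat U_A(g)\otimes\hat U_S(g))=(\hat L_{\tilde A}(g)\otimes\hat U_S(g))(\hat E\otimes\hat\id_S)$ and then $\hat W^\dagger(\hat L_{\tilde A}(g)\otimes\hat U_S(g))=(\hat L_{\tilde A}(g)\otimes\hat\id_S)\hat W^\dagger$, yields the equivalent form $\hat V^{\to A}_{r=0}=(\hat\pi_{r=0,\tilde A}\otimes\hat\id_S)\hat W^\dagger(\hat E\otimes\hat\id_S)$ with $\hat\pi_{r=0,\tilde A}=\frac{1}{|G|}\int_G\der g\,\hat L_{\tilde A}(g)$. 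Then $\hat V^{\to A}_{r=0}(\hat V^{\to A}_{r=0})^\dagger=(\hat\pi_{r=0,\tilde A}\otimes\hat\id_S)\,\hat\pi\,(\hat\pi_{r=0,\tilde A}\otimes\hat\id_S)$, which is \cref{eq:PN_projectors}.

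For statement~2, I would first compute $\hat W^\dagger(\hat E\otimes\hat\id_S)=\frac{1}{\sqrt c}\int_G\der g\,\sket{g}_{\tilde A}\sbra{g}_A\otimes\hat U_S^\dagger(g)$ directly from \cref{eq:embedding,eq:app_regular_inverse_jump_core} using $\sbraket{g}{g'}_{\tilde A}=\delta(g^{-1}g')$. Applying $\hat\pi_{r=0,\tilde A}$ collapses every $\sket{g}_{\tilde A}$ onto the constant function, $\hat\pi_{r=0,\tilde A}\sket{g}_{\tilde A}=\frac{1}{|G|}\sket{0}_{\tilde A}$. The remaining $g$-integral is then collapsed by $G$-invariance of the domain: for $\ket{\Psi}\in\H^{|L}_{r=0}$ one inserts $\hat U_A(g)\otimes\hat U_S(g)$ (which fixes $\ket{\Psi}$) and uses $\sbra{g}\hat U_A(g)=\sbra{e}$ to get $(\sbra{g}_A\otimes\hat U_S^\dagger(g))\ket{\Psi}=(\sbra{e}_A\otimes\hat\id_S)\ket{\Psi}$ independently of $g$, which reproduces \cref{eq:PN_jump}. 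For the inverse, proposition~\ref{prop:general_jump} tells us $\hat W^\dagger(\hat E\otimes\hat\id_S)$ is an isometry, so its restriction to $\H^{|L}_{r=0}$ is an isometry onto $\H^{|A}_{r=0}$; hence $(\hat V^{\to A}_{r=0})^\dagger\hat V^{\to A}_{r=0}=\hat\pi_{r=0,AS}$ and the inverse on its image is the adjoint composed with the orthogonal projector \cref{eq:PN_projectors} onto that image. Substituting the closed forms \cref{eq:PN_jump} and \cref{eq:PN_projectors}, and using $\sbraket{0}{0}_{\tilde A}=|G|$, then yields the explicit \cref{eq:PN_inverse_jump}, which one confirms is a left inverse of \cref{eq:PN_jump} on $\H^{|L}_{r=0}$ via the isometry property just established.

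Statement~3 is where I expect the real work. Existence of a full jump $\hat V^{\to A}$ obeying theorem~\ref{thm:QRF_POV} is guaranteed by the theorem for compact $G$, and because the whole decomposition is block-diagonal in the charge $r$ one may complete $\hat V^{\to A}_{r=0}$ (which already realizes the $r=0$ block) to a full jump by handling the $r\neq0$ sectors separately, arranging $\hat V^{\to A}\hat\pi_{r=0,AS}=\hat V^{\to A}_{r=0}$. The structural claim rests on the Peter-Weyl theorem (proposition~\ref{prop:Peter_Weyl}) for the regular $\tilde A$: the trivial representation occurs in $L^2(G)$ with multiplicity one, namely the constant functions, so $\hat\pi_{r=0,\tilde A}$ is rank one and the $\tilde A$-factor of $\H^{|A}_{r=0}$ is precisely the one-dimensional space $\H_{r=0}=\mathrm{span}\,\sket{0}_{\tilde A}$, which forces the commutant subsystem $\overline{S{:}A}$ to be trivial there. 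Finally, restricting the relational operators of proposition~\ref{prop:general_jump} to $r=0$ gives $\hat\id_{r=0}\otimes\hat\pi'\hat f_S\hat\pi'$, with $\hat\pi'$ the projector onto $\H'$; as $\hat f_S$ ranges over $\O_S$ these generate all operators on $\H'$, so the relational algebra acts irreducibly on $\H'$, which is exactly the statement that a single $k$-sector survives. The main obstacle is precisely this last irreducibility/single-$k$ step: one must show that the ``almost jump'' projector $\hat\pi$ restricts in the trivial-representation sector to an honest, centre-free block, thereby recovering ordinary quantum mechanics on $\H'$ and matching the Schrödinger-reduction picture of the perspective-neutral framework.
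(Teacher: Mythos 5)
Your proposal is correct and follows essentially the same route as the paper's proof: the same intertwining relations ($\hat E\hat U_A(g)=\hat L_{\tilde A}(g)\hat E$ and $\hat W^\dagger(\hat L_{\tilde A}(g)\otimes\hat U_S(g))=(\hat L_{\tilde A}(g)\otimes\hat\id_S)\hat W^\dagger$) to pull the charge projector through the jump, the same closed-form computation collapsing $\sket{g}_{\tilde A}$ onto $\ket{0}_{\tilde A}$ and using $G$-invariance of the domain, the inverse as adjoint composed with the image projector, and for statement~3 the same two ingredients the paper uses, namely the factorization of $(\pi_{r=0,\tilde A}\otimes\sfid_S)[\hat\pi]$ into $\frac{1}{|G|}\ketbra{0}{0}_{\tilde A}\otimes\hat\Pi_S$ with compressions $\hat\Pi_S\O_S\hat\Pi_S$ exhausting $\O(\H')$, followed by gluing the $r=0$ block to a theorem-guaranteed jump on the $r\neq 0$ sectors. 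The only difference is cosmetic: you establish the left-projector form of $\hat V^{\to A}_{r=0}$ already in part~1 (the paper does so at the start of its part~2), and you assert the gluing construction that the paper writes out explicitly as $\hat U'^{\to A}(\hat\id_{AS}-\hat\pi_{r=0,AS})+\hat V^{\to A}_{r=0}$ with its verification.
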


\noindent
Before proving the result, let us interpret it.
Proposition~\ref{prop:PN} shows that a specific choice of QRF jump
$\hat V^{\to A}$ in our framework reproduces the perspective-neutral
reduction map when restricted to $r=0$, with two notable peculiarities:
Firstly, we obtain the reduction map with $g=e$,
\ie\ where $A$ is gauge-fixed at the identity.
Now, we have 
\begin{equation*}
    \hat R^{\to A}(g) = \hat R^{\to A}(e)
    \bigl( \hat U^\dagger_A(g) \otimes \hat\id_S \bigr)
    = \hat U_S(g) \hat R^{\to A}(e)
    \bigl( \hat U^\dagger_A(g) \otimes \hat U^\dagger_S(g) \bigr)
    = \hat U_S(g) \hat R^{\to A}(e),
\end{equation*}
as the domain of the reduction maps is invariant
under global $G$-transformations.
But acting with $\hat U_S(g)$ is just a unitary which does not change the
subsystem structure dictated by theorem~\ref{thm:QRF_POV}.
Hence, by choosing a different QRF jump,
we can obtain any of the reduction maps when restricting to $r=0$.
Secondly, the restriction $\hat V^{\to A}_{r=0}$ includes an additional
pure state factor $\ket{0}_{\tilde A} / \sqrt{c}$;
but since this state is constant, it may be removed via an isomorphism.

Furthermore, perspective-neutral QRF transformations are obtained by
concatenating reduction maps and their inverses~\cite{Hamette2021},
and thus restricting our QRF transformations to $r=0$
reproduces the QRF transformations of the perspective-neutral framework.
Finally, in the case of an Abelian symmetry group,
it has been shown that $\sigma_S \cap (-\sigma_A)$ governs
the perspective of $A$ in the perspective-neutral approach~%
\cite{AhmadAli2022,Hausmann2025};
this is our function $\bb\kappa$ evaluated at $r=0$,
as one would expect.

In summary, we can mathematically think of the perspective-neutral approach
as our approach restricted to $r=0$,
although physically the two approaches are quite distinct.
Inversely, one can describe the transition from perspective-neutral
to our approach by the inclusion of all charges $r$;
to keep unitarity of QRF transformations,
one is forced to now also include the commutant algebra
$\A^{\overline{S:A}|L}$.

\medskip
\begin{proof}[Proof of proposition~\ref{prop:PN}]
    1. Clearly,
    \begin{equation*}
        \frac{1}{|G|} \int_G\der g\, \hat U_A(g) \otimes \hat U_S(g)
        \quad\text{and}\quad
        \frac{1}{|G|} \int_G\der g\, \hat L_{\tilde A}(g)
    \end{equation*}
    map onto coherently $G$-invariant pure states on $\H_{AS}$
    and $\H_{\tilde A}$ respectively, they are orthogonal projectors,
    and act as the identity on already coherently $G$-invariant pure
    states. Hence, they respectively equal $\hat\pi_{r=0,AS}$ and $\hat\pi_{r=0,\tilde A}$.
    Let us calculate $\hat V^{\to A}_{r=0} (\hat V^{\to A}_{r=0})^\dagger$:
    \begin{multline*}
        \hat V^{\to A}_{r=0} (\hat V^{\to A}_{r=0})^\dagger 
        = \frac{1}{|G|} \int_G\der g\,   \mathsf W^\dagger \circ (\mathsf E \otimes \sfid_S)
                \bigl[ \hat U_A(g) \otimes \hat U_S(g) \bigr] \\
        = \frac{1}{|G|} \int_G\der g\,  \hat\pi \bigl( \hat L_{\tilde A}(g) \otimes \hat\id_S \bigr) \hat\pi 
        =  \hat\pi  (\hat\pi_{r=0,\tilde A} \otimes \hat\id_S) =  (\pi_{r=0,\tilde A} \otimes \sfid_S)[\hat\pi]
    \end{multline*}
    where we used statement~\ref{item:commutation_almost_jump} of
    proposition~\ref{prop:general_jump} multiple times.
    Thus, \cref{eq:PN_projectors} follows.

    2.  First, note from that statement~\ref{item:commutation_almost_jump} of
    proposition~\ref{prop:general_jump} and the fact
    that $\hat W^\dagger (\hat E \otimes \hat\id_S)$ is an isometry 
    it follows that $\hat W^\dagger (\hat E \otimes \hat\id_S) \hat\pi_{r=0,AS} 
    = (\pi_{r=0,\tilde A} \otimes \hat\id_S) \hat W^\dagger (\hat E
    \otimes \hat\id_S)$. Using this observation, we compute:
    \begin{multline*}
        \hat V^{\to A}_{r=0} = \hat\pi_{r=0,\tilde A} \hat W^\dagger
        (\hat E \otimes \hat\id_S) \hat\pi_{r=0,AS}
        = \left( \frac{1}{|G|\sqrt c}\int_G\der g\,\der g'\,
        \hat L_{\tilde A}(g) \sket{g'}_{\tilde A}\!\sbra{g'}_A
        \otimes \hat U^\dagger_S(g') \right) \hat\pi_{r=0,AS} \\
        = \ket{0}_{\tilde A} \frac{1}{|G|\sqrt c} \int_G\der g'\, \sbra{e}_A
        \bigl(\hat U_A^\dagger(g') \otimes \hat U^\dagger_S(g')\bigr)
        \hat\pi_{r=0,AS} = \ket{0}_{\tilde A} \frac{1}{|G|\sqrt c}
        \int_G\der g'\, \bigl(\sbra{e}_A \otimes \hat\id_S \bigr)
        \hat\pi_{r=0,AS}
        = \frac{1}{\sqrt c} \ket{0}_{\tilde A}\! \sbra{e}_A \otimes \hat\id_S.
    \end{multline*}
    Because $\hat V^{\to A}_{r=0}$ is an isometry on $\H^{|L}_{r=0}$
    ($\hat\pi_{r=0,AS}$ is the identity on $\H^{|L}_{r=0}$),
    it must be invertible on its image with inverse given by the adjoint.
    According to \cref{eq:PN_projectors} the image of $\hat V^{\to A}_{r=0}$
    is the image of $(\pi_{r=0,\tilde A} \otimes \sfid_S)[\hat\pi]$,
    and hence $(\hat V^{\to A}_{r=0})^{-1} = (\hat V^{\to A}_{r=0})^\dagger
    (\pi_{r=0,\tilde A} \otimes \sfid_S)[\hat\pi]$.
    Thus,
    \begin{equation*}
        (\hat V^{\to A}_{r=0})^{-1} \hat V^{\to A}_{r=0}
        = \hat\pi_{r=0,AS},
        \qquad 
        \hat V^{\to A}_{r=0} (\hat V^{\to A}_{r=0})^{-1}
        = (\pi_{r=0,\tilde A} \otimes \sfid_S)[\hat\pi].
    \end{equation*}
    Let us compute the expression for the inverse.
    For this, we note that
    \begin{equation*}
        \hat\pi_{r=0,\tilde A} \hat R_{\tilde A}(g) \hat\pi_{r=0,\tilde A}
        = \int_G\der g'\, \hat\pi_{r=0,\tilde A}
        \sketbra{g'g^{-1}}{g'}_{\tilde A} \hat\pi_{r=0,\tilde A}
        = \frac{1}{|G|^2} \ketbra{0}{0}_{\tilde A} \int_G\der g'
        = \frac{\ketbra{0}{0}_{\tilde A}}{|G|}.
    \end{equation*}
    Hence,
    \begin{equation}
        (\pi_{r=0,\tilde A} \otimes \sfid_S)[\hat\pi]
        = \frac{\ketbra{0}{0}_{\tilde A}}{|G|} \otimes
        \underbrace{\frac{1}{c} \int_G\der g\, \hat U_S(g) \braket{e}{g}_A}
        _{=:\,\hat\Pi_S},
        \label{eq:tmp_PN_projector}
    \end{equation}
    and
    \begin{multline*}
        (\hat V^{\to A}_{r=0})^{-1}
        = (\hat V^{\to A}_{r=0})^\dagger
        (\pi_{r=0,\tilde A} \otimes \sfid_S)[\hat\pi]
        = \frac{\hat\pi_{r=0,AS}}{c} \int_G\der g\,
        \sket{e}\! \braket{e}{g}_A\!  \bra{0}_{\tilde A} \otimes \hat U_S(g) \\
        = \frac{1}{|G|c} \int_G\der g\,\der g'\,
        \sket{g'}\! \braket{e}{g}_A\!
        \bra{0}_{\tilde A} \otimes \hat U_S(g'g)
        = \frac{1}{|G|c} \int_G\der g\,\der g'\,
        \sket{g'}\! \braket{g'}{g}_A\!
        \bra{0}_{\tilde A} \otimes \hat U_S(g) \\
        = \frac{1}{|G| \sqrt c} \int_G\der g\, \ket{g}_A\!
        \bra{0}_{\tilde A} \otimes \hat U_S(g).
    \end{multline*}
    In the second step we included the projector
    $\hat\pi_{r=0,AS}$ to make clear that with $(\hat V^{\to A}_{r=0})^\dagger$
    we mean the adjoint of the map $\hat V^{\to A}_{r=0}$,
    whose domain is by definition $\H^{|L}_{r=0}$ and not all of $\H^{|L}$.

    3. First note that $\hat V^{\to A}_{r=0}$ satisfies the properties
    of a QRF jump restricted to $r = 0$. \Cref{eq:tmp_PN_projector} shows that
    $(\pi_{r=0,\tilde A} \otimes \sfid_S)[\hat\pi]$
    factorizes in the tensor product $\H_{\tilde A} \otimes \H_S$.
    Hence, the image of $\hat V^{\to A}_{r=0}$ is of the form
    $\H_{r=0} \otimes \H'$ with $\H_{r=0} \subset \H_{\tilde{A}}$
    the one-dimensional trivial representation space spanned by
    $\ket{0}_{\tilde A}$ and $\H' = \hat\Pi_S \H_S \subset \H_S$.
    Any $G$-invariant operator (restricted to $r=0$ and transformed using
    $\hat V^{\to A}_{r=0}$) must thus act on $\H'$.
    Note that this is in line with the $r=0$ subspace
    in the decomposition of theorem~\ref{thm:QRF_POV}.
    Using proposition~\ref{prop:general_jump} it follows that
    a general relational operator, restricted to $r=0$,
    is transformed by $\hat V^{\to A}_{r=0}$ to
    \begin{equation*}
        (\pi_{r=0,\tilde A} \otimes \sfid_S)[\hat\pi]
        (\hat\id_{\tilde A} \otimes \hat f_S)
        (\pi_{r=0,\tilde A} \otimes \sfid_S)[\hat\pi],
    \end{equation*}
    which must be of the form $\ketbra{0}{0}_{\tilde A} \otimes \hat f'$,
    where $\hat f' = \Pi_S[\hat f_S]$ is obtained from $\hat f_S$
    by projection onto $\H'$ because
    $(\pi_{r=0,\tilde A} \otimes \sfid_S)[\hat\pi]$ factorizes.
    Since $\hat f_S$ is arbitrary, so is $\hat f'$.
    In conclusion, relational operators, restricted to $r=0$
    and transformed using $\hat V^{\to A}_{r=0}$,
    act as the algebra of all operators on $\H'$
    and their commutant is trivial. This concludes the proof of the claim
    that $\hat V^{\to A}_{r=0}$ satisfies the properties
    of a QRF jump restricted to $r = 0$.
    
    Let $\hat{U}'^{\to A}$ be a QRF jump whose existence is guaranteed by
    theorem~\ref{thm:QRF_POV}. Note that for this isomorphism it holds
    that $\hat{U}'^{\to A} (\hat \id_{AS} - \hat\pi_{r=0,AS}) =
    (\hat \id - \hat\pi_{r=0}) \hat{U}'^{\to A}$, as 
    \begin{multline*}
        \hat{U}'^{\to A} \hat\pi_{r=0,AS} 
        = \frac{1}{|G|} \int_G\der g\, \hat{U}'^{\to A} (\hat U_A(g) \otimes \hat U_S(g))
        = \frac{1}{|G|} \int_G\der g\, \Big(\bigoplus_{k,r} \hat U_r(g) \otimes
        \hat\id_{r,k} \otimes \hat\id_k\Big)  \hat{U}'^{\to A}
        = \hat\pi_{r=0} \hat{U}'^{\to A}.
    \end{multline*}
    Therefore, the map 
    \begin{equation*}
       \hat U^{\to A} \coloneq \hat{U}'^{\to A} (\hat \id_{AS} - \hat\pi_{r=0,AS}) + \hat V^{\to A}_{r=0}
    \end{equation*}
    is an isomorphism such that $\hat U^{\to A}: \mathcal{H}^{|L} \to \bigoplus_{k,r
    \neq 0} (\H_r \otimes \H_{r,k} \otimes \H_k) \oplus
    (\H_{r=0} \otimes \H')$, and it satisfies the properties of
    theorem~\ref{thm:QRF_POV} by construction. For example, consider
    the algebra $\A^{S:A|L}$, then as required by theorem~\ref{thm:QRF_POV}:
    \begin{multline*}
        \mathsf U^{\to A}[\A^{S:A|L}] = 
        (\hat \id- \hat\pi_{r=0}) \mathsf U'^{\to A}[\A^{S:A|L}] (\hat \id - \hat\pi_{r=0}) 
        + \mathsf V^{\to A}_{r=0}[\A^{S:A|L}] \\
        + (\hat \id - \hat\pi_{r=0}) \hat{U}'^{\to A} \A^{S:A|L} (\hat V^{\to A}_{r=0})^{\dagger}
        + \hat V^{\to A}_{r=0}\A^{S:A|L} ( \hat{U}'^{\to A})^{\dagger}(\hat \id - \hat\pi_{r=0}) \\
        =  (\hat \id - \hat\pi_{r=0}) \mathsf U'^{\to A}[\A^{S:A|L}] (\hat \id - \hat\pi_{r=0})
        + \mathsf V^{\to A}_{r=0}[\A^{S:A|L}] 
       = \Big(\bigoplus_{k,r \neq 0} \hat\id_r \otimes \hat\id_{r,k} \otimes \O_k\Big) \oplus \Big(\hat\id_{r = 0} \otimes \Pi_S[\mathcal{O}_{S}]\Big).
    \end{multline*}    
\end{proof}

\medskip
\paragraph*{Operational Approach.}
The approach of~\cite{Loveridge2018,Glowacki2023a,Glowacki2023,Carette2025},
often termed the \emph{operational approach to QRFs},\footnote{
    This is not to say that other approaches cannot be operational too;
    for example, we also claim our approach to be operational.
}
is a mathematically rigorous\footnote{
    The following comparison between the operational and our approach
    does not aim for the level of rigour of the operational approach
    and hence glosses over many technicalities for ease of understanding.
} approach, which focuses
on traces of the form given in \cref{eq:fuzziness}
such as expectation values or outcome probabilities.
Like our approach, the operational approach relies mathematically
on a laboratory perspective, which may or may not be physical~%
\cite{Loveridge2018}, and can treat quite general QRFs with orientation POVMs.
Unlike our approach, QRF transformations in the operational approach
however do not aim at re-factorizing the Hilbert space,
and instead rely fully on the relationalization maps and their duals:
the dual of the relationalization map, which we give explicitly below,
defines the jump into the perspective of a quantum reference frame.

More precisely, let $A$ and $B$ be QRFs and $S$ a system of interest,
all equipped with a unitary representation of a symmetry group $G$.
Following~\cite{Carette2025} and focusing only on QRF transformations out of regular QRFs,\footnote{
    Strictly speaking, the more technical notion of a
    \emph{localizable principal frame} is used~\cite{Carette2025}.
} 
we assume that $A$ is regular.
The duals of the relationalization
maps $\mathsf R_A$ and $\mathsf R_B$ are
\begin{equation*}
    \mathsf R^\dagger_A[\;\cdot\;]
    := \int_G\der g\, \bigl( \bra{g}_A \otimes \hat U^\dagger_{BS}(g)
    [\;\cdot\;] \ket{g}_A \otimes \hat U_{BS}(g) \bigr), \qquad
    \mathsf R^\dagger_B[\;\cdot\;]
    := \int_G\der g\, \tr_B\bigl( \hat\gamma_B(g)
    \otimes \hat U^\dagger_{AS}(g) [\;\cdot\;] \hat U_{AS}(g) \bigr).
\end{equation*}
We implicitly introduced them in \cref{eq:fuzziness}.
Note the specific form of the partial trace in $\mathsf R^\dagger_A$
due to $A$ being regular.

These maps produce what we called \emph{effective states}
on $BS$ and $AS$ respectively.
In the operational approach, effective states relative to $A$
are called \emph{$A$-relational states}
and they are interpreted as the states from the perspective of $A$.
Similarly, one defines $B$-relational states.

The operational approach then considers an identification
of $A$-relational states resulting in the
\emph{$B$-framed $A$-relational states}:
two $A$-relational states are identified as the same $B$-framed
$A$-relational state if their expectation values agree on all
operators of the form $\int_G\der g\, \hat\gamma_B(g) \otimes \hat f_S(g)$,
with $\hat f_S(g)$ arbitrary.
These operators are called \emph{$B$-framed operators};
the Hermitian ones among them are, in the operational approach,
interpreted as the relevant observables in experiments involving
the QRF $B$~\cite{Carette2025}.
Similarly, one can define $A$-framed $B$-relational states
based on $A$-framed operators.

The QRF transformation from $A$ to $B$ in the operational approach
acts on $B$-framed $A$-relational states 
and produces $A$-framed $B$-relational states.
It can be defined as
\begin{equation*}
    \mathsf S^{A\to B} [\;\cdot\;]
    := \mathsf R^\dagger_B \bigl[ \ketbra{e}{e}_A \otimes [\;\cdot\;] \bigr]
    = \frac{1}{|G|} \mathsf R^\dagger_B \circ \mathsf R_A[\;\cdot\;].
\end{equation*}
The right-hand side is obtained by observing, at least formally, that
$\mathsf R^\dagger_B \circ \mathsf G_{ABS} = \mathsf R^\dagger_B$ and
$\mathsf R_A[\;\cdot\;] = |G|\, \mathsf G_{ABS}\bigl[ \ketbra{e}{e}_A
\otimes [\;\cdot\;] \bigr]$. It can be shown that this transformation
is invertible if $B$ is regular too, but not in general. In the former
case, the transformation also agrees
with the perspective-neutral transformation between regular QRFs $A$ and $B$,
provided that one does the same identifications of states
as in the operational approach.\footnote{
    See~\cite{Carette2025} for the details of this correspondence.
}

Now as is evident from the discussion around \cref{eq:fuzziness}
in the main text, the relationalization maps and their duals are generally
non-invertible even if the QRFs are regular.
Thus, the reversible QRF transformations obtained in the case of regular
$B$ are made possible in the operational approach by \emph{restricting}
the degrees of freedom, specifically to $B$-framed $A$-relational states.
In contrast, our approach \emph{extends} the considered degrees of freedom
to include specifically the commutant in order to obtain reversible QRF transformations;
this allows us to have reversible transformations between general QRFs.

\medskip
\paragraph*{Perspectival Approaches.}
The approach of~\cite{Giacomini2019} for the translation and Galilei
boost groups and its generalization~\cite{Hamette2020} to more general
groups are \emph{perspectival approaches}:
they do not refer to any external structure like a laboratory
or use a perspective neutral structure.
Rather, they postulate the perspective of a QRF as well
as the QRF transformation to other QRFs directly.
While~\cite{Giacomini2019} deals only with
regular QRFs, \cite{Hamette2020} deals with both regular QRFs and with
a special class of non-ideal QRFs, which can only resolve one of the
factors if $G$ is a (semi-direct) product of groups.

It has been shown~\cite{Vanrietvelde2020,Hamette2021} that the
QRF transformations of~\cite{Giacomini2019} and~\cite{Hamette2020}
are mathematically equivalent to those of the perspective-neutral framework
in the case of regular QRFs.
Therefore, they are also obtained mathematically from our framework
in the regular case by restricting to the trivial representation $r = 0$.
In the specific non-ideal cases studied in~\cite{Hamette2020},
the QRF transformations are not invertible,
and hence cannot be straightforwardly compared to our approach.

Regardless of whether the QRFs are regular, perspectival approaches
differ from our approach in that they do not (at least not explicitly)
require compatibility with an external perspective, be this a physical
or purely mathematical perspective.
However, they can be shown to be compatible with our approach in specific cases,
due to their connection with the perspective-neutral framework.

\bibliography{bibliography.bib}

\end{document}